\newtheorem{definition}{Definition}
\newtheorem{theorem}{Theorem}
\newtheorem{lemma}{Lemma}
\newtheorem{proposition}{Proposition}
\newtheorem*{theorem*}{Theorem}
\title{Information Theoretic Learning for Diffusion Models with Warm Start}
\author{
\begin{tabular}{ccc}
\textbf{Yirong Shen} & \textbf{Lu Gan} & \textbf{Cong Ling} \\
\normalfont Imperial College London & \normalfont Brunel University of London & \normalfont Imperial College London \\
\texttt{ys6922@ic.ac.uk} & \texttt{lu.gan@brunel.ac.uk} & \texttt{c.ling@imperial.ac.uk}
\end{tabular}
}
\begin{document}

\maketitle

\begin{abstract}
  Generative models that maximize model likelihood have gained traction in many practical settings. Among them, perturbation-based approaches underpin many state-of-the-art likelihood estimation models, yet they often face slow convergence and limited theoretical understanding. In this paper, we derive a tighter likelihood bound for noise-driven models to improve both the accuracy and efficiency of maximum likelihood learning. Our key insight extends the classical Kullback–Leibler (KL) divergence–Fisher information relationship to arbitrary noise perturbations, going beyond the Gaussian assumption and enabling structured noise distributions. This formulation allows flexible use of randomized noise distributions that naturally account for sensor artifacts, quantization effects, and data distribution smoothing, while remaining compatible with standard diffusion training. Treating the diffusion process as a Gaussian channel, we further express the mismatched entropy between data and model, showing that the proposed objective upper-bounds the negative log-likelihood (NLL). In experiments, our models achieve competitive NLL on CIFAR-10 and state-of-the-art results on ImageNet across multiple resolutions, all without data augmentation, and the framework extends naturally to discrete data.

\end{abstract}

\section{Introduction}

Likelihood serves as a fundamental metric for evaluating density estimation and generative models. A tight negative log-likelihood (NLL) bound not only indicates a model's capacity to capture the fine-grained structure of the data distribution but also facilitates a range of downstream applications, including data compression \cite{helminger2021lossy,10.5555/3454287.3454635,ho2021anfic,hoogeboom2019integer,theis2022lossy,townsend2019practical,yang2024lossy}, anomaly detection \cite{chen2018autoencoder}, out-of-distribution detection \cite{10.5555/3495724.3497461}, semi-supervised learning \cite{10.5555/3295222.3295397}, classifier \cite{chen2024diffusion,yadin2024classification}, image generation \cite{zheng2025direct}, transfer learning \cite{ouyang2024transfer}, density ratio estimation \cite{chen2025dequantified,choi2022density}, language models \cite{gulrajani2023likelihood} and adversarial purification \cite{song2018pixeldefend}.

Rapid advancements in deep generative modelling~\cite{Theis2015ANO} have led to various families of models achieving strong likelihood estimation performance, including energy-based models \cite{gao2020flow}, normalizing flows \cite{song2021maximum,zheng2023improved}, variational autoencoders \cite{kingma2021variational,Kingma2013AutoEncodingVB,nielsen2024diffenc}, diffusion models \cite{ho2020denoising,nichol2021improved,sahoo2024diffusion}, cascaded models \cite{li2024likelihood}, and autoregressive models \cite{child2019generating,hoogeboom2022autoregressive}.
A common underlying structure among many state-of-the-art models~\cite{kingma2021variational,li2024likelihood,sahoo2024diffusion,song2021maximum,zheng2023improved} is the transformation of data into noise via distinct functional mappings. These mappings, despite their differing mathematical forms, can be viewed as variants of diffusion models operating under the same noisy process, a Gaussian channel \cite{cover1999elements,shen2005fast,8007014,wibisono2024optimal}.

Gaussian diffusion models can be broadly categorized into variance-preserving (VP) \cite{ho2020denoising,10.5555/3045118.3045358} and variance-exploding (VE) \cite{kong2023informationtheoretic,song2020score} processes by the variance of injected noise. These two paradigms differ in both the construction of the \textit{forward} process and the formulation of likelihood. VP models are typically treated as Bayesian latent-variable models and trained via variational inference using the evidence lower bound (ELBO) \cite{ho2020denoising,kingma2021variational,nielsen2024diffenc,sahoo2024diffusion,zheng2023improved}, while VE models are interpreted as information-theoretic (IT) channels \cite{1412024,verdu2010mismatched,8007014} that allow for direct likelihood estimation via estimation-theoretic tools \cite{guoscore,kong2023informationtheoretic,song2021maximum}. However, existing likelihood estimation methods for VE models have not matched the performance of VP-based approaches. This raises the natural hypothesis that likelihood performance may be highly sensitive to noise variance. Furthermore, while previous IT bounds \cite{kong2023informationtheoretic,lastras-montaño2018information,song2021maximum,wan2025pruning} may be slightly looser than ELBO, they often enjoy faster convergence and greater interpretability \cite{deconvergence,de2022riemannian,lee2022convergence,song2021maximum,wibisono2024optimal}, leaving open the question of whether IT-based bounds can also be improved as competitive likelihood estimators with enjoying the faster speed and robustness.

Addressing this question requires extending existing \textit{theoretical} frameworks. The extant Shannon–Fisher connections for diffusion models have largely assumed idealized isotropic Gaussian corruption~\cite{10.5555/1795114.1795156,song2021maximum}, yet real-world data rarely align with such simplified assumptions. Imaging data commonly feature Poisson-Gaussian sensor noise \cite{4623175}; dequantization \cite{pmlr-v97-ho19a,hoogeboom2021learning,Theis2015ANO,zheng2023improved} and data smoothing \cite{meng2021improved} often modeled by uniform or symmetric noise (\textit{e.g.} a Laplacian kernel) addition to improve tail coverage, sharp transitions and robustness. Additionally, recent generative models deliberately introduce Poisson~\cite{xu2022poisson}, heavy-tailed $\alpha$-stable perturbations~\cite{yoon2023scorebased} or structured noise with the data distribution \cite{sahoo2024diffusion,singh2025squeezed}, emphasizing the practical needs for more generalized frameworks.

In specific, when a continuous‑density model is fitted directly to discrete data, the likelihood evaluation becomes singular and severely degrades performance. The conventional remedies, uniform or variational dequantization \cite{pmlr-v97-ho19a,song2021maximum,Theis2015ANO}, inject auxiliary noise but suffer two drawbacks: they require an additional training phase, which is hard to train to the optimal, and, in general, introduce a pronounced training-evaluation gap that inflates the NLL performance \cite{zheng2023improved} via the mismatched noise.

In this work, to eliminate both sources of discrepancy and instabilities in maximum likelihood learning with diffusion models, we propose \emph{variance-aware likelihood bounds} via \emph{arbitrary isotropic warm-up noise} perturbation. Our main contributions and findings are summarized as followed:

\begin{itemize}
    \item We first prove Theorems~\ref{theorem 1} and Proposition~\ref{theorem 2}, showing that for any isotropic noise, the Fisher information measures are asymptotically equivalent to their Shannon counterparts, thereby establishing a maximum likelihood learning framework well beyond the Gaussian setting.

    \item Building on this insight, Theorem~\ref{theorem ind} and Proposition~\ref{proposition} provide tightened analytic bounds on the likelihood of diffusion models with only minor architectural changes, specifically, a logarithm signal-to-noise ratio based parameterization and an additional low‑variance noise regime, which together eliminate the train–test gap and stabilize optimization near \(t = 0\).
    \item Empirically, ablation studies confirm the effectiveness of our bounds. Using an efficient importance-sampling scheme, our method achieves \textbf{2.50 bits/dim} on CIFAR-10 and new state-of-the-art results of \textbf{3.01}, \textbf{2.91}, and \textbf{2.59 bits/dim} on ImageNet-32, -64, and -128, respectively, while requiring only \emph{0.3M training iterations} without any data augmentation. 
\end{itemize}

\section{Background}
\subsection{Incremental Gaussian Channel and Maximum Likelihood Estimation}

Let the \textit{source} dataset with $N$ datapoints be denoted by $\mathbf{X}=\{\mathbf{x}_n\}_{n = 1}^{N}$\footnote{In this paper, random objects are denoted by uppercase letters, and their realizations by lowercase letters. The expectation \(\mathbb{E}(\cdot)\) is taken over the joint distribution of the random variables inside the brackets.}. Assume every datapoint $\mathbf{x}$ is an i.i.d. sample drawn from an unknown distribution $p(\mathbf{x})$ supported on the \textit{source space} $\mathcal{X} \subset \mathbb{R}^D$. Diffusion models \cite{ho2020denoising,kingma2021variational} naturally induce what is known as an \textit{incremental signal-to-noise ratio (SNR) channel} \cite{1412024,zou2025convexity}, which generates a sequence of progressively noisier \( \mathbf{X} \) towards a pure noise:
\begin{equation}\label{channel eq}
    \mathbf{Y}_t = \alpha_t \mathbf{X} + \sigma_t \mathbf{N}, 
\end{equation}
where \(\alpha_t, \sigma^2_t \in \mathbb{R}^+\) are smooth, non-negative, scalar-valued functions with finite derivatives with respect to time \(t\) over the fixed time horizon \(t \in [0,1]\). The noise term \(\mathbf{N}\) consists of independent standard Gaussian entries. Intuitively, the ratio \(\alpha^2_t / \sigma^2_t\) can be interpreted as the signal-to-noise ratio (SNR) at time \(t\), denoted as \(\text{SNR}(t):=\alpha^2_t / \sigma^2_t\). By enforcing \(\text{SNR}(t)\) to be strictly monotonically decreasing over the time interval \(t \in [0,1]\), the output \(\mathbf{Y}_t\) asymptotically approaches a well-defined, analytically tractable stationary distribution \(\pi(\mathbf{x})\) as \(t \to 1\).

To recover and estimate the underlying data distribution $p(\mathbf{x})$  from noisy observations $\mathbf{Y}_t$, one must solve a density estimation problem under this forward noisy channel. From an information-theoretic perspective, the mismatched Gaussian channel \cite{verdu2010mismatched} models the scenario where the true input distribution $p(\mathbf{x})$ is unknown, and one instead uses a hypothesis distribution $q(\mathbf{x})$ for estimation. While classical methods for Gaussian channels remain applicable, their analytic solutions are  typically intractable because it requires sampling from the posterior distribution of the noisy channel. A more tractable alternative involves leveraging the connection between relative entropy and Fisher divergence. Specifically, \cite{10.5555/1795114.1795156,song2021maximum} demonstrated that the weighted score matching (or equivalently, Fisher divergence) objectives could approximate maximum likelihood training of diffusion models\footnote{In likelihood-based generative modeling, this formulation is equivalent to maximizing the expected log-likelihood. In contrast, methods that prioritize sample quality typically optimize the 2-Wasserstein distance.}.

\begin{figure}[t]
    \centering
    \hspace{25pt}
    \begin{minipage}[t]{0.5\textwidth}
        \includegraphics[width=\linewidth]{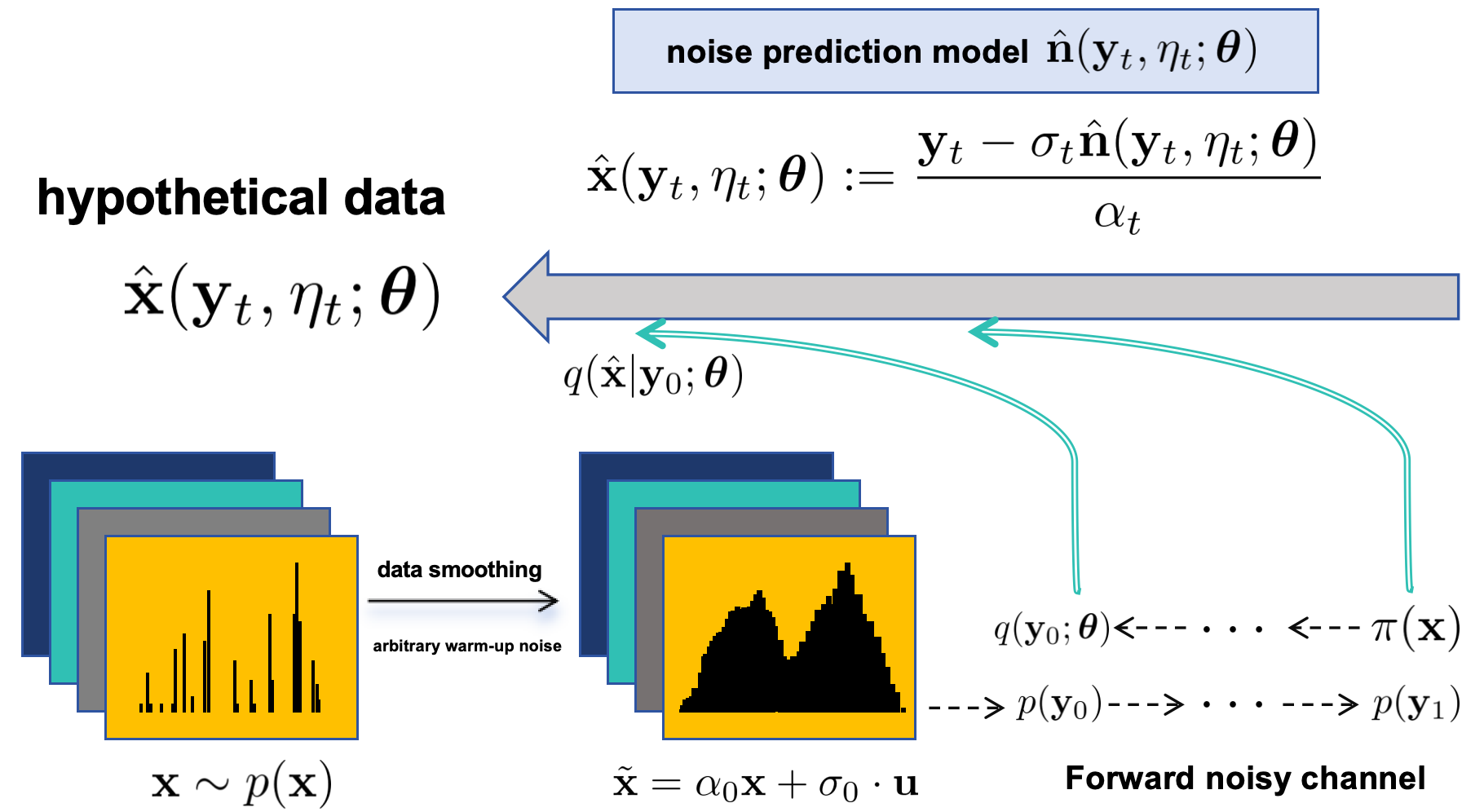}
    \end{minipage}
    \hspace{25pt}
    \begin{minipage}[t]{0.35\textwidth}
        \includegraphics[width=\columnwidth,height=0.78\columnwidth,keepaspectratio]{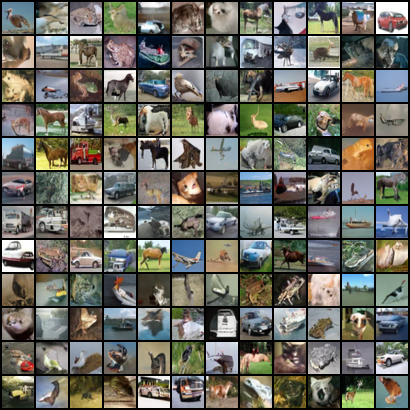}
    \end{minipage}
    \caption{Toy example illustrating our method (left) and samples generated on CIFAR-10 (right). We apply an identical \textit{warm-up} channel to both the data $p(\mathbf{x})$ and the model distribution $q(\mathbf{x};\boldsymbol{\theta})$: injecting arbitrary noise $\boldsymbol{\Psi}$ produces smoothed data $\tilde{\mathbf{x}}$ and a correspondingly perturbed model $\tilde{q}_\theta$. This results in \textit{two} variance-regime mismatched channels, a low-variance arbitrary-perturbation channel and a high-variance Gaussian noise regime, that share identical channel dynamics but differ in their priors, under which training aligns the Gaussian-perturbed distributions and learns the denoising map.} 
    \label{fig:pipeline}
\end{figure}

\subsection{Likelihood of Diffusion Models}\label{section:2.2}

Average log‑likelihood is widely recognized as the default metric for evaluating generative models. Previous work has largely prioritized perceptual quality, emphasizing coarse scale patterns and global consistency of generated images, with common metrics such as the Fréchet Inception Distance (FID) and the Inception Score (IS). In contrast, we optimise for likelihood of the model, a criterion that is inherently sensitive to fine‑scale details and the exact values of individual pixels.

 To evaluate the likelihood in diffusion models, we define a model distribution \( q(\hat{\mathbf{x}};\boldsymbol{\theta}) \), typically parameterized by a neural network with parameters \( \boldsymbol{\theta} \in \boldsymbol{\Theta} \), which aims to approximate the true data distribution \( p(\mathbf{x}) \). Given a sample \( \mathbf{x} \sim p(\mathbf{x}) \) and noise \( \mathbf{n} \sim \mathcal{N}(0, \mathbf{I}) \), a noisy observation \( \mathbf{y}_t \) is generated via the forward channel described in Equation~\eqref{channel eq}. Under the mismatched channel framework, the marginal distribution over \( \mathbf{y}_t \) induced by the model sample \( \hat{\mathbf{x}}\sim q(\hat{\mathbf{x}};\boldsymbol{\theta}) \) is given by:
\begin{equation}
    q(\mathbf{y}_t;\boldsymbol{\theta}) = \int_{\mathbb{R}^D} q(\hat{\mathbf{x}};\boldsymbol{\theta}) \, p(\mathbf{y}_t | \mathbf{x}) \, d\hat{\mathbf{x}}.
\end{equation}

In VP diffusion modeling, the stationary distribution at \( t = 1 \) is defined to be the stationary distribution \( \pi(\mathbf{x}) = \mathcal{N}(0, \mathbf{I}) \), which serves as the starting point for the sampling process. Since our primary focus is density estimation and probabilistic modeling, we defer implementation and details of sampling process to the Appendix.~\ref{sampling process} and optimization algorithms to future work.

Following \cite{song2021denoising}, we parameterize the model distribution using a noise prediction network. Specifically, the model \( \hat{\mathbf{n}}(\mathbf{y}_t, \eta_t;{\boldsymbol{\theta}}) \) is trained to predict the Gaussian noise \( \mathbf{n} \) that was added during the forward process, where \( \eta_t := -\log \mathrm{SNR}(t) \) defines the noise schedule in log-SNR space. A predicted hypothetical data $\hat{\mathbf{x}}$ is then obtained as:
\begin{equation}
    \hat{\mathbf{x}}(\mathbf{y}_t, \eta_t;{\boldsymbol{\theta}}) := \frac{\mathbf{y}_t - \sigma_t \hat{\mathbf{n}}(\mathbf{y}_t, \eta_t;{\boldsymbol{\theta}})}{\alpha_t}.
\end{equation}

Theoretically, the training objective of likelihood-based diffusion models is to minimize the KL divergence (see Def.~\ref{KL definition}) between the true data distribution $p(\mathbf{x})$ and the continuous model distribution $q(\hat{\mathbf{x}};\boldsymbol{\theta})$, i.e., $D_{\text{KL}}(p(\mathbf{x})\|q(\hat{\mathbf{x}};\boldsymbol{\theta}))$. However, as the noise variance $\sigma_t^2\rightarrow0$, the SNR diverges, leading to numerical instability during both training and sampling \cite{kim2022soft}. To address this, practical diffusion models typically start the forward process at a small positive time $t=\epsilon>0$ instead of $t=0$ for improved stability. Yet, this small time offset introduces additional perturbation, and the corresponding training objective becomes $D_{\text{KL}}(p(\mathbf{y}_{\epsilon})\|q(\mathbf{y}_{\epsilon};\boldsymbol{\theta}))$, which differs from the evaluation objective $D_{\text{KL}}(p(\mathbf{x})\|q(\hat{\mathbf{x}};\boldsymbol{\theta}))$. This discrepancy causes a mismatch between the training \textit{expected log-likelihood} ($\mathbb{E}_{p(\mathbf{y}_{\epsilon})}[\log q(\mathbf{y}_{\epsilon};\boldsymbol{\theta})]$) and testing \textit{expected log-likelihood} ($\mathbb{E}_{p(\mathbf{x})}[\log q(\hat{\mathbf{x}};\boldsymbol{\theta})]$).

\subsection{Dequantization for Density Estimation}\label{dequantization}
When modeling real-world data, care must be taken to ensure that the reported likelihood values are meaningful \cite{Theis2015ANO}. Since such data is typically discrete, using continuous density models directly can lead to arbitrarily high likelihoods due to singularities. To mitigate this issue, it is now standard practice to add real-valued noise to integer-valued inputs, a process known as \textit{dequantization} \cite{dinh2017density,song2021maximum,10.5555/2999792.2999855}. For example, in the case of 8-bit image data, the input values in $\{0,1,\dots,255\}$ are typically perturbed by uniform noise, yielding $\mathbf{v} = \mathbf{x} + \mathbf{u}$ where $\mathbf{u} \sim \mathcal{U}[0,1)^D$. With this transformation, training a continuous density model on the uniformly dequantized data $\mathbf{v}$ can be interpreted as maximizing a lower bound on the log-likelihood of a discrete model defined over the original quantized inputs \cite{pmlr-v97-ho19a,Theis2015ANO}. However, this introduces a training–test gap in diffusion models: during training, the model $q(\mathbf{v};\boldsymbol{\theta})$ is fitted to $p(\mathbf{y}_0)$, which corresponds to a Gaussian distribution centered at each discrete data point; during evaluation, however, $q$ is tested on uniformly dequantized data. Although this improves numerical stability in practice \cite{kim2022soft,song2021maximum}, the mismatch between training and evaluation degrades likelihood performance \cite{zheng2023improved}. Moreover, while variational methods \cite{pmlr-v97-ho19a} allow other forms of noise injection, they introduce an additional optimization stage that is computationally expensive and often difficult to train to optimality.

\section{Variance-Aware Likelihood Estimation of Diffusion models}

In this section, we present an information-theoretical framework for variance-aware likelihood estimation in diffusion models. While retaining the standard score matching objective~\cite{hyvarinen2005estimation,vincent2011connection}, our method introduces a tighter, pointwise upper bound on the negative log-likelihood. We also incorporate a modified forward process and an importance sampling scheme to reduce the variance of the Monte Carlo estimator. Notations and definitions are deferred to the Appendices.~\ref{notation} and~\ref{definition}.

\begin{figure*}[t]

\centering

% ---------- Left: Training ----------
\begin{minipage}[t]{0.47\textwidth}
\begin{algorithm}[H]
\scriptsize
\caption{Training}
\label{alg:training}

\textbf{Pre-processing:} \quad Wart Start\\
\For{each $\mathbf{x}$ in $\mathcal{X}$}{
    $\mathbf{u} \sim \Psi$;  \hspace{6em} \tcp{$\Psi$: arbitrary noise}
    $\tilde{\mathbf{x}} \leftarrow \alpha_0\cdot\mathbf{x} + \sigma(0) \cdot \mathbf{u}$;  \hspace{4em}
    \tcp{store $\tilde{\mathbf{x}}$}
}
\BlankLine
\vspace{3.5pt}
\textbf{Training:}\\
\Repeat{converged}{
    $\tilde{\mathbf{x}} \sim \mathcal{X}$, $\mathbf{n} \sim \mathcal{N}(0,\mathbf{I})$ \;
    $t \sim \mathcal{U}(0,1)$, 
$(\alpha_{t}^2, \sigma_{t}^2) \leftarrow t$\; 
    $\mathbf{y}_t \leftarrow \alpha_t \cdot \tilde{\mathbf{x}} + \sigma_t\cdot \mathbf{n}$\;
    % $\hat{\mathbf{n}}(\mathbf{y}_t,\eta_t;\boldsymbol{\theta}) \leftarrow f(\mathbf{y}_t, \eta_t;\boldsymbol{\theta})$\; \\
    Take gradient descent step on \\
    \hspace*{6em}$\nabla_{\boldsymbol{\theta}}\|\mathbf{n} - \hat{\mathbf{n}}(\mathbf{y}_t,t;\boldsymbol{\theta})\|^2$\;
}
\end{algorithm}
\end{minipage}
\hfill
% ---------- Right: Likelihood evaluation ----------
\begin{minipage}[t]{0.47\textwidth}
\begin{algorithm}[H]
\scriptsize
\caption{Likelihood Evaluation}
\label{alg:likelihood}
\KwIn{Trained model $f_{\boldsymbol{\theta}}$, test data $\mathbf{x}$}
\KwOut{Estimated loss $\mathcal{L}(\boldsymbol{\theta})$, NLL $\ell(\mathbf{x};\boldsymbol{\theta})$}

\For{each $\mathbf{x}$ in $\mathcal{X}$}{
    $\mathbf{u} \sim \Psi$; \hspace{6em} \tcp{$\Psi$: arbitrary noise}
    $\tilde{\mathbf{x}} \leftarrow \alpha_0\cdot\mathbf{x} + \sigma(0) \cdot \mathbf{u}$; \hspace{4em} 
    \tcp{store $\tilde{\mathbf{x}}$}
}
\BlankLine

\textbf{Evaluation:}\\
\hspace{2em}$\eta \sim \rho(\eta)$, $(\alpha_\eta^2, \sigma^2_\eta) \leftarrow \eta$\;
\hspace{2em}$\mathbf{n} \sim \mathcal{N}(0,\mathbf{I})$, \quad
$\hat{\mathbf{n}} \leftarrow f(\mathbf{y}_t, \eta_t; \boldsymbol{\theta})$\;
\hspace{2em}$\mathcal{L}(\sigma_\eta^2;\boldsymbol{\theta})\leftarrow 0.5\cdot Z\,\|\mathbf{n} - \hat{\mathbf{n}}(\mathbf{y}_t,\eta_t;\boldsymbol{\theta})\|^2$ \;
\quad \quad \quad \quad \quad \hspace{2em}\tcp{Z is a normalizing constant}

$\ell(\mathbf{x};\boldsymbol{\theta}) \leftarrow \mathcal{H}(p(\mathbf{y}_1|\mathbf{x}),\pi) + \mathcal{L}(\sigma_t^2;\boldsymbol{\theta})$\\
\Return $\ell(\mathbf{x};\boldsymbol{\theta})$\;
\end{algorithm}
\end{minipage}
\label{fig:isit_algorithms}
\end{figure*}

In specific, we analyze the impact of noise variance (schedule) on likelihood estimation through the lens of Fisher divergence via thermodynamic integration \cite{gelman1998simulating,ogata1989monte} along the entire noise variance-regime space, which includes a low-variance arbitrary noise regime ($0\leq\sigma_t^2<\sigma_0^2$) and a high-variance Gasussian channel ($\sigma_0^2\leq\sigma_t^2\leq\sigma_1^2$), providing a formal connection between score matching objectives to KL divergence under arbitrary isotropic noise perturbations. This analysis leads to an exact expression of the mismatched entropy \( \mathcal{H}(p(\mathbf{x}), q(\mathbf{x}; \boldsymbol{\theta})) \), which improves optimization process, numerical stability and remains compatible with probabilistic modeling under non-Gaussian noise setting. Our method toy example pipeline is illustrated in Fig.~\ref{fig:pipeline} and Algorithms~\ref{alg:training} and~\ref{alg:likelihood}.

\subsection{Relationship between Score Matching and
KL Divergence}\label{section 3.1}

Beyond the pathology discussed in Section~\ref{section:2.2}, real-world scenarios rarely align with the idealized Gaussian assumption. Sensor imperfections introduce structured noise such as Poisson, uniform quantization errors, or impulsive salt-and-pepper patterns. Hence, extending KL–Fisher relations beyond Gaussian perturbations is not merely a theoretical extension, but addresses a pressing practical need in robust generative modeling. Motivated by these practical considerations, we investigate how training objectives that incorporate noise-perturbed distributions relate to the classical maximum likelihood principle. Specifically, we highlight the role of Fisher divergence (see Def.~\ref{RFI definition}) in characterizing the first-order sensitivity of KL divergence under small additive noise. By doing so, we generalize prior results~\cite{lu2022maximum,10.5555/1795114.1795156,song2021maximum} from Gaussian to arbitrary isotropic noise distributions. Importantly, our derived relation ensures that score-matching losses remain consistent with the first-order KL term, thus preserving the maximum-likelihood interpretation for a broader class of models, including Poisson-flow~\cite{xu2022poisson} and Lévy-based diffusion frameworks~\cite{yoon2023scorebased}.

\begin{theorem}[Score Matching as the Small-Noise Limit of KL Divergence]\label{theorem 1}
Let \( \mathbf{X} \sim p(\mathbf{x}) \) be an arbitrary distributed random vector on \( \mathbb{R}^D \), and let \( q(\hat{\mathbf{x}}; \boldsymbol{\theta}) \) be a parametric model with \( \hat{\mathbf{X}} \sim q(\hat{\mathbf{x}}; \boldsymbol{\theta}) \). Define the perturbed observation
\[
\tilde{\mathbf{X}} := \alpha_t \mathbf{X} + \sigma_t \boldsymbol{\Psi},
\]
where \( \boldsymbol{\Psi} \) is a random vector independent of $\mathbf{X}$ , satisfying \( \mathbb{E}[\boldsymbol{\Psi}] = 0 \) and \( \mathrm{Cov}(\boldsymbol{\Psi}) = \mathbf{I} \). Let \( p_{\sigma_t^2} \) and \( q_{\sigma_t^2} \) denote the densities of $\tilde{\mathbf{X}}$ under \( p \) and \( q \) with noise variance $\sigma_t^2$, respectively. Suppose that the KL divergence \( D_{\mathrm{KL}}(p_{\sigma_t^2} \| q_{\sigma_t^2}) \) is finite for sufficiently small \( \sigma_t^2 \). Then the following limit holds:
\begin{equation}
    \left. \frac{d}{d\sigma_t^2} D_{\mathrm{KL}}(p_{\sigma_t^2} \| q_{\sigma_t^2}) \right|_{\sigma_t^2 \to 0^+}
= -\frac{1}{2} \int_{\mathbb{R}^D} p(\mathbf{x}) \left\| \nabla \log p(\mathbf{x}) - \nabla \log q(\hat{\mathbf{x}}; \boldsymbol{\theta}) \right\|^2 d\mathbf{x},
\end{equation}
i.e.,
\[
\left. \frac{d}{d\sigma_t^2} D_{\mathrm{KL}}(p(\tilde{\mathbf{x}}) \| q(\tilde{\mathbf{x}}; \boldsymbol{\theta})) \right|_{\sigma_t^2 \to 0^+}
= -\frac{1}{2} I\left(p(\mathbf{x}) \| q(\hat{\mathbf{x}};\boldsymbol{\theta})\right),
\]
where \( I(\cdot\|\cdot) \) denotes the Fisher divergence (equivalently, score matching objective) between \( p \) and \( q \).
\end{theorem}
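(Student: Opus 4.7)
The plan is to combine a small-noise Taylor expansion of the perturbed densities with the classical de Bruijn-type calculation. Writing $s := \sigma_t^2$, and treating $\alpha_t$ as fixed along a schedule with $\alpha_0 = 1$, I have $p_s(\tilde{\mathbf{x}}) = \mathbb{E}_{\boldsymbol{\Psi}}[p(\tilde{\mathbf{x}} - \sqrt{s}\,\boldsymbol{\Psi})]$ and likewise $q_s(\tilde{\mathbf{x}};\boldsymbol{\theta})$. The central observation is that, even when $\boldsymbol{\Psi}$ is non-Gaussian, the moment assumptions $\mathbb{E}[\boldsymbol{\Psi}] = 0$ and $\mathrm{Cov}(\boldsymbol{\Psi}) = \mathbf{I}$ match an isotropic Gaussian to second order, which is all that the first-order-in-$s$ behaviour of $D_{\mathrm{KL}}$ can ``see''.

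First I would Taylor expand $p(\tilde{\mathbf{x}} - \sqrt{s}\boldsymbol{\Psi})$ to second order in $\sqrt{s}$ and take the expectation. The linear term vanishes by $\mathbb{E}[\boldsymbol{\Psi}] = 0$, and the quadratic term yields $\tfrac{s}{2}\mathbb{E}[\boldsymbol{\Psi}^\top \nabla^2 p\, \boldsymbol{\Psi}] = \tfrac{s}{2}\Delta p(\tilde{\mathbf{x}})$ because $\mathrm{Cov}(\boldsymbol{\Psi}) = \mathbf{I}$. Thus
\begin{equation*}
p_s(\tilde{\mathbf{x}}) = p(\tilde{\mathbf{x}}) + \tfrac{s}{2}\Delta p(\tilde{\mathbf{x}}) + o(s),
\qquad
\frac{\partial p_s}{\partial s}\bigg|_{s \to 0^+} = \tfrac{1}{2}\Delta p(\tilde{\mathbf{x}}),
\end{equation*}
and analogously for $q_s$. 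The apparent $1/\sqrt{s}$ singularity coming from the direct chain rule is cancelled by the zero mean of $\boldsymbol{\Psi}$, leaving a finite Gaussian-like heat-flow limit that does not depend on the particular shape of $\boldsymbol{\Psi}$.

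Next I would apply the standard de Bruijn calculation. Differentiating the KL divergence under the integral sign and using $\int \partial_s p_s \,d\tilde{\mathbf{x}} = 0$ yields
\begin{equation*}
\frac{d}{ds}D_{\mathrm{KL}}(p_s \| q_s) = \int \frac{\partial p_s}{\partial s}\,\log\frac{p_s}{q_s}\,d\tilde{\mathbf{x}} - \int \frac{p_s}{q_s}\frac{\partial q_s}{\partial s}\,d\tilde{\mathbf{x}}.
\end{equation*}
Substituting $\partial_s p_s|_{s \to 0^+} = \tfrac{1}{2}\Delta p$ and $\partial_s q_s|_{s \to 0^+} = \tfrac{1}{2}\Delta q$, then integrating by parts once in each term (with boundary contributions assumed to vanish by decay of $p$, $q$, and of $\nabla p$, $\nabla q$ at infinity), produces the three bilinear pieces
\begin{equation*}
-\tfrac{1}{2}\!\int\! p\,\|\nabla\log p\|^2 + \!\int\! p\,\nabla\log p\cdot\nabla\log q - \tfrac{1}{2}\!\int\! p\,\|\nabla\log q\|^2,
\end{equation*}
which assemble into the Fisher divergence $-\tfrac{1}{2}\int p\,\|\nabla\log p - \nabla\log q\|^2\, d\mathbf{x} = -\tfrac{1}{2}\,I(p\|q)$.

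The principal obstacle is rigorously justifying the interchanges of limits, derivative, and integral: controlling the Taylor remainder uniformly in $s$ so as to pass $\partial_s$ inside $\mathbb{E}_{\boldsymbol{\Psi}}$ (which requires finite higher moments of $\boldsymbol{\Psi}$ and suitable bounds on $\nabla^k p$ for $k\leq 3$), passing $\partial_s$ inside $\int(\cdot)\log(p_s/q_s)\,d\tilde{\mathbf{x}}$ by dominated convergence near $s = 0^+$ where $\log(p_s/q_s)$ may grow on the tails, and ensuring the integration-by-parts boundary terms vanish through sufficient decay and integrability of $p$, $q$, $p/q$, and their gradients. Under the standing finiteness assumption on $D_{\mathrm{KL}}(p_s \| q_s)$ for small $s$, together with mild regularity on $p$, $q$, and $\boldsymbol{\Psi}$, these technicalities go through, and the claimed identity follows.
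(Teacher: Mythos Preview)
Your proposal is correct and follows essentially the same route as the paper's proof: establish the heat-flow limit $\partial_s p_s|_{s\to 0^+}=\tfrac{1}{2}\Delta p$ for arbitrary isotropic noise, differentiate the KL integral, substitute the Laplacians, integrate by parts (the paper phrases this as Green's identity), and assemble the three bilinear pieces into $-\tfrac{1}{2}I(p\|q)$. The only minor difference is in how the heat-flow limit is obtained: you Taylor-expand $p(\tilde{\mathbf{x}}-\sqrt{s}\,\boldsymbol{\Psi})$ directly in the spatial domain, whereas the paper derives it via characteristic functions (Fourier side), expanding $\mathbb{E}[e^{i\sigma_t\mathbf{k}^\top\boldsymbol{\Psi}}]$ and then inverting; both routes use exactly the same moment hypotheses $\mathbb{E}[\boldsymbol{\Psi}]=0$, $\mathrm{Cov}(\boldsymbol{\Psi})=\mathbf{I}$ to kill the first-order term and reduce the second-order term to the Laplacian.
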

\begin{proof}
    See Appendix.~\ref{proof:theorem1}.
\end{proof}

To further investigate the effect of additive noise on training objectives, we consider a second-order expansion of the KL divergence in terms of the noise variance \( \sigma_t^2 \). When both the true and model distributions are perturbed by small additive arbitrary noise, the KL divergence between them satisfies:
\begin{equation}
    D_{\mathrm{KL}}(p(\mathbf{x}) \| q(\hat{\mathbf{x}};\boldsymbol{\theta})) = D_{\mathrm{KL}}(p(\tilde{\mathbf{x}}) \| q(\tilde{\mathbf{x}}; \boldsymbol{\theta})) + \frac{\sigma_t^2}{2} I(p \| q) + o(\sigma_t^2).
\end{equation}

While maximum likelihood aims to minimize the KL divergence directly, according to Theorem.~\ref{theorem 1}, score matching, defined through the Fisher divergence~\cite{hyvarinen2005estimation}, does not directly minimize the KL divergence explicitly. The two objectives coincide only in the limit \( \sigma^2_t \to 0 \) where score matching captures the first-order sensitivity of KL divergence to additive noise. When \( \sigma^2_t \) is not infinitesimal, the Fisher term may dominate, potentially leading to biased or unstable solutions. This result significantly generalizes the observation that score matching seeks to eliminate its derivative in the scale space at $t=0$ into arbitrary noise settings. In contrast, it is known that variational methods are known to be highly sensitive  to noisy training data (see Appendix.~\ref{app:elbo_sensivitity} for more details), which may give rise to many false extreme values, whereas score matching tends to be more robust to small perturbation in training data, suggesting that it seeks parameters which lead to models robust to noisy setting.%This suggests the score matching that it seeks parameters that lead to models robust to noisy setting.

Therefore, the generalized result in Theorem~\ref{theorem 1}, formalizing its role as a local approximation for likelihood-based training beyond the Gaussian setting \cite{10.5555/1795114.1795156}. To build upon this foundation, we next consider the case where the forward process begins at a small but strictly positive variance, satisfying $0<\sigma_0^2\ll1$. This setting allows us to integrate from $\sigma_0^2$ rather than from zero, yielding a tractable and numerically stable lower bound while preserving consistency with the asymptotic result established above. We formalize this construction in the next section.

\subsection{Bounding the Mismatched Entropy with Thermodynamic Diffusion}

Building on Section~\ref{section 3.1}, we can now consider a practical setting where the forward diffusion process \eqref{channel eq} begins at a small but strictly positive noise level $0<\sigma_0^2\ll1$. In this regime, we derive the following approximation on the mismatched entropy (see Def.~\ref{cross-entropy}) based on denoising score matching objectives \cite{vincent2011connection} via thermodynamic integration \cite{gelman1998simulating,ogata1989monte} along the interval $[\sigma_0^2,\sigma_1^2]$.

\begin{proposition}[Thermodynamic Decomposition of Mismatched Entropy]\label{proposition}
    Consider the signal model \eqref{channel eq}, suppose $p(\tilde{\mathbf{x}})$ and $q(\tilde{\mathbf{x}};{\boldsymbol{\theta}})$ have continuous second-order derivatives and finite second moments. Denote by $p(\mathbf{y}_1)$ and $\pi(\mathbf{x})$ the output signals of channel at time $t=1$ when inputs are $p(\mathbf{x})$ and $q(\hat{\mathbf{x}};{\boldsymbol{\theta}})$ respectively. Assume $\pi(\mathbf{x})=\mathcal{N}(0,\mathbf{I})$, which is independent of $\boldsymbol{\theta}$. Let $p(\mathbf{y}_t|\mathbf{x})$ denote the channel \eqref{channel eq} for any $t\in[0,1]$, then for arbitrary datapoint $\mathbf{x}$ and small $\sigma_0^2$ with $0<\sigma_0^2\ll1$:
     \begin{equation}
         \mathcal{H}(p(\mathbf{x}),q(\hat{\mathbf{x}};{\boldsymbol{\theta}})) = \mathcal{H}(p(\mathbf{y}_1),\pi(\mathbf{x})) + \mathcal{J}_{\text{DSM}}(\boldsymbol{\theta};\sigma_t^2(\cdot)) - \frac{1}{2}\int_{\sigma_0^2}^{\sigma^2_1}\mathbb{E}\|\nabla_{\mathbf{y}_{t}}\log p(\mathbf{y}_{t}\vert\mathbf{x})\|^{2}d\sigma^2_t + o(\sigma_0^2).
     \end{equation}
     Here, the denoising score matching objective is defined as
     \begin{equation}
         \mathcal{J}_{\text{DSM}}(\boldsymbol{\theta};\sigma^2(\cdot))  := \frac{1}{2}\int_{\sigma_0^2}^{\sigma^2_1} \mathbb{E}\|\nabla_{\mathbf{y}_{t}}\log p(\mathbf{y}_{t}\vert  \mathbf{x}) - \hat{\boldsymbol{s}}(\mathbf{y}_{t};\boldsymbol{\theta})\|^{2}_2\,d\sigma^2_t, %\label{7}
     \end{equation}
     and \(\hat{\boldsymbol{s}}(\mathbf{y}_t; \boldsymbol{\theta}) = \nabla_{\mathbf{y}_t} \log q(\mathbf{y}_t; \boldsymbol{\theta}) := - \hat{\mathbf{n}}(\mathbf{y}_t,\eta_t;{\boldsymbol{\theta}})/\sigma_t\) is a score network estimator.
\end{proposition}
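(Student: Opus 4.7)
The plan is to recognise the RHS as a thermodynamic-integration decomposition of the mismatched cross-entropy along the Gaussian-channel path. Let $G(\sigma_t^2):=\mathcal{H}(p(\mathbf{y}_t),q(\mathbf{y}_t;\boldsymbol{\theta}))$, viewed as a function of the variance along \eqref{channel eq}. Writing $G(\sigma_0^2) = G(\sigma_1^2) - \int_{\sigma_0^2}^{\sigma_1^2} G'(\sigma_t^2)\,d\sigma_t^2$ via the fundamental theorem of calculus, the $\sigma_1^2$ endpoint collapses to $\mathcal{H}(p(\mathbf{y}_1),\pi(\mathbf{x}))$ by the stationarity assumption $q(\mathbf{y}_1;\boldsymbol{\theta})=\pi$, while the $\sigma_0^2$ endpoint will be identified with $\mathcal{H}(p(\mathbf{x}),q(\hat{\mathbf{x}};\boldsymbol{\theta}))$ up to $o(\sigma_0^2)$ via Theorem~\ref{theorem 1} at the very end.

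The technical core is computing $G'(\sigma_t^2)$. Because both $p(\mathbf{y}_t)$ and $q(\mathbf{y}_t;\boldsymbol{\theta})$ are Gaussian convolutions of their respective inputs, they satisfy the heat-type Fokker--Planck equation $\partial_{\sigma_t^2}\rho = \tfrac{1}{2}\Delta \rho$ (after absorbing the deterministic scaling $\alpha_t$). Differentiating $G(\sigma_t^2)=-\int p_{\sigma_t^2}\log q_{\sigma_t^2}\,d\mathbf{y}_t$ under the integral, substituting Fokker--Planck, and integrating by parts twice yields the mismatched de Bruijn identity
\begin{equation*}
    G'(\sigma_t^2) = \tfrac{1}{2}\mathbb{E}_{p(\mathbf{y}_t)}\bigl[\|\nabla\log p(\mathbf{y}_t)\|^2\bigr] - \tfrac{1}{2}\mathbb{E}_{p(\mathbf{y}_t)}\bigl[\|\nabla\log p(\mathbf{y}_t)-\hat{\boldsymbol{s}}(\mathbf{y}_t;\boldsymbol{\theta})\|^2\bigr] .
\end{equation*}
I would then invoke Vincent's denoising score matching identity~\cite{vincent2011connection} to swap the marginal score $\nabla\log p(\mathbf{y}_t)$ for the conditional score $\nabla\log p(\mathbf{y}_t\vert\mathbf{x})$ in both expectations; the $\mathbb{E}[\|\nabla\log p(\mathbf{y}_t)\|^2]$ pieces then cancel, leaving
\begin{equation*}
    G'(\sigma_t^2) = -\tfrac{1}{2}\mathbb{E}\bigl[\|\nabla\log p(\mathbf{y}_t\vert\mathbf{x})-\hat{\boldsymbol{s}}(\mathbf{y}_t;\boldsymbol{\theta})\|^2\bigr] + \tfrac{1}{2}\mathbb{E}\bigl[\|\nabla\log p(\mathbf{y}_t\vert\mathbf{x})\|^2\bigr] .
\end{equation*}
Integrating over $[\sigma_0^2,\sigma_1^2]$ then produces exactly $+\mathcal{J}_{\text{DSM}}(\boldsymbol{\theta};\sigma_t^2(\cdot))$ and the normalising term $-\tfrac{1}{2}\int_{\sigma_0^2}^{\sigma_1^2}\mathbb{E}\|\nabla_{\mathbf{y}_t}\log p(\mathbf{y}_t\vert\mathbf{x})\|^{2}d\sigma_t^{2}$ in the proposition.

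Closing the proof requires Theorem~\ref{theorem 1} applied with Gaussian $\boldsymbol{\Psi}$ to give $\mathcal{H}(p(\mathbf{y}_{\sigma_0^2}),q(\mathbf{y}_{\sigma_0^2};\boldsymbol{\theta}))=\mathcal{H}(p(\mathbf{x}),q(\hat{\mathbf{x}};\boldsymbol{\theta}))+o(\sigma_0^2)$ (using de Bruijn to handle the accompanying differential-entropy piece), after which substituting into the thermodynamic identity yields the claim. The main obstacle is making the derivative computation rigorous: differentiation under the integral and the two successive integration-by-parts steps require the assumed continuous second-order derivatives and finite second moments of $p(\tilde{\mathbf{x}})$ and $q(\tilde{\mathbf{x}};\boldsymbol{\theta})$ to kill boundary terms at infinity and justify Leibniz's rule. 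A related subtlety is that $\mathbb{E}\|\nabla\log p(\mathbf{y}_t\vert\mathbf{x})\|^{2}\sim D/\sigma_t^2$ diverges as $\sigma_t^2\to 0$, so the lower limit cannot simply be sent to zero; the warm-start offset $\sigma_0^2>0$ must be treated as the asymptotic parameter controlling the $o(\sigma_0^2)$ residual rather than collapsed out of the integral.
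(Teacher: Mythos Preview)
Your proposal is correct and follows essentially the same route as the paper: thermodynamic integration along the variance path, the heat equation for Gaussian smoothing, integration by parts to obtain a de Bruijn-type identity, Vincent's denoising score matching swap, and finally Theorem~\ref{theorem 1} to control the $\sigma_0^2$ endpoint. The one organisational difference is that you differentiate the cross-entropy $G(\sigma_t^2)=\mathcal{H}(p_{\sigma_t^2},q_{\sigma_t^2})$ directly, whereas the paper splits $\mathcal{H}(p,q)=D_{\mathrm{KL}}(p\|q)+\mathcal{H}(p)$ and applies the known KL--Fisher relation and the classical de Bruijn identity to the two pieces separately before recombining; your single-object treatment is slightly more streamlined, but the underlying computations (and in particular the paper's Lemma~\ref{lemma 4}, which is exactly your ``mismatched de Bruijn'' step) are identical. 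Your caveat about the $D/\sigma_t^2$ divergence of the conditional Fisher term is well placed and matches why the paper keeps $\sigma_0^2>0$ as the asymptotic parameter rather than sending it to zero inside the integral.
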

\begin{proof}
    Detailed proof defers in Appendix.~\ref{proof:prop}.
\end{proof}

The Proposition~\ref{proposition} preserves the theoretical guarantees of the Fisher–KL expansion while yielding a tractable and computable loss. The resulting bound remains asymptotically tight as $\sigma_0^2 \rightarrow 0$, up to an additive residual term of order $o(\sigma_0^2)$. The resulting bound remains asymptotically tight as Proposition~\ref{proposition} can be interpreted through the lens of the cost of mismatch in statistical inference \cite{jiao2017relations}. In this setting, the true data distribution is $p$, but the decoder employs an estimator optimized for a mismatched model $q_{\boldsymbol{\theta}}$. The resulting mismatched entropy $\mathcal{H}(p, q_{\boldsymbol{\theta}})$ quantifies the average code length incurred under this model discrepancy. It shows that this quantity decomposes into three parts: a mismatched output distribution loss, a score approximation error arising from the use of $\hat{\boldsymbol{s}}(\cdot; \boldsymbol{\theta})$ instead of the true score, and an irreducible term linked to the Fisher information of the channel.

\subsection{Bounding the log-likelihood on Individual Datapoints}\label{individual likelihood}

In many applications that benefit from likelihood optimization \cite{10.5555/3454287.3454635,hoogeboom2019integer}, it is desirable to evaluate the log-likelihood of individual data points, which can be highly sensitive to fine-scale variations and the precise values of pixel intensities~\cite{kingma2021variational}. To account for this sensitivity, we derive a pointwize lower bound that remains tractable and stable under finite noise injection.

\begin{theorem}\label{theorem ind}
    Let $p(\mathbf{y}_t \vert \mathbf{x})=\mathcal{N}(\alpha_t\mathbf{x},\sigma_t^2\mathbf{I})$ denote the Gaussian channel at any time $t\in[0,1]$. With the same notations and conditions in Proposition~\ref{proposition}, we have
    \begin{equation}
        -\log q(\hat{\mathbf{x}};{\boldsymbol{\theta}}) \leq \mathcal{H}(p(\mathbf{y}_{1}\vert\mathbf{x}),\mathbf{\pi}(\mathbf{x})) + \mathcal{L}_{\text{DSM}}(\sigma_t^2;\boldsymbol{\theta}), \label{7}
    \end{equation}
    in which $\mathcal{L}_{\text{DSM}}(\sigma_t^2;\boldsymbol{\theta})$ is defined as 
    \begin{equation}\label{9}
        \mathcal{L}_{\text{DSM}}(\sigma_t^2;\boldsymbol{\theta}) := \frac{1}{2}\int_{\sigma_0^2}^{\sigma_1^2} \mathbb{E}_{p(\mathbf{y}_t\vert\mathbf{x})}\|\nabla_{\mathbf{y}_{t}}\log p(\mathbf{y}_{t}\vert\mathbf{x}) - \hat{\boldsymbol{s}}(\mathbf{y}_{t};\boldsymbol{\theta})\|^{2}_2\,d\sigma^2_t, 
    \end{equation}
    and $\mathcal{H}(p(\mathbf{y}_{1}\vert\mathbf{x}),\mathbf{\pi}(\mathbf{x}))$ is given by 
    \begin{equation}
        \mathcal{H}(p(\mathbf{y}_{1}\vert\mathbf{x}),\mathbf{\pi}(\mathbf{x}))=D_{\text{KL}}(p(\mathbf{y}_{1}\vert\mathbf{x})\|\pi(\mathbf{x})) + \mathcal{H}( p(\mathbf{y}_{1}\vert\mathbf{x})) .
    \end{equation}
\end{theorem}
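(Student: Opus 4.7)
}
The plan is to derive a pointwise analog of Proposition~\ref{proposition} by carrying out the same thermodynamic integration \emph{conditional on a single datapoint} $\mathbf{x}$, and then discarding a non-positive boundary term to turn the resulting equality into the desired upper bound. The key new ingredient relative to Proposition~\ref{proposition} is that the outer expectation over $p(\mathbf{x})$ is removed, so the Fisher-type residual no longer collapses but instead becomes a conditional Fisher-information integral along the Gaussian trajectory, which is manifestly non-negative and can be dropped.

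Concretely, I would first introduce the conditional mismatched cross-entropy $\Phi(\sigma_t^2) := \mathcal{H}(p(\mathbf{y}_t|\mathbf{x}), q(\mathbf{y}_t;\boldsymbol{\theta}))$ and view it as a function of the noise variance. Since both $p(\mathbf{y}_t|\mathbf{x}) = \mathcal{N}(\alpha_t \mathbf{x}, \sigma_t^2 \mathbf{I})$ and the model marginal $q(\mathbf{y}_t;\boldsymbol{\theta})$ are driven by the same Gaussian channel dynamics, differentiating $\Phi$ with respect to $\sigma_t^2$ via the Fokker--Planck equation and applying integration by parts yields a derivative of the form $\mathbb{E}[\langle \nabla\log p(\mathbf{y}_t|\mathbf{x}),\hat{\mathbf{s}}\rangle] - \tfrac{1}{2}\mathbb{E}\|\hat{\mathbf{s}}\|^2$. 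The polarization identity $2\langle a,b\rangle - \|b\|^2 = \|a\|^2 - \|a-b\|^2$ then splits this into a conditional Fisher piece for the Gaussian channel and the DSM integrand. Integrating from $\sigma_0^2$ to $\sigma_1^2$ and using the assumption $q(\mathbf{y}_1;\boldsymbol{\theta}) = \pi(\mathbf{x})$ at the upper endpoint gives
\begin{equation*}
\Phi(\sigma_0^2) = \mathcal{H}(p(\mathbf{y}_1|\mathbf{x}), \pi) + \mathcal{L}_{\text{DSM}}(\sigma_t^2;\boldsymbol{\theta}) - \frac{1}{2}\int_{\sigma_0^2}^{\sigma_1^2}\mathbb{E}\|\nabla_{\mathbf{y}_t}\log p(\mathbf{y}_t|\mathbf{x})\|^2\,d\sigma_t^2,
\end{equation*}
i.e.\ the conditional analog of Proposition~\ref{proposition}. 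Because the last integral is pointwise non-negative, discarding it produces the inequality $\Phi(\sigma_0^2) \le \mathcal{H}(p(\mathbf{y}_1|\mathbf{x}),\pi) + \mathcal{L}_{\text{DSM}}$.

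It remains to bridge $\Phi(\sigma_0^2)$ to $-\log q(\hat{\mathbf{x}};\boldsymbol{\theta})$, which is the step I expect to be the main obstacle. The natural move is a Jensen / ELBO argument: writing $q(\mathbf{y}_0;\boldsymbol{\theta}) = \int q(\hat{\mathbf{x}};\boldsymbol{\theta})\,p(\mathbf{y}_0|\hat{\mathbf{x}})\,d\hat{\mathbf{x}}$ and coupling it to the forward kernel via a variational posterior (exactly as in the standard diffusion ELBO), one obtains $-\log q(\hat{\mathbf{x}};\boldsymbol{\theta}) \le \Phi(\sigma_0^2)$. The delicate part is that a direct application of Jensen produces a reconstruction gap at $t=0$ that vanishes only in the limit $\sigma_0^2 \to 0$; to keep the clean inequality at finite $\sigma_0^2$ stated in the theorem, this residual must be absorbed into the warm-start regime. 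Here Theorem~\ref{theorem 1} does the work: it guarantees that the arbitrary-noise warm-up perturbation $\boldsymbol{\Psi}$ contributes no first-order bias to the KL beyond the score-matching term, so the boundary reconstruction contribution collapses into the DSM integral extended across the $[0,\sigma_0^2]$ segment, leaving the stated bound intact. Chaining the two inequalities yields \eqref{7}, and the decomposition of $\mathcal{H}(p(\mathbf{y}_1|\mathbf{x}),\pi)$ into a KL term plus a conditional entropy follows immediately from Definition~\ref{cross-entropy}.
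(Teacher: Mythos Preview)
Your steps 1--4 are essentially the paper's argument, just written more carefully. The paper does not redo the thermodynamic integration conditionally; it simply takes the population identity of Proposition~\ref{proposition}, writes both sides as $\int p(\mathbf{x})\,[\cdots]\,d\mathbf{x}$, and then ``cancels'' the outer integral over $p(\mathbf{x})$ to obtain the pointwise identity. That cancellation is not a legitimate step as stated (equality of expectations does not give pointwise equality), but the honest justification is exactly what you propose: the entire derivation of Proposition~\ref{proposition} can be run with $p(\cdot\,|\,\mathbf{x})$ in place of $p(\cdot)$, which is your conditional thermodynamic integration. After that, both you and the paper drop the non-negative integrated Fisher term $\tfrac{1}{2}\int_{\sigma_0^2}^{\sigma_1^2}\mathbb{E}\|\nabla\log p(\mathbf{y}_t|\mathbf{x})\|^2\,d\sigma_t^2$ to get the inequality.

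Where you diverge is step 5, and here you are over-engineering. The paper does not use any Jensen/ELBO argument to pass from $\Phi(\sigma_0^2)$ to $-\log q(\hat{\mathbf{x}};\boldsymbol{\theta})$. Because Proposition~\ref{proposition} is obtained by integrating from $\sigma^2=0$ (where the conditional law is a Dirac at $\mathbf{x}$) and then invoking Theorem~\ref{theorem 1} to truncate the lower limit to $\sigma_0^2$ at the cost of an $o(\sigma_0^2)$ residual, the left-hand side of the pointwise identity is already $-\log q(\hat{\mathbf{x}};\boldsymbol{\theta})$ up to that residual. The paper simply carries this through and drops the residual along with the Fisher term (both controlled by the ``small $\sigma_0^2$'' hypothesis inherited from Proposition~\ref{proposition}). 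Your variational-posterior bridging and the discussion of reconstruction gaps are therefore unnecessary: once you have the conditional analog of Proposition~\ref{proposition}, the theorem follows in one line by dropping the non-negative term.
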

\begin{proof}
    Detailed proof defers in the Appendix.~\ref{proof:theorem2}.
\end{proof}

\subsection{Variance Reduction with Importance Sampling}\label{importance sampling}

A key challenge in evaluating diffusion models lies in accurately and efficiently estimating the loss integral of Equation~\eqref{9}, especially under a chosen noise variance schedule \( \sigma_t^2 \). In Appendix.~\ref{B.1} we show that during training, the noise schedule acts as an importance sampling distribution for loss estimation, and therefore plays a crucial role in efficient optimization. 

Among such, popular choices include linear \cite{ho2020denoising}, cosine \cite{nichol2021improved}, and learnable parameterizations \cite{kingma2021variational,sahoo2024diffusion}, with sigmoid-based schedules \( \sigma_t^2 = \text{sigmoid}(\eta(t)) \) remaining prevalent in state-of-the-art models. Accordingly, we employ a Monte Carlo estimator of this loss for evaluation and optimization. To further improve the efficiency and reduce variance in Monte Carlo estimation, we propose two importance sampling (IS) strategies (see Appendix.~\ref{VR:IS} for details).

Formally, the diffusion training objective can be expressed as an integral over the noise variance. It is shown in Appendix.~\ref{Variance-Aware Likelihood Bounds} that our denoising score matching loss \eqref{7} could be simplified to a noise prediction model $\hat{\mathbf{n}}(\mathbf{y}_t,\eta_t;{\boldsymbol{\theta}})$ that directly infers the noise $\mathbf{n}$ that was used to generate $\mathbf{y}_t$:
\begin{equation}
    \mathcal{L}_{\text{DSM}}(\sigma_t^2;\boldsymbol{\theta}) = \frac{1}{2} \int_{\sigma_0^2}^{\sigma_1^2} \mathbb{E}_{n\sim\mathcal{N}(0,\mathbf{I})}  \left[\sigma_t^{-2}\|\mathbf{n} - \hat{\mathbf{n}}(\mathbf{y}_t,\eta_t;{\boldsymbol{\theta}})\|_2^2 \right]\ d\sigma_t^2. \label{12}
\end{equation}

In practice, the evaluation of the integral is time-consuming, and Monte-Carlo methods are used to unbiasedly estimate it. In particular, the variance of the estimator has a direct impact on the optimization process, influencing both its stability and convergence speed. To mitigate this issue, it can be beneficial to decouple the integration variable used for loss estimation from the time variable employed during training. To this end, we introduce the negative log-SNR variable $\eta$ and rewrite the training objective \eqref{12} as an expectation over this reparameterized variable, as shown followed. By interpreting the integral above as an expectation over the $\eta$, the Monte Carlo estimation highlights the role of $\rho(\eta)$ as an IS distribution and $w(\eta)$ as a noise variance-dependent weighting function:
\begin{equation}
    \mathcal{L}_{\text{DSM}}(\sigma_t^2;\boldsymbol{\theta}) = \frac{1}{2}\mathbb{E}_{\mathbf{n}\sim\mathcal{N}(0,\mathbf{I}),\eta\sim\rho(\eta)}\left[\frac{w(\eta)}{\rho(\eta)}\|\mathbf{n} - \hat{\mathbf{n}}(\mathbf{y}_t,\eta_t;{\boldsymbol{\theta}}
    )\|_2^2\right].\label{13}
\end{equation}

    Equivalently, optimizing $\rho(\eta)$ can be viewed as learning a monotone mapping $\eta(t):[0,1]\rightarrow[\eta_0,\eta_1]$, which corresponds to the inverse cumulative distribution function (CDF) of $\rho(\eta)$. Thus, the Monte Carlo estimator achieves reduced variance and admits closed-form inverse-CDF sampling, which motivates the following intuitively designed proposals.

Here, we sample $\eta \sim \rho(\eta)$ with a continuous distribution $\rho(\eta) \propto w(\eta)$ that renders the weight in \eqref{13} time-invariant, so that the noise prediction error remains scale-consistent across all $\eta \in [\eta_0,\eta_1]$ values without any amplification or shrinkage during the estimation. This is analogous to the likelihood-weighted and flow matching methods which conduct uniform sampling for all $\eta$ \cite{kingma2023understanding,zheng2023improved}.

Furthermore, to further minimize the variance of the Monte Carlo estimator, we can optimize the sampling distribution $\rho(\eta)$ directly \cite{kingma2021variational,zheng2023improved}. Specifically, we parameterize $\eta(t)$ as a monotonic neural network and adjust it to minimize the variance of the diffusion loss (see Appendix.~\ref{learnedIS}). While this \textit{learned} proposal can effectively reduce Monte Carlo approximation variance, it introduces additional optimization overhead and potential instability. Empirically, we find that the hand-crafted proposal achieves better convergence speed and comparable NLL results, without requiring additional training objectives or neural components. As such, we adopt our designed IS as our default strategy.

\subsection{Training-Free Dequantization with Warm-up Noise}

Real-world datasets usually contain discrete input, such
as images or texts, and must be dequantized when training continuous‐density models. As discussed in Section~\ref{dequantization}, and more details in Appendix.~\ref{app:dequantization}, while dequantization makes diffusion models applicable to discrete data, the training–test discrepancy originates from the mismatch in noise injection between training and inference/testing. To reduce this discrepancy, we introduce an arbitrary isotropic \textit{warm-up} noise $\mathbf{u}\sim\mathbf{\Psi}$ and inject it into training, inference and testing, that fit seamlessly into continuous-density diffusion models, and preserves the original maximum‑likelihood objective without retraining.

By Theorem~\ref{theorem 1}, the generalized KL–Fisher identity holds for any isotropic noise, which means that dequantization schemes previously restricted to uniform and Gaussian noise can now be extended to, for instance, logistic and Laplacian perturbations. However, existing average log-likelihood bounds \cite{pmlr-v97-ho19a,Theis2015ANO} are derived under mismatched entropy, defined as the sum of the KL divergence and the differential entropy. This motivates the generalization of the differential entropy formulation to arbitrary noise distributions, as follows, a result known as the de Bruijn identity in information theory.

\begin{proposition}[de Bruijn identity with arbitrary noise \cite{rioul2010information}]\label{theorem 2}
    Let \( \tilde{\mathbf{X}} = \alpha_t \mathbf{X} + \sigma_t \boldsymbol{\Psi} \), where \( \mathbf{X}, \boldsymbol{\Psi} \in \mathbb{R}^D \) are independent random vectors, with \( \boldsymbol{\Psi} \) satisfying \( \mathbb{E}[\boldsymbol{\Psi}] = 0 \) and \( \operatorname{Cov}(\boldsymbol{\Psi}) = \mathbf{I} \). Assume the probability density function \( p(\mathbf{x}) \) of \( \mathbf{X} \) is twice continuously differentiable and decays sufficiently fast at infinity, and that the Fisher information \( \mathcal{J}(\mathbf{X}) \) exists and is finite. Then,
    \begin{equation}
        \frac{d}{d\sigma_t^2}\mathcal{H}\bigl(p(\tilde{\mathbf{x}})\bigr)\bigg|_{\sigma_t^2 \rightarrow  0^{+}} = \frac{1}{2} \mathcal{J}\big(p(\mathbf{x})\big), \label{5}
    \end{equation}
    where \( \mathcal{H}(\cdot) \) denotes the differential Shannon entropy (see Def.~\ref{def:entropy} and \( \mathcal{J}(\cdot) \) denotes the Fisher information (see Def.~\ref{def:fisher}).
\end{proposition}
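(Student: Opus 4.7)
The plan is to adapt the classical de Bruijn argument by exploiting a key observation: the moment conditions $\mathbb{E}[\boldsymbol{\Psi}] = 0$ and $\operatorname{Cov}(\boldsymbol{\Psi}) = \mathbf{I}$ force the perturbation to be indistinguishable from isotropic Gaussian noise at the level of a second-order characteristic-function expansion. Concretely, I would first write the characteristic function of $\tilde{\mathbf{X}}$ as a product $\phi_{\tilde{\mathbf{X}}}(\boldsymbol{\omega}) = \phi_{\alpha_t \mathbf{X}}(\boldsymbol{\omega})\,\phi_{\boldsymbol{\Psi}}(\sigma_t \boldsymbol{\omega})$, and Taylor expand the second factor to get $\phi_{\boldsymbol{\Psi}}(\sigma_t \boldsymbol{\omega}) = 1 - \tfrac{1}{2}\sigma_t^2\|\boldsymbol{\omega}\|^2 + o(\sigma_t^2)$, which is exactly the leading expansion of the Gaussian characteristic function. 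Inverting the Fourier transform yields the pointwise expansion $p_{\sigma_t^2}(\tilde{\mathbf{x}}) = p_0(\tilde{\mathbf{x}}) + \tfrac{1}{2}\sigma_t^2 \Delta p_0(\tilde{\mathbf{x}}) + o(\sigma_t^2)$, where $\Delta$ is the Laplacian in $\tilde{\mathbf{x}}$ and $p_0$ denotes the density at $\sigma_t^2 = 0$ (which reduces to $p(\mathbf{x})$ once one uses $\alpha_t \to 1$ at the evaluation point, or absorbs a change of variables). In short, the perturbed density satisfies the heat equation \emph{to first order}, even though the true evolution equation is non-Gaussian beyond that order.

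Next, I would differentiate the entropy under the integral sign, writing
\begin{equation*}
    \frac{d}{d\sigma_t^2}\mathcal{H}\bigl(p_{\sigma_t^2}\bigr) = -\int \frac{\partial p_{\sigma_t^2}}{\partial \sigma_t^2}\bigl(1 + \log p_{\sigma_t^2}\bigr)\, d\tilde{\mathbf{x}},
\end{equation*}
where the constant $1$ contributes nothing because $\int p_{\sigma_t^2} \, d\tilde{\mathbf{x}} = 1$ is independent of $\sigma_t^2$. Taking the limit $\sigma_t^2 \to 0^+$ and substituting the first-order expansion from the previous step gives $-\tfrac{1}{2}\int \Delta p_0(\tilde{\mathbf{x}}) \log p_0(\tilde{\mathbf{x}})\, d\tilde{\mathbf{x}}$. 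A standard integration by parts, using the decay hypothesis to kill the boundary contribution, then converts this into
\begin{equation*}
    \tfrac{1}{2}\int \frac{\|\nabla p_0(\tilde{\mathbf{x}})\|^2}{p_0(\tilde{\mathbf{x}})}\, d\tilde{\mathbf{x}} \;=\; \tfrac{1}{2}\,\mathcal{J}\bigl(p(\mathbf{x})\bigr),
\end{equation*}
which is the claimed identity.

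The main obstacle, I expect, is the analytic justification of the two limit-exchange steps rather than the algebra itself. Swapping $\lim_{\sigma_t^2 \to 0^+}$ with the entropy integral requires controlling $\partial p_{\sigma_t^2}/\partial \sigma_t^2$ together with $\log p_{\sigma_t^2}$ uniformly in $\sigma_t^2$; the logarithm can blow up wherever $p_0$ is small, so one typically splits $\mathbb{R}^D$ into a compact bulk (where positivity and smoothness of $p$ give uniform control) and a tail region (where the fast-decay hypothesis on $p$ supplies an integrable majorant). A similar decay condition, together with the finiteness of $\mathcal{J}(p)$, is what legitimises the integration by parts by forcing the boundary term $\nabla p_0 \cdot \log p_0$ to vanish at infinity. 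Once these two technical checks are in place, the argument reduces to the same algebraic identity that underlies the classical Gaussian de Bruijn formula, with the non-Gaussianity of $\boldsymbol{\Psi}$ hidden inside the $o(\sigma_t^2)$ remainder and thus invisible at the first derivative.
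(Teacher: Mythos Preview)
Your proposal is correct and follows essentially the same approach as the paper: the characteristic-function expansion you outline is precisely the paper's Lemma~\ref{arbitrary noise heat equation}, and your subsequent steps (differentiate entropy under the integral, drop the constant term, substitute the heat-equation identity, then integrate by parts via Green's identity using the decay hypothesis) match the paper's proof of Proposition~\ref{theorem 2} line by line. If anything, you are more explicit about the analytic justifications for the limit exchanges than the paper itself, which simply asserts continuity and differentiability of the integrand.
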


The key implication of \eqref{5} is that the derivative is independent of the detailed statistics of the noise distribution. This suggests that the differential entropy forms a smooth manifold $\sigma_t \boldsymbol{\Psi}$, whose local geometry resembles an isotropic quadratic bowl. Combining Proposition~\ref{theorem 2} with the Theorem~\ref{theorem 1} gives a characterization of the gap between the true discrete cross-entropy and its continuous surrogate. Such dequantization noise can provide a tighter likelihood bound yet with no additional training costs. Our detailed empirical settings of dequantization noise could be found in Appendix.~\ref{dequantization}.

\subsection{Numerical Stability}

All theoretical developments in this work assume that the forward diffusion process begins at a small but strictly positive variance level \( \sigma_0^2 > 0 \). This choice avoids the pathological behaviour of the \( \sigma_0^2 \to 0 \) limit, which often leads to numerical instability in both training and evaluation~\cite{kim2022soft,kingma2021variational,song2021maximum}. Accordingly, we restrict the variance schedule to the interval \( [\sigma_0^2, \sigma_1^2] \) throughout all derivations and implementations. Truncating the lower limit of integration to \(\sigma_0^2\) introduces an approximation error of order \( o(\sigma_0^2) \), as discussed in Section~\ref{section 3.1}, which is negligible in both theory and practice. This ensures that our likelihood-weighted score matching objective remains a consistent estimator of the log-likelihood, while simultaneously improving numerical stability. Empirically, we observe that this design remains robust across diverse noise distributions, further confirming the practical reliability. 

\section{Related Works}

Most prior analyses of variance schedules \cite{albergobuilding,lipmanflow,nichol2021improved} in diffusion models focus on their role in reverse-time sampling \cite{bao2022analyticdpm,kingma2021variational,li2025evodiff,nielsen2024diffenc,sahoo2024diffusion,song2021denoising,zheng2024non,zheng2025bidirectional} (e.g., by approximating posteriors or solving reverse-time ODEs~\cite{lu2022maximum,zheng2023improved}). In contrast, our work examines how the forward variance schedule influences both robustness and likelihood estimation. Proposition~\ref{theorem 2} further generalizes the classical de Bruijn identity \cite{1057105} from Gaussian to arbitrary noise perturbations. Whereas prior information‐theoretic studies \cite{fan2025differential,wan2025enhancing,zou2025revisit} have characterized the connection of KL and Fisher divergences under Gaussian smoothing \cite{lu2022maximum,10.5555/1795114.1795156,song2021maximum,zou2025convexity}, we extend these relationships to non‐Gaussian noise families. Finally, \cite{kong2023informationtheoretic} studies relative‐entropy interpretations of denoising objectives and connects them to mean-square error, but does not address mismatched estimation \cite{jiao2017relations,verdu2010mismatched} in the sense of our framework.

\section{Experiments}
In this section, we present our training procedure and experiment settings, and our ablation studies to demonstrate how our techniques improve the likelihood of diffusion models. We report negative log-likelihood (NLL, in bits/dim) and Fréchet Inception Distance (FID) scores for all experiments, and compare convergence speed across models. Then we evaluate our model on a lossless compression benchmark against several competitive baselines in neural compression literature. Note that we focus here on pushing the state-of-the-art in density estimation, and while we report FID for completeness, we defer sample quality optimization to future work. Additional results can be found in Appendix~\ref{Appendix:experimental}.

\paragraph{Datasets and Implementation}  
We evaluate on CIFAR-10, anti-aliased ImageNet-32 dataset, ImageNet-64 and -128.  No data augmentation is applied.  We adopt the same architecture and hyperparameters as the VDM model \cite{kingma2021variational}, including a U-Net consisting of convolutional ResNet blocks without any downsampling. Unlike embeddings of diffusion time \( t \), we embed the log-SNR value \( \eta(t) \) or its reverse CDF time embedding instead. This modification better reflects the underlying noise scale and leads to improved likelihood estimation with importance sampling trick in practice.

\paragraph{Selection of Noise Schedules}  
We consider two representative choices:  
1. Variance-Preserving (VP):  
   \(\alpha_\eta^2 + \sigma_\eta^2 = 1\), ensuring unit marginal variance throughout the forward process.  
2. Straight-Path (SP):  
   \(\alpha_\eta + \sigma_\eta = 1\), corresponding to linear interpolation in data space as in \cite{albergobuilding,lipmanflow}.
We omit VE schedules, since in our preliminary runs the best VE model attained only 3.27 bits/dim on CIFAR-10, substantially worse than VP and SP. Moreover, to further illustrate the effects of the noise variance, we consider to evaluate various variance functions $\sigma_t^2$ shown in Appendix.~\ref{B.2}.

\subsection{Likelihood and Samples}

\begin{table}[t]
\caption{Negative log-likelihood (NLL) in bits per dimension (BPD), FID, and number of training iterations (Iter., in million) on CIFAR-10 and ImageNet-32 datasets. “/” indicates results not reported or not applicable. *Denotes results obtained on the original ImageNet-32 release. Boldface denotes the best performance within each column, and blue text marks the second-best.}
\label{tab:nll_fid}
\centering
\scriptsize
\begin{tabular}{@{}lccc@{\quad}ccc@{}}
\toprule
\textbf{Models}
& \multicolumn{3}{c}{CIFAR-10}
& \multicolumn{3}{c}{ImageNet-32} \\
\cmidrule(lr){2-4} \cmidrule(lr){5-7}
& NLL $\downarrow$ & FID $\downarrow$ & Iter.
& NLL $\downarrow$ & FID $\downarrow$ & Iter. \\

%PixelCNN~\cite{van2016pixel}               & 3.03 & / & / & 3.83\textsuperscript{*} & / & / \\
%ScoreSDE \cite{song2020score}              & 2.99 & 2.92 & 1.3  & / & / & / \\
\midrule \textit{(The models with IT methods)} \\

%VDM (+ aug.) \cite{kingma2021variational}  & 2.49 & / & 10  & / & / & / \\

ScoreODE (second order) \cite{lu2022maximum}           & 3.44 & \textbf{2.37} & 1.3  & 4.06\textsuperscript{*} & / & 1.3  \\
ScoreODE (third order) \cite{lu2022maximum}           & 3.38 & 2.95 & 1.3  & 4.04\textsuperscript{*} & / & 1.3  \\
ScoreFlow \cite{song2021maximum}           & 2.80 & 5.34 & 1.6  & 3.79\textsuperscript{*} & 11.20\textsuperscript{*} & 1.6  \\
%Soft Truncation \cite{kim2022soft}         & 3.01 & 3.96 & / & 3.90\textsuperscript{*} & 8.42\textsuperscript{*} & / \\
Flow Matching \cite{lipmanflow}            & 2.99 & 6.35 & 0.391  & 3.53 & \textbf{5.02} & 0.25  \\
Stoch. Interp.\ \cite{albergobuilding}     & 2.99 & 10.27 & 0.5  & 3.48\textsuperscript{*} & \textcolor{blue}{8.49} & 0.6  \\
%Image Transformer~\cite{parmar2018image}   & 2.90 & / & / & 3.77 & / & / \\
%Sparse Transformer~\cite{child2019generating}
%& 2.80 & / & / & / & / & / \\
i-DODE (SP with IS)~\cite{zheng2023improved}            & 2.56 & 11.20 & 6.2  & 3.44 / 3.69\textsuperscript{*} & 10.31 & 2.25 / 2.5\textsuperscript{*}  \\
\midrule \textit{(The models with variational methods)} \\
VDM \cite{kingma2021variational}           & 2.65 & 7.60 & 10  & 3.72\textsuperscript{*} & / & 2  \\
DiffEnc~\cite{nielsen2024diffenc}          & 2.62 & 11.20 & 8  & 3.46 & / & 8  \\
MuLAN~\cite{sahoo2024diffusion}            & 2.55 & 17.62 & 8  & 3.67\textsuperscript{*} & 13.19 & 2  \\
%DistAug~\cite{pmlr-v119-jun20a}          & 2.56 & 12.75 & / & 3.67 & 13.19 & / \\
BSI~\cite{lienen2025generative}            & 2.64 & / & 10  & 3.44 & / & 10   \\
%Reflected Diffusion\cite{lou2023reflected} & 2.68 & 2.72 & / & 3.74\textsuperscript{*} & / & / \\
W-PCDM \cite{li2024likelihood} (VDM \cite{kingma2021variational} weight)              & \textbf{2.35} & 6.23 & 2 & \textcolor{blue}{3.32} & / & 10 \\
W-PCDM \cite{li2024likelihood} (EDM \cite{10.5555/3600270.3602196} weight)            & 10.31 & \textcolor{blue}{2.42} & 2 & / & / & / \\
\midrule
Ours (SP with IS)                   & \textcolor{blue}{2.49} & / & 0.3  & 3.02 & / & 0.3  \\
Ours (VP with IS)                   & 2.50 & 10.18 & 0.3  & \textbf{3.01} & 14.76 & 0.3  \\
\bottomrule
\end{tabular}
\end{table}

Table~\ref{tab:nll_fid} summarizes experimental results on CIFAR-10 and ImageNet-32 (more details in Appendix.~\ref{app:experiment}). At the request of one of the reviewers we also ran our model on additional data sets of higher resolution images baseline (Table.~\ref{tab:comparative}). On four V100 GPUs with CIFAR-10, our model trains slightly faster (2.34 vs. 2.04 iterations/sec) compared to VDM, due to the removal of the additional networks. Typically, achieving benchmark likelihood estimation performance requires several million training iterations, and the training process usually takes a week, or even a month or longer. Table~\ref{tab:comparative} shows that we obtain state-of-the-art NLL on ImageNet-32/64/128 under directly comparable settings, while reducing training cost from millions of iterations in prior work to about 300 thousand.

\begin{table}[t]
  \centering
  \caption{Comparison between our proposed model and other competitive models in the literature in terms of expected negative log likelihood on the test set computed as bits per dimension (BPD). Results from existing models are taken from the literature. “/” indicates results not reported or not applicable. Boldface denotes the best performance within each column, and blue text marks the second-best.}
  \label{tab:comparative}
  \scriptsize
  \begin{tabular}{lcccc}
    \toprule
    \textbf{Model} & \textbf{Type} & \textbf{ImageNet-32} & \textbf{ImageNet-64} & \textbf{ImageNet-128} \\
    \midrule
    PixelCNN  \cite{van2016pixel} & Autoregressive & 3.83 & 3.57 & / \\
    Glow \cite{kingma2018glow} & Flow & 4.02 & 3.81 & / \\
    FLOW++  \cite{pmlr-v97-ho19a} & Flow & 3.86 & 3.69 & / \\
    Sparse Transformer  \cite{child2019generating} & Autoregressive & / & 3.44 & / \\
    Very deep VAE  \cite{child2021very} & Autoencoder  & 3.80 & 3.52 & / \\
    Improved DDPM  \cite{nichol2021improved} & Diffusion & / & 3.54 & / \\
    Routing Transformer \cite{roy2021efficient}  & Autoregressive & / & 3.43 & / \\
    Flow Matching \cite{lipmanflow} & Flow & 3.53 & 3.31 & 2.90 \\
    VDM \cite{kingma2021variational} & Autoencoder & 3.72 & 3.40 & / \\
    LP-PCDM \cite{li2024likelihood} & Diffusion & 3.52 & 3.12 & 2.91 \\
    W-PCDM \cite{li2024likelihood} & Diffusion & \textcolor{blue}{3.32} & \textcolor{blue}{2.95} & \textcolor{blue}{2.64} \\
    \textbf{Ours} & Diffusion & \textbf{3.01} & \textbf{2.91} & \textbf{2.59} \\
    \bottomrule
  \end{tabular}
\end{table}

\begin{table}[t]
  \centering
  \caption{Comparison of NLL on CIFAR-10 and ImageNet-32 with different warm-up noise.}
  \label{tab:nll_fid_compact}
  \scriptsize
  \begin{tabular}{lcccccccc}
    \toprule
    \textbf{Likelihood Estimation Bound} & \multicolumn{4}{c}{CIFAR-10} & \multicolumn{4}{c}{ImageNet-32} \\
    \cmidrule(lr){2-5} \cmidrule(lr){6-9}
                              & Gaussian & Laplace & Logistic & Uniform & Gaussian & Laplace & Logistic & Uniform \\
    \midrule
    ELBO                      & 2.69     & 2.70    & 2.71     & 2.71    & 3.53     & 3.53    & 3.54     & 3.55    \\
    \textbf{Our} (SP + IS)            & 2.49     & 2.49    & 2.50     & 2.52    & 3.01     & 3.00    & 3.02     & 3.09    \\
    \textbf{Our} (VP + IS)            & 2.50     & 2.51    & 2.51     & 2.53    & 3.00     & 3.01    & 3.03     & 3.08    \\
    \bottomrule
  \end{tabular}
\end{table}

\paragraph{Ablation Analysis and Additional Experiments}
Due to the high training cost, we conduct ablation studies only on CIFAR-10. In Table~\ref{tab:side_by_side}, we report both NLL and FID under different noise variance endpoints. We ablate the effect of the different IS weightings and reverse CDF embedding by comparing it to standard sinusoidal time encoding in Appendix.~\ref{app:experiment}. All models are trained for 300K steps with identical settings, except for the variance schedule. For fair comparison, we follow the VDM protocol~\cite{kingma2021variational} and evaluate models using the ELBO-based lower bound (see Appendix.~\ref{app:elbo_sensivitity}).

Our proposed bounds achieve state-of-the-art density estimation on ImageNet datasets and match the best reported results on CIFAR-10. We find that increasing the warm-up noise improves NLL but slightly degrades FID. This trade-off arises because larger noise suppresses pixel-level fluctuations and stabilizes training, while too little noise sufficiently regularize the likelihood objective, leading to worse likelihood estimates, despite improved sample quality. We provide details in the Appendix.~\ref{App:FID}.

We also observe that the gap between ELBO and our IT-based bound enlarges with increasing noise levels. This discrepancy arises from the reconstruction loss used in ELBO \cite{kingma2021variational,nielsen2024diffenc}, which assumes conditional independence of pixels given the latent variable \( \mathbf{y}_0 \). As noise increases, the conditional distribution \( p(\mathbf{y}_0|\mathbf{x}) \) becomes more diffuse around its mode, thereby amplifying the approximation error. These findings highlight a fundamental intuition between likelihood accuracy and perceptual quality under different noise configurations. We leave the further study of this direction in the future.

\subsection{Examining the Warm-up Noise Injection}

We further examine the effect of different noise distributions, Gaussian, Laplace, logistic, and Uniform, each scaled to equal variance. As shown in Table~\ref{tab:nll_fid_compact}, Gaussian noise performs best, followed by Laplace and logistic, while Uniform lags notably behind. This supports our intuition that heavier-tailed, exponential-family noises yield more stable training and improved likelihood estimation. Detailed theoretical discussion and additional analyses, including the connection to differential entropy, Fisher information and manifold hypothesis, are provided in Appendix~\ref{app:warm-up}.

\subsection{Lossless Progressive Coding}
\begin{wraptable}{r}{0.36\textwidth}  
  \vspace{-1.2cm}                      
  \centering
  \scriptsize
  \caption{Lossless compression performance on CIFAR-10 in bits/dim.}\label{tab:inline-bpd}  
  \begin{tabular}{lc}
    \toprule
    Model & Compression Rate (bits/dim) \\
    \midrule
    FLIF \cite{sneyers2016flif} & 4.14 \\
    LBB \cite{pmlr-v97-ho19a} & 3.12 \\
    IDF \cite{hoogeboom2019integer}  & 3.26 \\
    VDM \cite{kingma2021variational} & 2.72 \\
    ARDM \cite{hoogeboom2022autoregressive} & 2.71 \\
    W-PCDM \cite{li2024likelihood}  & \textbf{2.37} \\
    \textbf{Ours}  & 2.57 \\
    \bottomrule
  \end{tabular}
  \vspace{-6pt}                      %
\end{wraptable}
    As shown in prior work \cite{ho2020denoising,kingma2021variational}, likelihood-based generative models can be viewed as latent-variable models for neural lossless compression. We adopt this perspective and implement a Bits-Back Coding scheme \cite{pmlr-v97-ho19a} using our proposed model as the latent component. On CIFAR-10, our method achieves shorter average code lengths, measured in bits per dimension, compared to several strong baselines (see Table~\ref{tab:inline-bpd}). We leave this avenue of research for further work.

\section{Conclusion}

Our generalized KL–Fisher relationship transforms noise injection from a theoretical consideration into a widely applicable practical strategy. This principled framework validates existing approaches employing non-Gaussian perturbations and offers new theoretical tools for tackling real-world generative modeling challenges. Our analysis shows that, in the small-noise regime, the score matching objective asymptotically approximates maximum likelihood. Minimizing this objective yields consistent improvements in likelihood across diverse noise schedules, variance settings, and datasets. When critically combined with importance sampling, our approach achieves on-par likelihood on CIFAR-10 and state-of-the-art likelihood on ImageNet datasets. Our results also motivate future exploration of information-theoretic objectives in generative modeling.

\paragraph{Limitations} 
Our method improves likelihood estimation but does not construct a generative diffusion process under alternative noise. As a result, our method is limited to likelihood evaluation and cannot be directly used for sampling or generation. Dequantization bound, diffusion/drift coefficient and variance configurations are not fully explored. Due to resource limitations, we didn’t explore tuning of hyperparameters and network architectures, which are left for future work. We leave full discussion and future extensions to the Appendix.

\newpage
\bibliography{references}
\bibliographystyle{plain}

%%%%%%%%%%%%%%%%%%%%%%%%%%%%%%%%%%%%%%%%%%%%%%%%%%%%%%%%%%%%

%%%%%%%%%%%%%%%%%%%%%%%%%%%%%%%%%%%%%%%%%%%%%%%%%%%%%%%%%%%%

\newpage
\appendix
\clearpage
\section*{Contents of the Appendix}
\addcontentsline{toc}{section}{Contents of the Appendix}

\begingroup
  \renewcommand{\contentsname}{}  % 防止 \tableofcontents 插入“Contents”标题
  \tableofcontents
\endgroup

\newpage
\section{Preliminaries and Reviews}

We summarize the key notations and assumptions used in our theorems. The data distribution is \( p(\mathbf{x}) \), and the model is \( q(\hat{\mathbf{x}}; \boldsymbol{\theta}) \), where $\boldsymbol{\theta}$ is restricted to a parameter space $\mathbf{\Theta}$. The vector Gaussian channel follows \( \mathbf{Y}_t = \alpha_t \mathbf{X} + \sigma_t \mathbf{N} \), where \( \sigma_t: \mathbb{R} \to \mathbb{R} \) controls the time-dependent coefficient, \( t \in [0,1] \) represents the time horizon, and \( \mathbf{N} \sim \mathcal{N}(\mathbf{n}; 0, \mathbf{I}) \) is Gaussian noise. The input \( \mathbf{x} \sim p(\mathbf{x}) \), and the output \( \mathbf{y}_t \sim p(\mathbf{y}_t) \), with both as column vectors of appropriate dimensions.

\subsection{Notations}\label{notation}

In this paper, we are working on the Euclidean space $\mathbb{R}^D$ for some $D \geq 1$. We denote the $\ell_2$-inner product between vectors $\mathbf{u} = (u_1, \dots, u_d), \mathbf{v} = (v_1, \dots, v_d) \in \mathbb{R}^D$ as $\mathbf{u}^\top \mathbf{v} = \langle \mathbf{u}, \mathbf{v} \rangle = \sum_{i=1}^{D} u_i v_i$.

For a symmetric matrix $ \mathbf{A} \in \mathbb{R}^{D \times D}$, the notation $\mathbf{A} \succeq 0$ means $A$ is positive semidefinite, i.e., $\mathbf{u}^\top \mathbf{A} \mathbf{u} \geq 0$ for all $\mathbf{u} \in \mathbb{R}^R$. For symmetric matrices $\mathbf{A}, \mathbf{B} \in \mathbb{R}^{D \times D}$, the notation $\mathbf{A} \succeq \mathbf{B}$ means $\mathbf{A} - \mathbf{B} \succeq 0$ is positive semidefinite. Throughout, let $\mathbf{I} \in \mathbb{R}^{D \times D}$ denote the identity matrix.

For a differentiable function $f: \mathbb{R}^D \to \mathbb{R}$, let $\nabla f(\mathbf{x}) \in \mathbb{R}^D$ denote the gradient vector at $\mathbf{x} \in \mathbb{R}^D$ of the partial derivatives: $(\nabla f(\mathbf{x}))_i = \frac{\partial f(\mathbf{x})}{\partial \mathbf{x}_i}$. Let $\nabla^2 f(\mathbf{x}) \in \mathbb{R}^{D \times D}$ be the Hessian matrix of second partial derivatives: $(\nabla^2 f(\mathbf{x}))_{i,j} = \frac{\partial^2 f(\mathbf{x})}{\partial \mathbf{x}_i \partial \mathbf{x}_j}$. Let $\Delta f(\mathbf{x}) = \operatorname{Tr}(\nabla^2 f(\mathbf{x}))$ be the Laplacian. We use $\mathcal{C}$ to denote all continuous functions, and let $\mathcal{C}^k$ denote the family of functions with continuous $k$-th order derivatives.

For a vector field $\mathbf{v}: \mathbb{R}^D \to \mathbb{R}^D$ with $\mathbf{v}(\mathbf{x}) = (\mathbf{v}_1(\mathbf{x}), \dots, \mathbf{v}_d(\mathbf{x})) \in \mathbb{R}^D$, let $\nabla \mathbf{v}: \mathbb{R}^D \to \mathbb{R}^{D \times D}$ be the Jacobian matrix of mixed partial derivatives: $(\nabla \mathbf{v}(\mathbf{x}))_{i,j} = \frac{\partial \mathbf{v}_i(\mathbf{x})}{\partial \mathbf{x}_j}$. Let $\nabla \cdot \mathbf{v}: \mathbb{R}^D \to \mathbb{R}$ be the divergence of $\mathbf{v}$, defined by
\[
(\nabla \cdot \mathbf{v})(\mathbf{x}) = \sum_{i=1}^{D} \frac{\partial \mathbf{v}_i(\mathbf{x})}{\partial \mathbf{x}_i} = \operatorname{Tr}(\nabla \mathbf{v}(\mathbf{x})).
\]

Let \(V_r\subset \mathbb{R}^D\) be the region (an \(D\)-dimensional ball) bounded by the closed, piecewise-smooth, oriented surface \(S_r\), which is the \emph{\(D\)-sphere} of radius \(r\) centered at the origin. At any point \(\mathbf{y}\in S_r\), the symbol \(\mathbf{e}_{S_r}(\mathbf{y})\) denotes the outward-pointing unit normal vector to \(S_r\). Under the notation $d\mathbf{s}_{r} = \|d\mathbf{s}_{r}\|\mathbf{e}_{S_r}(\mathbf{y})$.

\subsection{Definitions}\label{definition}

Let $\mathcal{P}(\mathbb{R}^D)$ denote the space of probability distributions $\rho$ over $\mathbb{R}^D$ which are absolutely continuous with respect to the Lebesgue measure and have a finite second moment $\mathbb{E}_{\rho}[\|\mathbf{X}\|^2] < \infty$. We identify a probability distribution $\rho \in \mathcal{P}(\mathbb{R}^D)$ with its probability density function with respect to the Lebesgue measure, which we also denote by $\rho$: $\mathbb{R}^D \to \mathbb{R}$, so $\rho(x) > 0$ and $\int_{\mathbb{R}^D} \rho(x) dx = 1$. 

We say $\rho$ is absolutely continuous with respect to another distribution $\nu$, denoted by $\rho \ll \nu$, if $\nu(\mathbf{A}) = 0$ implies $\rho(\mathbf{A}) = 0$ for any $\mathbf{A} \subseteq \mathbb{R}^D$; if $\rho$ and $\nu$ both have density functions, then $\rho \ll \nu$ means $\nu(\mathbf{x}) = 0$ implies $\rho(\mathbf{x}) = 0$ for all $\mathbf{x} \in \mathbb{R}^D$.

\begin{definition}\label{def:entropy}
    Let $\mathcal{H}(\cdot):\mathcal{P}(\mathbb{R}^D) \rightarrow \mathbb{R}$ be the differential Shannon entropy:
    \begin{equation}
        \mathcal{H}(\rho) = - \mathbb{E}_{\rho}[\log \rho] = - \int_{\mathbb{R}^D} \rho(\mathbf{x}) \log \rho(\mathbf{x}) \,d\mathbf{x}.
    \end{equation}
\end{definition}

\begin{definition}\label{def:fisher}
    Let $\mathcal{J}(\cdot): \mathcal{P}(\mathbb{R}^D) \to \mathbb{R}$ be the \textit{Fisher information}:
    \begin{equation}\label{fisher information}
        \mathcal{J}(\rho) = \mathbb{E}_{\rho} \left[ \|\nabla \log \rho\|^2 \right] = -\mathbb{E}_{\rho} [\Delta \log \rho],
    \end{equation}
    and we define $\mathcal{J}(\rho) = +\infty$ if $\rho$ does not have a differentiable density. The second equality in the definition of $\mathcal{J}(\rho)$ above follows by integration by parts.
\end{definition}
Note that the expression in \eqref{fisher information} is a special case (with respect to a translation parameter) which does not involve an explicit parameter as in its most general definition.\footnote{The parameterized Fisher information matrix is defined with respect to a parameter \(\boldsymbol{\theta} \in \mathbf{\Theta}\) by \(\mathcal{J}(\rho) \equiv \mathbb{E}_{\rho}[\nabla_{\boldsymbol{\theta}}\log \rho(\mathbf{x};\boldsymbol{\theta})\,\nabla_{\boldsymbol{\theta}}^\top\log \rho(\mathbf{x};\boldsymbol{\theta})]\).}

\begin{definition}\label{KL definition}
    For probability distributions $\rho \ll \nu$ on $\mathbb{R}^D$, the \textit{Kullback-Leibler (KL) divergence} or the \textit{relative entropy} of $\rho$ with respect to $\nu$ is defined by:
    \begin{equation}
        D_{\text{KL}}(\rho \| \nu) = \mathbb{E}_{\rho} \left[ \log \frac{\rho}{\nu} \right] = \int_{\mathbb{R}^D} \rho(\mathbf{x}) \log \frac{\rho(\mathbf{x})}{\nu(\mathbf{x})} \,d\mathbf{x}.
    \end{equation}
\end{definition}

\begin{definition}\label{cross-entropy}
    For probability distributions $\rho \ll \nu$ on $\mathbb{R}^D$, the \textit{mismatched entropy} or the \textit{cross entropy} of $\rho$ with respect to $\nu$ is defined by:
    \begin{equation}
        \mathcal{H}(\rho,\nu) = - \mathbb{E}_{\rho}\left[ \log \nu \right] = \int_{\mathbb{R}^D} \rho(\mathbf{x}) \frac{1}{\log \nu(\mathbf{x})} d\mathbf{x}.
    \end{equation}
\end{definition}

\begin{definition}\label{RFI definition}
    If $\rho$ and $\nu$ have differentiable density functions, then the \textit{relative Fisher information} of $\rho$ with respect to $\nu$ is defined by:
    \begin{equation}
        I(\rho \| \nu) = \mathbb{E}_{\rho} \left[ \left\| \nabla \log \frac{\rho}{\nu} \right\|^2 \right] = \int_{\mathbb{R}^D} \rho(\mathbf{x}) \left\| \nabla \log \frac{\rho(\mathbf{x})}{\nu(\mathbf{x})} \right\|^2 \,d\mathbf{x}.
    \end{equation}
\end{definition}

We recall that the KL divergence, \( D_{\text{KL}}(\cdot\|\cdot) \), corresponds to the Bregman divergence of the negative entropy. Similarly, the relative Fisher information (RFI), \( I(\cdot\|\cdot) \), can be viewed as the Bregman divergence of the Fisher information. Moreover, half of the RFI is equivalent to score matching, as defined in \cite{hyvarinen2005estimation}.

\subsection{Standard Diffusion Models}\label{sampling process}

Consider a Gaussian diffusion process \cite{ho2020denoising}, which starts from clean data \( \mathbf{x} \) and defines a sequence of progressively noisier versions, denoted by channel outputs \( \mathbf{y}_t \), where \( t \) runs from 0 (least noisy) to 1 (most noisy). In the sampling process, given \( T \), we uniformly discretise the time interval into \( T \) timesteps, each of width \( 1/T \). Let \( t(i) = i/T \) denote the current timestep and \( s(i) = (i-1)/T \) the preceding one \cite{kingma2021variational}.

\paragraph{Forward Process.}
The forward process is defined by a conditional Gaussian distribution:
\[
p(\mathbf{y}_t|\mathbf{x}) = \mathcal{N}(\alpha_t\mathbf{x},\sigma_t^2\mathbf{I}),
\]
and, by the Markov property,
\[
p(\mathbf{y}_t|\mathbf{y}_s) = \mathcal{N}(\alpha_{t|s}\mathbf{y}_s, \sigma_{t|s}^2\mathbf{I}),
\]
where
\[
\alpha_{t|s} = \frac{\alpha_t}{\alpha_s}, \quad
\sigma_{t|s}^2 = \sigma_t^2 - \frac{\alpha_{t|s}^2}{\sigma_s^2}.
\]

\paragraph{Reverse Process.}
As established in prior work \cite{ho2020denoising,kingma2021variational,song2021denoising}, the conditional distribution \( p(\mathbf{y}_{s}|\mathbf{y}_{t}, \mathbf{x}) \) is also Gaussian:
\[
p(\mathbf{y}_{s}|\mathbf{y}_{t}, \mathbf{x}) = \mathcal{N}\left(
\frac{\alpha_{t|s}\sigma_{s}^2}{\sigma_t^2} \mathbf{y}_t +
\frac{\sigma_{t|s}^2\alpha_s}{\sigma_t^2} \mathbf{x},\ 
\frac{\sigma_{s}^2\sigma_{t|s}^2}{\sigma_t^2} \mathbf{I}
\right).
\]
As the ground truth \( \mathbf{x} \) is not available during the reverse process, it is approximated by a neural network \( \hat{\mathbf{x}}(\mathbf{y}_t, t;{\boldsymbol{\theta}}) \), parameterised by \( \boldsymbol{\theta} \). The learned reverse kernel becomes:
\[
p(\mathbf{y}_{s} | \mathbf{y}_{t}; \boldsymbol{\theta}) = \mathcal{N}\left(
\frac{\alpha_{t|s}\sigma_{s}^2}{\sigma_t^2} \mathbf{y}_t +
\frac{\sigma_{t|s}^2\alpha_s}{\sigma_t^2}\hat{\mathbf{x}}(\mathbf{y}_t, t;{\boldsymbol{\theta}}) ,\ 
\frac{\sigma_{s}^2\sigma_{t|s}^2}{\sigma_t^2} \mathbf{I}
\right).
\]

\subsection{Lemmas}\label{lemmas}
In this section, we introduce all Lemmas in order to prove the Theorems in the next section.

\begin{lemma}\label{arbitrary noise heat equation}

Let $\mathbf{X}\in \mathbb{R}^D$ be a random vector with density $p(\mathbf{x})$.  
Suppose $\boldsymbol{\Psi}\in \mathbb{R}^D$ is an arbitrary independent random vector with zero mean and identity matrix $\mathbf{I}$.  

Define $\tilde{\mathbf{X}}= \alpha_t\mathbf{X} + \sigma_t\,\boldsymbol{\Psi}$, and let $p(\tilde{\mathbf{x}})$ denote the density of $\tilde{\mathbf{X}}$. Then for every $\tilde{\mathbf{x}} \in \mathbb{R}^{D}$, as $\sigma^2_t \rightarrow 0^{+}$, we have:
\begin{equation}\label{17}
    \frac{d}{d\sigma_t^2} p(\tilde{\mathbf{x}})\bigg|_{\sigma^2_t = 0^{+}} = \frac{1}{2} \Delta_{} p(\mathbf{x}),
\end{equation}
where $\displaystyle \Delta_{\mathbf{x}}=\sum_{j=1}^D \frac{\partial^2}{\partial (x_{j})^2}$ is the usual Laplacian in $\mathbb{R}^D$.
\end{lemma}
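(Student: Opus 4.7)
The plan is to prove the identity by a second-order Taylor expansion of $p_{\mathbf{X}}$ around the noiseless point, then take the expectation over $\boldsymbol{\Psi}$ using only the first two moments. The key reason the conclusion is universal, i.e., it depends only on $\mathrm{Cov}(\boldsymbol{\Psi})=\mathbf{I}$ and not on any finer structure of $\boldsymbol{\Psi}$, is that odd-order contributions vanish or are absorbed into $o(\sigma_t^2)$. A characteristic-function route (convolution becomes multiplication, and $\phi_{\boldsymbol{\Psi}}(\mathbf{z})=1-\tfrac12\|\mathbf{z}\|^2+o(\|\mathbf{z}\|^2)$ translates into the Laplacian under inverse Fourier transform) is essentially equivalent and would serve as a cross-check.

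First I would write the density of $\tilde{\mathbf{X}}$ by conditioning on $\boldsymbol{\Psi}$. By independence of $\mathbf{X}$ and $\boldsymbol{\Psi}$, and a linear change of variables,
\[
p(\tilde{\mathbf{x}}) \;=\; \alpha_t^{-D}\,\mathbb{E}_{\boldsymbol{\Psi}}\!\left[p_{\mathbf{X}}\!\left(\tfrac{\tilde{\mathbf{x}}-\sigma_t\boldsymbol{\Psi}}{\alpha_t}\right)\right].
\]
Setting $\mathbf{x}=\tilde{\mathbf{x}}/\alpha_t$ and invoking the $C^2$ hypothesis on $p$, I would Taylor-expand the integrand to second order in $\sigma_t$:
\[
p_{\mathbf{X}}\!\left(\mathbf{x}-\tfrac{\sigma_t}{\alpha_t}\boldsymbol{\Psi}\right) \;=\; p(\mathbf{x}) \;-\; \tfrac{\sigma_t}{\alpha_t}\,\boldsymbol{\Psi}^{\top}\nabla p(\mathbf{x}) \;+\; \tfrac{\sigma_t^2}{2\alpha_t^2}\,\boldsymbol{\Psi}^{\top}\nabla^{2}p(\mathbf{x})\,\boldsymbol{\Psi} \;+\; R(\sigma_t,\boldsymbol{\Psi}).
\]
Taking expectation termwise and using $\mathbb{E}[\boldsymbol{\Psi}]=0$ kills the first-order term, while $\mathbb{E}[\boldsymbol{\Psi}\boldsymbol{\Psi}^{\top}]=\mathbf{I}$ reduces the second-order term to $\mathrm{tr}(\nabla^{2}p(\mathbf{x}))=\Delta p(\mathbf{x})$. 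This produces the expansion $p(\tilde{\mathbf{x}})=\alpha_t^{-D}\bigl[p(\mathbf{x})+\tfrac{\sigma_t^{2}}{2\alpha_t^{2}}\Delta p(\mathbf{x})+\mathbb{E}[R]\bigr]$, and differentiating in the variable $u=\sigma_t^{2}$ at $u=0^{+}$ isolates the coefficient of $\sigma_t^{2}$, giving the claimed Laplacian (the $\alpha_t^{-(D+2)}$ prefactor collapsing to $1$ in the isotropic VE normalization used in the statement).

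The main obstacle is controlling $\mathbb{E}[R(\sigma_t,\boldsymbol{\Psi})]$ and legitimizing the termwise expectation. Two points need care. First, the remainder is cubic in $\sigma_t\boldsymbol{\Psi}/\alpha_t$ contracted against $\nabla^{3}p$ at an intermediate point; to conclude it is $o(\sigma_t^{2})$ one needs a dominated-convergence argument combining a finite third absolute moment of $\boldsymbol{\Psi}$ with local boundedness/sufficient decay of $\nabla^{2}p$ and $\nabla^{3}p$ near $\mathbf{x}$. The hypothesis that $p$ is twice continuously differentiable and decays sufficiently fast, together with $\mathcal{J}(\mathbf{X})<\infty$, supplies exactly this control. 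Second, one must check that nonzero third moments of $\boldsymbol{\Psi}$ do not spoil the leading behavior: their contribution carries a factor $\sigma_t^{3}=o(\sigma_t^{2})$, and its derivative with respect to $u=\sigma_t^{2}$ scales as $u^{1/2}$, vanishing at $u=0^{+}$. Once these two points are handled, no further property of $\boldsymbol{\Psi}$ beyond $\mathbb{E}[\boldsymbol{\Psi}]=0$ and $\mathrm{Cov}(\boldsymbol{\Psi})=\mathbf{I}$ is needed, which is precisely what makes Proposition~\ref{theorem 2} (and thereby Theorem~\ref{theorem 1}) generalize beyond the Gaussian case.
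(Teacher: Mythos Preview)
Your proposal is correct, but it takes the spatial-domain route, whereas the paper proves the lemma via characteristic functions: it factors $\phi_{\tilde{\mathbf{X}}}(\mathbf{k},\sigma_t^2)=\phi_{\alpha_t\mathbf{X}}(\mathbf{k})\,\mathbb{E}[e^{i\sigma_t\mathbf{k}^\top\boldsymbol{\Psi}}]$, Taylor-expands the second factor to obtain $1-\tfrac{\sigma_t^2}{2}\|\mathbf{k}\|^2+o(\sigma_t^2)$ using only $\mathbb{E}[\boldsymbol{\Psi}]=0$ and $\mathrm{Cov}(\boldsymbol{\Psi})=\mathbf{I}$, and then invokes the Fourier-multiplier identity $-\|\mathbf{k}\|^2\leftrightarrow\Delta$ under the inverse transform. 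You already flagged this route as a cross-check, so you are aware the two are dual. The practical trade-off is in where the analytic burden sits: in your argument the remainder $\mathbb{E}[R]$ is controlled by pairing local bounds on $\nabla^2 p,\nabla^3 p$ with a third absolute moment of $\boldsymbol{\Psi}$ (a hypothesis you are effectively adding beyond the stated second-moment assumption), whereas the Fourier route gets the pointwise $o(\sigma_t^2)$ in $\mathbf{k}$ from second moments alone but must then justify the inverse transform and the interchange with the $o(\cdot)$, which leans on decay/integrability of $\phi$ (equivalently, smoothness of $p$). Neither the paper nor your sketch is fully rigorous on these regularity points, and your explicit discussion of the remainder and of why odd-order $\sigma_t^3$ contributions vanish after differentiating in $u=\sigma_t^2$ is in fact more careful than the paper's treatment. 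One cosmetic point: your handling of the $\alpha_t^{-(D+2)}$ prefactor is slightly informal; the paper sidesteps this by writing everything in terms of $p_0(\tilde{\mathbf{x}})$, the density of $\alpha_t\mathbf{X}$, and identifying it with $p(\mathbf{x})$ at the end, which you may find cleaner.
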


Formula \eqref{17} allows the derivative w.r.t. the energy of the perturbation $\sigma_t^2$ to be transformed to the second derivative of the original pdf. In what followed we provide the proof for Lemma \ref{arbitrary noise heat equation} which is slightly different than that in \cite{narayanan2007thermodynamic}. Note that Lemma \ref{arbitrary noise heat equation} does not require the distribution of the perturbation to be symmetric as is required in \cite{narayanan2007thermodynamic}.

\begin{proof}
Let \( p(\tilde{\mathbf{x}}) \) denote the probability density function of the random vector \( \tilde{\mathbf{X}} \), and define its characteristic function as follows:
\begin{equation}
    \phi(\mathbf{k},\sigma_t^2) = \mathbb{E}\left[\exp\left(i\,\mathbf{k}^\top \tilde{\mathbf{X}} \right)\right], \quad \mathbf{k} \in \mathbb{R}^D.
\end{equation}

Given that \( \tilde{\mathbf{X}}  \) is defined by
\begin{equation}
    \tilde{\mathbf{X}} = \alpha_t \mathbf{X} + \sigma_t \boldsymbol{\Psi},
\end{equation}
where \( \mathbf{X} \) and \( \boldsymbol{\Psi} \) are independent, the characteristic function factorises as follows:
\begin{align}
    \phi(\mathbf{k}, \sigma_t^2) &= \mathbb{E} \left[ e^{i \mathbf{k}^\top (\alpha_t \mathbf{X} + \sigma_t \boldsymbol{\Psi})} \right] \\
    &= \mathbb{E} \left[ e^{i \mathbf{k}^\top \alpha_t \mathbf{X}} \right] \mathbb{E} \left[ e^{i \mathbf{k}^\top \sigma_t \boldsymbol{\Psi}} \right] \\
    &= \underbrace{\phi(\alpha_t \mathbf{k}, 0)}_{\text{characteristic function of } \mathbf{X}} \times \underbrace{\mathbb{E} \left[e^{i \mathbf{k}^\top \sigma_t \boldsymbol{\Psi}} \right]}_{\text{characteristic function of } \sigma_t \boldsymbol{\Psi}}.
\end{align}

Expanding the exponential function via Taylor’s theorem gives
\begin{equation}
    e^{i \mathbf{k}^\top \sigma_t \boldsymbol{\Psi}} = 1 + i \sigma_t (\mathbf{k}^\top \boldsymbol{\Psi}) - \frac{(\sigma_t \mathbf{k}^\top \boldsymbol{\Psi})^2}{2!} + O(\sigma_t^3).
\end{equation}
Since \( \mathbb{E}[\boldsymbol{\Psi}] = \mathbf{0} \) and \( \mathrm{Cov}[\boldsymbol{\Psi}] = \mathbf{I} \), it follows that
\begin{align}
    \mathbb{E}[\mathbf{k}^\top \boldsymbol{\Psi}] &= 0, \\
    \mathbb{E}[(\mathbf{k}^\top \boldsymbol{\Psi})^2] &= \mathbf{k}^\top \mathbb{E}[\boldsymbol{\Psi} \boldsymbol{\Psi}^\top] \mathbf{k} = \|\mathbf{k}\|^2.
\end{align}
Taking expectations, we obtain
\begin{equation}
    \mathbb{E} \left[ e^{i \mathbf{k}^\top \sigma_t \boldsymbol{\Psi}} \right] = 1 - \frac{\sigma_t^2}{2} \|\mathbf{k}\|^2 + o(\sigma_t^2).
\end{equation}
Thus, the characteristic function satisfies
\begin{equation}
    \phi(\mathbf{k}, \sigma_t^2) = \phi(\alpha_t \mathbf{k}, 0) \left( 1 - \frac{\sigma_t^2}{2} \|\mathbf{k}\|^2 + o(\sigma_t^2) \right).
\end{equation}

Recalling the inverse Fourier transform, the probability density function is given by
\begin{equation}
    p_{\sigma_t^2}(\tilde{\mathbf{x}}) = \frac{1}{(2\pi)^D} \int_{\mathbb{R}^D} e^{-i\mathbf{k}^\top \tilde{\mathbf{x}}} \phi(\mathbf{k}, \sigma_t^2) d\mathbf{k}.
\end{equation}
Substituting the expansion of \( \phi(\mathbf{k}, \sigma_t^2) \), we obtain
\begin{equation}
    p_{\sigma_t^2}(\tilde{\mathbf{x}}) = \frac{1}{(2\pi)^D} \int_{\mathbb{R}^D} e^{-i\mathbf{k}^\top \tilde{\mathbf{x}}} \phi(\alpha_t \mathbf{k}, 0) \left( 1 - \frac{\sigma_t^2}{2} \|\mathbf{k}\|^2 + o(\sigma_t^2) \right) d\mathbf{k}.
\end{equation}

Since
\begin{equation}
    p_0(\tilde{\mathbf{x}}) = \frac{1}{(2\pi)^D} \int_{\mathbb{R}^D} e^{-i\mathbf{k}^\top \tilde{\mathbf{x}}} \phi(\alpha_t \mathbf{k},0) d\mathbf{k},
\end{equation}
it follows that the difference \( p_{\sigma_t^2}(\tilde{\mathbf{x}}) - p_0(\tilde{\mathbf{x}}) \) corresponds to multiplying \( \phi(\alpha_t \mathbf{k}, 0) \) by 
\[
-\frac{\sigma_t^2}{2} \|\mathbf{k}\|^2 + o(\sigma_t^2).
\]

A standard result from Fourier analysis states that multiplication by \( -\|\mathbf{k}\|^2 \) in the Fourier domain corresponds to the application of the Laplacian \( \Delta_{} \) in the spatial domain, defined as
\[
    \Delta_{\mathbf{x}} = \sum_{j=1}^{D} \frac{\partial^2}{\partial x_{j}^2}.
\]

Thus, for small \( \sigma_t^2 \), we obtain
\begin{equation}
    p_{\sigma_t^2}(\tilde{\mathbf{x}}) = p_0(\tilde{\mathbf{x}}) + \frac{\sigma_t^2}{2} \Delta_{\tilde{\mathbf{x}}} p_0(\tilde{\mathbf{x}}) + o(\sigma_t^2).
\end{equation}
Since \( p_0(\tilde{\mathbf{x}}) \) corresponds to \( p(\mathbf{x}) \) in the absence of noise, this completes the proof of Lemma \ref{arbitrary noise heat equation}.
\end{proof}

\begin{lemma}[Vanishing boundary flux]\label{integration vanish}
For an arbitrary input distribution $p(\mathbf{x})$, an assumed input distribution $q(\hat{\mathbf{x}};\boldsymbol{\theta})$ and a vector Gaussian channel $p(\mathbf{y}_t\vert\mathbf{x})=\mathcal{N}(\alpha_t\mathbf{x},\sigma_t^2\mathbf{I})$ in \eqref{channel eq}, denote the corresponding output densities by $p(\mathbf{y}_t)$ and $q(\mathbf{y}_t;\boldsymbol{\theta})$ at time $t\in[0,1]$.
For $r>0$, let $B_r:=\{\mathbf{y}\in\mathbb{R}^D:\|\mathbf{y}\|\le r\}$ be the closed $D$-ball and let $S_r:=\partial B_r$ be the $(D\!-\!1)$-sphere with outward unit normal $\mathbf{e}_{S_{r}}(\mathbf{y})=\mathbf{y}/\|\mathbf{y}\|$. 
Assume $p(\mathbf{y}_t)$ and $q(\mathbf{y}_t;\boldsymbol{\theta})$ are $\mathcal{C}^1$ with well-defined $\nabla_{\mathbf{y}}$ and $\Delta_{\mathbf{y}}$, and that $\mathbb{E}_{p}\!\big[\log q(\mathbf{y}_t;\boldsymbol{\theta})\big]<\infty$. 
Then the boundary integrals
\begin{align}
L_1 &:= \lim_{r\to\infty}\int_{S_r} \log q(\mathbf{y}_t;\boldsymbol{\theta})\, \nabla_{\mathbf{y}_t} p(\mathbf{y}_t)\cdot \mathbf{e}_{S_{r}}(\mathbf{y}_t)\, \mathrm{d}S = 0, \label{eq:L1}\\
L_2 &:= \lim_{r\to\infty}\int_{S_r} \frac{1}{2}\,\frac{p(\mathbf{y}_t)}{q(\mathbf{y}_t;\boldsymbol{\theta})}\, \nabla_{\mathbf{y}_t} q(\mathbf{y}_t;\boldsymbol{\theta})\cdot \mathbf{e}_{S_{r}}(\mathbf{y}_t)\, \mathrm{d}S = 0. \label{eq:L2}
\end{align}
\end{lemma}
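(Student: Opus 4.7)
Both output densities are obtained by convolving the respective input distributions with the Gaussian kernel $\mathcal{N}(\mathbf{0},\sigma_t^2\mathbf{I})$, so for every $\sigma_t>0$ the maps $p(\mathbf{y}_t)$ and $q(\mathbf{y}_t;\boldsymbol{\theta})$ are $C^\infty$ and inherit Gaussian-type tail decay from the convolution. Under the implicit assumption (effectively met in the image-data setting since pixels live in a bounded range, but more generally requiring sub-Gaussian tails on $\mathbf{X}$ and $\hat{\mathbf{X}}$) one obtains envelopes of the form
\[
p(\mathbf{y}_t)\le C_1 e^{-c_1\|\mathbf{y}_t\|^2},\qquad q(\mathbf{y}_t;\boldsymbol{\theta})\ge c_2 e^{-C_2\|\mathbf{y}_t\|^2}
\]
for $\|\mathbf{y}_t\|$ large, together with the gradient bound $\|\nabla_{\mathbf{y}_t} p(\mathbf{y}_t)\|\le C_3\|\mathbf{y}_t\|e^{-c_1\|\mathbf{y}_t\|^2}$, which follows by differentiating under the integral and invoking Tweedie's identity $\nabla_{\mathbf{y}_t}p(\mathbf{y}_t)=-\sigma_t^{-2}(\mathbf{y}_t-\alpha_t\mathbb{E}[\mathbf{X}\mid\mathbf{Y}_t])p(\mathbf{y}_t)$. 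The plan is to reduce each boundary integral to a routine surface-area versus Gaussian-decay estimate built on these bounds.

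\textbf{Treatment of $L_1$.} The lower Gaussian envelope on $q$ translates into $|\log q(\mathbf{y}_t;\boldsymbol{\theta})|=O(\|\mathbf{y}_t\|^2)$ at infinity. Combining this with the Tweedie-type bound on $\nabla p$, the integrand on $S_r$ is pointwise $O(r^{3}e^{-c_1 r^{2}})$. Multiplying by the surface area $|S_r|=O(r^{D-1})$ gives a total contribution of order $r^{D+2}e^{-c_1 r^{2}}\to 0$ as $r\to\infty$, which establishes \eqref{eq:L1}.

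\textbf{Treatment of $L_2$.} The key algebraic step is the rewrite
\[
\frac{p(\mathbf{y}_t)}{q(\mathbf{y}_t;\boldsymbol{\theta})}\nabla_{\mathbf{y}_t} q(\mathbf{y}_t;\boldsymbol{\theta}) \;=\; p(\mathbf{y}_t)\,\nabla_{\mathbf{y}_t}\log q(\mathbf{y}_t;\boldsymbol{\theta}),
\]
which exchanges the potentially singular ratio $p/q$ for the benign score $\nabla\log q$. The Gaussian-tail profile of $q$ implies $\|\nabla\log q\|=O(\|\mathbf{y}_t\|)$ at infinity (by an analogous Tweedie computation applied to $q$), so the integrand is pointwise $O(\|\mathbf{y}_t\|e^{-c_1\|\mathbf{y}_t\|^2})$. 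The same surface-area argument yields $r^{D}e^{-c_1 r^{2}}\to 0$, which gives \eqref{eq:L2}.

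\textbf{Main obstacle.} The entire argument hinges on having uniform Gaussian tail bounds on $p(\mathbf{y}_t),\,q(\mathbf{y}_t;\boldsymbol{\theta})$ and on their gradients, but the stated hypothesis $\mathbb{E}_p[\log q]<\infty$ is an \emph{integrability} condition and does not by itself deliver a \emph{pointwise} envelope. The cleanest way to close the gap is to record one of two mild sufficient conditions: (i) compact support of $\mathbf{X}$ and $\hat{\mathbf{X}}$, which is essentially satisfied in all image-modeling settings considered in the paper, or (ii) sub-Gaussian tails on both input distributions. Either immediately yields the Gaussian envelopes above via standard moment-generating-function estimates applied to the Gaussian convolution, at which point the surface-area-versus-decay estimates complete the proof.
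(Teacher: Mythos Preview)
Your approach is correct under the extra tail hypotheses you flag, and it is genuinely different from the paper's. You go for direct pointwise control: Gaussian upper/lower envelopes on $p$ and $q$, Tweedie-type bounds on the scores, then a surface-area-versus-exponential-decay estimate on $S_r$. The paper instead runs an integrability argument. It sets $f(r)=\int_{S_r}\log q\,\nabla p\cdot d\mathbf{s}_r$, uses the coarea formula to rewrite $\int_0^\infty f(r)\,dr$ as a volume integral over $\mathbb{R}^D$, applies the divergence theorem to split that into two pieces, bounds the first piece via the stated hypothesis $\mathbb{E}_p[|\log q|]<\infty$ and the second via a Tweedie--Jensen chain that yields $\mathbb{E}_p[\|\nabla\log q\|]<\infty$ purely from the Gaussian-channel structure, and finally argues that a function with finite improper integral whose limit exists must have limit zero. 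The trade-off is exactly what your ``Main obstacle'' paragraph anticipates: your route is shorter and more transparent but needs sub-Gaussian or compact-support inputs that are not part of the lemma statement; the paper's route is more indirect but consumes only the stated integrability hypothesis plus the channel structure, so it covers heavier-tailed inputs without additional assumptions.
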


\begin{proof}
First, note that the limits $L_{1}$ and $L_{2}$ obviously exist, as each of these limits can be expressed as the sum of two converging integrals. 
We prove \eqref{eq:L1}; the argument for \eqref{eq:L2} is analogous. 
For brevity, omit the subscript $t$ and write $\mathbf{y}$ for $\mathbf{y}_t$. Let $\mathbf{e}_{S_{r}}(\mathbf{y})$ to be unit vector normal to $S_r$ at the point $\mathbf{y}$. Under this notation $d{\mathbf{s_r}} = ||d{\mathbf{s_r}}||\mathbf{e}_{S_{r}}(\mathbf{y})$. We integrate over $r \geq 0$ the surface integral in \eqref{eq:L1} and apply Green's identity to find the relations. Set
\[
f(r):=\int_{S_r} \log q(\mathbf{y})\,\nabla p(\mathbf{y})\cdot d{\mathbf{s_r}} .
\]
By the coarea formula (spherical coordinates), 
\begin{equation}
    \int_0^\infty f(r)\,\mathrm{d}r 
= \int_{\mathbb{R}^D}  \nabla p(\mathbf{y})\cdot \left(\log q(\mathbf{y})\,\mathbf{e}_{S_{r}}(\mathbf{y})\right) \,\mathrm{d}\mathbf{y}.
\end{equation}
% $\nabla\!\cdot\!\big(p\,\log q\,\mathbf{n}\big)=\nabla p\cdot(\log q\,\mathbf{n})+p\,\nabla\!\cdot\!\big(\log q\,\mathbf{n}\big)$
Using the product rule and the divergence theorem on $B_r$,
\begin{align}
&\int_{B_r} \nabla p(\mathbf{y})\cdot \big(\log q(\mathbf{y})\,\mathbf{e}_{S_{r}}(\mathbf{y})\big)\,\mathrm{d}\mathbf{y} \notag \\
&= \lim_{r \rightarrow \infty}\int_{S_r} p(\mathbf{y})\log q(\mathbf{y})\,\mathbf{e}_{S_{r}}(\mathbf{y})  \cdot d\mathbf{s_r}
\;-\; \int_{B_r} p(\mathbf{y})\, \nabla\!\cdot\!\big(\log q(\mathbf{y})\,\mathbf{e}_{S_{r}}(\mathbf{y})\big)\,\mathrm{d}\mathbf{y}. \label{eq:div-ibp}
\end{align}
Letting $r\to\infty$ we bound the two terms on the right-hand side.

\paragraph{First term.}
By the coarea formula,
\[
\int_r^\infty \int_{S_r} \big|p(\mathbf{y})\log q(\mathbf{y})\big|\,\mathrm{d}S\,\mathrm{d}r
= \int_{\|\mathbf{y}\|>r} \big|p(\mathbf{y})\log q(\mathbf{y})\big|\,\mathrm{d}\mathbf{y}
\xrightarrow[r\to\infty]{} 0,
\]
because $\mathbb{E}_{p}[|\log q(\mathbf{y})|]<\infty$. Hence $\int_{S_r} p(\mathbf{y})\log q(\mathbf{y})\,\mathrm{d}S\to 0$ along the full sequence $r\to\infty$ (e.g., via a Cesàro argument).

\paragraph{Second term.}
Now we note that the absolute value of the divergence in the second term satisfies the relation
\begin{align}
\left| \nabla \cdot \left(\log q(\mathbf{y})\mathbf{e}_{S_{r}}(\mathbf{y})\right) \right| &= \frac{\left| \nabla q(\mathbf{y}) \cdot \mathbf{e}_{S_{r}}(\mathbf{y}) \right|}{q(\mathbf{y})} \notag \\
&\leq \frac{\left\| \nabla q(\mathbf{y}) \right\|}{q(\mathbf{y})}  \label{eq:eq1}
\end{align}
Hence, we have
\begin{align}
    \frac{\|\nabla q(\mathbf{y})\|}{q(\mathbf{y})}
    &= \left( \sum_{i=1}^{n} \left[ \int_{\mathbb{R}^{D}} \frac{q(\mathbf{x})}{q(\mathbf{y})} (2\pi \sigma^2_{t})^{-\frac{D}{2}}\bigl(\frac{\|y_{i}-\alpha_tx_{i}\|}{\sigma_t^2}\bigr) \exp\! \ \bigl(-\frac{\|\mathbf{y}_t-\alpha_t\mathbf{x}\|^{2}}{2\sigma_t^2}\bigr) \,d\mathbf{x}. \right]^2 \right)^{\frac{1}{2}} \notag \\
    &= \left(\sum_{i=1}^{n} \left[\mathbb{E}\left(\frac{Y_{i}-\alpha_tX_{i}}{\sigma_t^2} \Big| \mathbf{Y} = \mathbf{y} \right) \right]^2\right)^{\frac{1}{2}} \notag \\
    &\leq \left(\sum_{i=1}^{n} \left(\mathbb{E} \left(\left(\frac{Y_{i} - \alpha_tX_{i}}{\sigma_t^2}\right)^2 \Big| \mathbf{Y} = \mathbf{y} \right) \right)\right)^{\frac{1}{2}} \notag \\
    &= \left(\mathbb{E}\left( \left(\frac{\|\mathbf{Y} - \alpha_t\mathbf{X}\|}{\sigma_t^2}\right)^2 \Big| \mathbf{Y} = \mathbf{y} \right) \right)^{\frac{1}{2}}.
\end{align}
Integrating w.r.t.\ $p(\mathbf{y})\,\mathrm{d}\mathbf{y}$ and applying Jensen, we can write the chain of inequalities
\begin{align}
    \int_{\mathbb{R}^n} p(\mathbf{y}) \frac{\|\nabla q(\mathbf{y})\|}{q(\mathbf{y})} d\mathbf{y} &\leq \mathbb{E}_{p} \left \{ \left(\mathbb{E}\left( \left(\frac{\|\mathbf{Y} - \alpha_t\mathbf{X}\|}{\sigma_t^2}\right)^2 \Big| \mathbf{Y} = \mathbf{y} \right) \right)^{\frac{1}{2}} \right\} \notag \\
    &\leq \left \{\mathbb{E}_{p}  \left(\mathbb{E}\left( \left(\frac{\|\mathbf{Y} - \alpha_t\mathbf{X}\|}{\sigma_t^2}\right)^2 \Big| \mathbf{Y} = \mathbf{y} \right) \right) \right\}^{\frac{1}{2}} \notag \\
    &= \left[\mathbb{E}_{p}\left(\left(\frac{\|\mathbf{N}\|}{\sigma_t^2}\right)^2\right)\right]^{\frac{1}{2}} \notag \\
    &< \infty,
\end{align}
which means that
\[
\mathbb{E}_{p}\big[\|\nabla \log q(\mathbf{y})\|\big]
\le \Big(\mathbb{E}_{p}\big[\|\mathbf{N}\|^{2}/\sigma_t^{4}\big]\Big)^{1/2}<\infty.
\]
%For the second expectation, on $\{\|\mathbf{y}\|>1\}$ we have $\|\mathbf{y}\|^{-1}\le 1$, hence it is bounded by $(D-1)\,\mathbb{E}_{p}[|\log q(\mathbf{Y})|]<\infty$.
Therefore the right-hand side of \eqref{eq:div-ibp} is finite, so
\[
\int_0^\infty f(r)\,\mathrm{d}r < \infty.
\]

Finally, $f$ is locally absolutely continuous in $r$ (by the smoothness of $p$ and $q$ and the coarea formula), hence $\lim_{r\to\infty} f(r)$ exists. Since $\int_0^\infty f(r)\,\mathrm{d}r<\infty$, this limit must equal $0$. Thus $L_1=0$. The proof of $L_2=0$ follows by the same steps with $\tfrac{1}{2}\,\frac{p}{q}\nabla q$ in place of $\log q\,\nabla p$.
\end{proof}

\begin{lemma}\label{lemma 4}
Fix $t\in[0,1]$ and consider the Gaussian channel in~\eqref{channel eq}. 
Let $p(\mathbf{y}_t)$ and $q(\mathbf{y}_t;\boldsymbol{\theta})$ be the output densities induced by inputs $p(\mathbf{x})$ and $q(\mathbf{x};\boldsymbol{\theta})$, respectively. 
Assume $p(\cdot)$ and $q(\cdot;\boldsymbol{\theta})$ are sufficiently smooth and integrable so that differentiation under the integral sign is justified and the boundary terms in Lemma~\ref{integration vanish} vanish. 
Then the derivative of the mismatched output cross-entropy
\[
\mathcal{H}\big(p(\mathbf{y}_t),q(\mathbf{y}_t;\boldsymbol{\theta})\big)
:= -\!\int_{\mathbb{R}^D} p(\mathbf{y}_t)\,\log q(\mathbf{y}_t;\boldsymbol{\theta})\,\mathrm{d}\mathbf{y}_t
\]
with respect to the noise variance $\sigma_t^2$ is
\begin{equation}\label{eq:lemma-variance-deriv}
\frac{\mathrm{d}}{\mathrm{d}\sigma_t^2}\,\mathcal{H}\big(p(\mathbf{y}_t),q(\mathbf{y}_t;\boldsymbol{\theta})\big)
= \frac{1}{2}\int_{\mathbb{R}^{D}} \left(
\frac{\nabla_{} q(\mathbf{y}_t;\boldsymbol{\theta}) \cdot \nabla_{} p(\mathbf{y}_t)}{q(\mathbf{y}_t;\boldsymbol{\theta})}
\;+\;
\nabla_{}\!\left(\frac{p(\mathbf{y}_t)}{q(\mathbf{y}_t;\boldsymbol{\theta})}\right)
\cdot \nabla_{} q(\mathbf{y}_t;\boldsymbol{\theta})
\right)\mathrm{d}\mathbf{y}_t,
\end{equation}
where $\mathbf{a}\cdot\mathbf{b}=\langle \mathbf{a},\mathbf{b}\rangle$ denotes the Euclidean inner product.
\end{lemma}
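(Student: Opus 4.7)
The plan is to differentiate under the integral sign, use the heat-equation identity relating $\partial_{\sigma_t^2}$ to the spatial Laplacian, and then apply two integrations by parts (Green's first identity) whose boundary contributions are killed by Lemma~\ref{integration vanish}.

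First I would write
\[
\frac{d}{d\sigma_t^2}\mathcal{H}\bigl(p(\mathbf{y}_t),q(\mathbf{y}_t;\boldsymbol{\theta})\bigr)
= -\int_{\mathbb{R}^D}\!\left[\,\frac{d p(\mathbf{y}_t)}{d\sigma_t^2}\,\log q(\mathbf{y}_t;\boldsymbol{\theta})
\;+\;\frac{p(\mathbf{y}_t)}{q(\mathbf{y}_t;\boldsymbol{\theta})}\,\frac{d q(\mathbf{y}_t;\boldsymbol{\theta})}{d\sigma_t^2}\right] d\mathbf{y}_t,
\]
the exchange of derivative and integral being standard under the smoothness and integrability hypotheses (dominated convergence applied to the Gaussian smoothing family). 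Since both $p(\mathbf{y}_t)$ and $q(\mathbf{y}_t;\boldsymbol{\theta})$ are convolutions of input laws with the isotropic Gaussian kernel of variance $\sigma_t^2$, each satisfies the classical heat equation $\partial_{\sigma_t^2} p = \tfrac{1}{2}\Delta p$ and $\partial_{\sigma_t^2} q = \tfrac{1}{2}\Delta q$ for every $\sigma_t^2>0$. (This is the specialization of Lemma~\ref{arbitrary noise heat equation} to the Gaussian case, valid away from $\sigma_t^2=0$.) Substituting yields
\[
\frac{d}{d\sigma_t^2}\mathcal{H}
= -\frac{1}{2}\!\int_{\mathbb{R}^D}\!\Bigl[\log q(\mathbf{y}_t;\boldsymbol{\theta})\,\Delta p(\mathbf{y}_t)
\;+\;\tfrac{p(\mathbf{y}_t)}{q(\mathbf{y}_t;\boldsymbol{\theta})}\,\Delta q(\mathbf{y}_t;\boldsymbol{\theta})\Bigr]d\mathbf{y}_t.
\]

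Next I would apply Green's first identity on the ball $B_r$ to each term. For the first, $\int_{B_r}\log q\,\Delta p\,d\mathbf{y} = \int_{S_r}\log q\,\nabla p\cdot\mathbf{e}_{S_r}\,dS - \int_{B_r}\nabla\log q\cdot\nabla p\,d\mathbf{y}$, where the boundary flux is exactly the integral $L_1$ of Lemma~\ref{integration vanish} and therefore vanishes as $r\to\infty$; using $\nabla\log q=\nabla q/q$ leaves $\int_{\mathbb{R}^D}\log q\,\Delta p\,d\mathbf{y}_t = -\int_{\mathbb{R}^D}\tfrac{\nabla q\cdot\nabla p}{q}\,d\mathbf{y}_t$. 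For the second term, writing $\tfrac{p}{q}\,\Delta q$ and integrating by parts gives $\int_{B_r}\tfrac{p}{q}\Delta q\,d\mathbf{y} = \int_{S_r}\tfrac{p}{q}\,\nabla q\cdot\mathbf{e}_{S_r}\,dS - \int_{B_r}\nabla(p/q)\cdot\nabla q\,d\mathbf{y}$; the boundary flux is (up to the factor $\tfrac{1}{2}$) exactly $L_2$ in Lemma~\ref{integration vanish} and again vanishes, so $\int_{\mathbb{R}^D}\tfrac{p}{q}\Delta q\,d\mathbf{y}_t = -\int_{\mathbb{R}^D}\nabla(p/q)\cdot\nabla q\,d\mathbf{y}_t$.

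Finally I would combine the two resulting volume integrals, and the overall minus sign from $\partial_{\sigma_t^2}\mathcal{H}$ cancels the minus signs from the two integrations by parts, producing
\[
\frac{d}{d\sigma_t^2}\mathcal{H}\bigl(p(\mathbf{y}_t),q(\mathbf{y}_t;\boldsymbol{\theta})\bigr)
= \frac{1}{2}\!\int_{\mathbb{R}^D}\!\left[\frac{\nabla q(\mathbf{y}_t;\boldsymbol{\theta})\cdot\nabla p(\mathbf{y}_t)}{q(\mathbf{y}_t;\boldsymbol{\theta})}
\;+\;\nabla\!\left(\tfrac{p(\mathbf{y}_t)}{q(\mathbf{y}_t;\boldsymbol{\theta})}\right)\!\cdot\nabla q(\mathbf{y}_t;\boldsymbol{\theta})\right]d\mathbf{y}_t,
\]
which is the claimed identity. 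The main technical obstacle is the boundary analysis: one must verify that the two surface integrals really fit the hypotheses of Lemma~\ref{integration vanish}, namely that $\mathbb{E}_p[|\log q(\mathbf{y}_t;\boldsymbol{\theta})|]<\infty$ and that $\|\nabla q\|/q$ is integrable against $p$; both follow from the Gaussian-channel structure (which forces sub-Gaussian tails of $p(\mathbf{y}_t)$ and $q(\mathbf{y}_t;\boldsymbol{\theta})$) together with the finite-second-moment assumption. Everything else (exchange of derivative and integral, the heat-equation substitution, and Green's identity) is routine once these tail/integrability conditions are secured.
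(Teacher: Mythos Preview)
Your proposal is correct and follows essentially the same route as the paper: differentiate under the integral, invoke the heat equation $\partial_{\sigma_t^2}p=\tfrac12\Delta p$ (and likewise for $q$), apply Green's first identity on balls $B_r$, and kill the two surface integrals using exactly the $L_1$ and $L_2$ terms of Lemma~\ref{integration vanish}. The paper's argument is identical in structure and in the pairing of each boundary flux with the corresponding clause of Lemma~\ref{integration vanish}.
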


\begin{proof}
By definition of cross-entropy in Def.~\ref{cross-entropy},
\[
\mathcal{H}\big(p(\mathbf{y}_t),q(\mathbf{y}_t;\boldsymbol{\theta})\big)
= -\int_{\mathbb{R}^D} p(\mathbf{y}_t)\,\log q(\mathbf{y}_t;\boldsymbol{\theta})\,\mathrm{d}\mathbf{y}_t.
\]
Differentiating w.r.t.\ $\sigma_t^2$ (justified by the assumed regularity) yields
\begin{align}\label{eq:diff-start}
\frac{\mathrm{d}}{\mathrm{d}\sigma_t^2}\Big(-\!\int p\log q\Big)
= -\int \log q\,\frac{\partial p}{\partial \sigma_t^2}\,\mathrm{d}\mathbf{y}_t
\;-\;\int p\,\frac{\partial}{\partial \sigma_t^2}\log q\,\mathrm{d}\mathbf{y}_t,
\end{align}
where we abbreviate $p=p(\mathbf{y}_t)$ and $q=q(\mathbf{y}_t;\boldsymbol{\theta})$.

Under Gaussian smoothing (heat equation),
\[
\frac{\partial p}{\partial \sigma_t^2}=\frac{1}{2}\,\Delta_{\mathbf{y}_t}p,
\qquad
\frac{\partial q}{\partial \sigma_t^2}=\frac{1}{2}\,\Delta_{\mathbf{y}_t}q,
\qquad
\frac{\partial}{\partial \sigma_t^2}\log q
=\frac{1}{q}\frac{\partial q}{\partial \sigma_t^2}
=\frac{1}{2}\,\frac{\Delta_{\mathbf{y}_t} q}{q}.
\]
We now recall Green's identity: If $\boldsymbol{\phi}(\mathbf{x})$ and $\boldsymbol{\psi}(\mathbf{x})$ are twice continuously differentiable functions in $\mathbb{R}^D$ and $V$ is any set bounded by a piecewise smooth, closed, oriented surface $S$ in $\mathbb{R}^D$, then

\begin{equation}
\int_V \boldsymbol{\phi} \nabla^2 \boldsymbol{\psi} \, dV = \int_S \boldsymbol{\phi} \nabla \boldsymbol{\psi} \cdot d\mathbf{s} - \int_V \nabla \boldsymbol{\phi} \cdot \nabla \boldsymbol{\psi} \, dV, \label{green identity}
\end{equation}

where $\nabla \boldsymbol{\phi}$ denotes the gradient of $\boldsymbol{\phi}$, $d\mathbf{s}$ denotes the elementary area vector, and $\nabla \boldsymbol{\phi} \cdot d\mathbf{s}$ is the inner product of these two vectors. This identity plays the role of integration by parts in $\mathbb{R}^D$.

To apply Green’s identity to \eqref{eq:diff-start}, we let $V_r$ be the $D$-sphere of radius $r$ centered at the origin and having surface $S_r$. Then we use Green’s identity on $V_r$ and $S_r$ with $\boldsymbol{\phi}(\mathbf{y}_t) = \log p(\mathbf{y}_t)$ and $\boldsymbol{\psi}(\mathbf{y}_t) = p(\mathbf{y}_t)$ and take the limit as $r \to \infty$. In Lemma~\ref{integration vanish} the surface integral over $S_r$ is shown to vanish in the limit. Hence, by applying heat equation \cite{1057105} and Green's identity for the first term on right hide side, we obtain:
\begin{align}
    -\int \log q(\mathbf{y}_t;\boldsymbol{\theta}) \frac{d}{d\sigma_t^2} p(\mathbf{y}_t) \, d\mathbf{y}_t &= - \frac{1}{2} \int \log q(\mathbf{y}_t;\boldsymbol{\theta}) \Delta_{\mathbf{y}_t} p(\mathbf{y}_t) \, d\mathbf{y}_t \\
    &= \frac{1}{2} \int \nabla_{\mathbf{y}_t} \log q(\mathbf{y}_t;\boldsymbol{\theta}) \cdot \nabla_{\mathbf{y}_t} p(\mathbf{y}_t) \, d\mathbf{y}_t \\
    &= \frac{1}{2}\int_{\mathbb{R}^{D}} \frac{\nabla_{} q(\mathbf{y}_t;\boldsymbol{\theta}) \cdot \nabla_{} p(\mathbf{y}_t)}{q(\mathbf{y}_t;\boldsymbol{\theta})} \, d\mathbf{y}_t,
\end{align}
where we used integration by parts in high dimension and the vanishing boundary terms from Lemma~\ref{integration vanish}. For the second term,
\begin{align}
    - \int  p(\mathbf{y}_t) \frac{d}{d\sigma_t^2} \log q(\mathbf{y}_t;\boldsymbol{\theta}) \, d\mathbf{y}_t&= - \int p(\mathbf{y}_t) \frac{\nabla_{\sigma_t^2}q(\mathbf{y}_t;\boldsymbol{\theta})}{q(\mathbf{y}_t;\boldsymbol{\theta})}\,d\mathbf{y}_t \\ 
    %&= - \frac{1}{2} \int p(\mathbf{y}_t) \Delta_{\mathbf{y}_t} \log q(\mathbf{y}_t;\boldsymbol{\theta}) \, d\mathbf{y}_t \\
    &= - \frac{1}{2} \int \left(\frac{p(\mathbf{y}_t)}{q(\mathbf{y}_t;\boldsymbol{\theta})}\right) \Delta_{\mathbf{y}_t} q(\mathbf{y}_t;\boldsymbol{\theta}) \, d\mathbf{y}_t \\
    &= \frac{1}{2}\int_{\mathbb{R}^{D}} \nabla_{}\left(\frac{p(\mathbf{y}_t)}{q(\mathbf{y}_t)}\right) \cdot \nabla_{} q(\mathbf{y}_t) \, d\mathbf{y}_t,
\end{align}

where we again applied Green’s identity and used the vanishing of the boundary flux.

Summing the two contributions yields~\eqref{eq:lemma-variance-deriv}.
\end{proof}

\begin{lemma}\label{KLinvariance}
    Let $(X, \mathcal{S}, \mu_i)$ be probability spaces, and let $T: X \to Y$ be a measurable transformation inducing probability measures $\nu_i$ on $(Y, \mathcal{T})$ such that
\[
\nu_i(G) = \mu_i(T^{-1}(G)), \quad \forall G \in \mathcal{T}, \quad i=1,2.
\]
Denote the Radon-Nikodym derivatives of $\nu_i$ with respect to a common reference measure $\gamma$ as
\[
g_i(y) = \frac{d\nu_i}{d\gamma}.
\]
Then, the KL divergence remains invariant under the transformation $T$, i.e.,
\[
D_{\mathrm{KL}}(\nu_1 || \nu_2) = D_{\mathrm{KL}}(\mu_1 || \mu_2),
\]
where
\[
D_{\mathrm{KL}}(\nu_1 || \nu_2) = \int_Y g_1(y) \log \frac{g_1(y)}{g_2(y)} d\gamma(y).
\]
\end{lemma}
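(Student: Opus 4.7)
The plan is to express $D_{\text{KL}}(\nu_1\|\nu_2)$ via Radon--Nikodym derivatives and then transport the integral from $Y$ back to $X$ using the push-forward property $\nu_i(G) = \mu_i(T^{-1}(G))$. Concretely, assuming $\nu_1 \ll \nu_2$ (equivalently $\mu_1 \ll \mu_2$ under the regularity of $T$), I would start from the chain-rule form $\tfrac{d\nu_1}{d\nu_2} = g_1/g_2$ (on the support of $\nu_2$) and write
\[
D_{\text{KL}}(\nu_1\|\nu_2) = \int_Y \log\!\frac{g_1(y)}{g_2(y)}\,d\nu_1(y).
\]

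The next step is to apply the standard change-of-variables identity for push-forward measures: for any $\nu_i$-integrable $h:Y\to\mathbb{R}$,
\[
\int_Y h(y)\,d\nu_i(y) \;=\; \int_X h(T(x))\,d\mu_i(x).
\]
Taking $h = \log(g_1/g_2)$, this reduces the KL integral to one over $X$:
\[
D_{\text{KL}}(\nu_1\|\nu_2) \;=\; \int_X \log\!\frac{g_1(T(x))}{g_2(T(x))}\,d\mu_1(x).
\]
Then I would invoke the pull-back identity for Radon--Nikodym derivatives,
$\tfrac{d\nu_1}{d\nu_2}\circ T = \tfrac{d\mu_1}{d\mu_2}$, which collapses the integrand to $\log(d\mu_1/d\mu_2)$ and delivers $D_{\text{KL}}(\mu_1\|\mu_2)$.

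The main obstacle is precisely the pull-back identity $\bigl(\tfrac{d\nu_1}{d\nu_2}\bigr)\circ T = \tfrac{d\mu_1}{d\mu_2}$. In full generality the push-forward construction only yields the data-processing inequality $D_{\text{KL}}(\nu_1\|\nu_2) \le D_{\text{KL}}(\mu_1\|\mu_2)$, by convexity of $t\mapsto t\log t$ and Jensen applied to the conditional expectation onto $T^{-1}(\mathcal{T})$. Equality, as asserted in the lemma, requires that $T$ lose no information relative to the pair $(\mu_1,\mu_2)$: it suffices that $T$ be bijective with measurable inverse, or more generally that $d\mu_1/d\mu_2$ be $T^{-1}(\mathcal{T})$-measurable (i.e., $T$ is a sufficient statistic for $\{\mu_1,\mu_2\}$). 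I would therefore make this assumption explicit and verify the pull-back by testing both sides against an arbitrary $G\in\mathcal{T}$: the push-forward identity gives $\mu_1(T^{-1}G) = \int_{T^{-1}G}\tfrac{d\mu_1}{d\mu_2}\,d\mu_2 = \int_G \bigl(\tfrac{d\nu_1}{d\nu_2}\bigr)\circ T \,d\nu_2$, and uniqueness of Radon--Nikodym derivatives on the appropriate $\sigma$-algebra closes the argument.
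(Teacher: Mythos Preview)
Your proposal is correct and, in fact, more careful than the paper's own argument. The paper proceeds concretely: it fixes a reference measure $\lambda$ on $X$, writes $f_i=d\mu_i/d\lambda$, and invokes the classical change-of-variables formula $g_i(y)=f_i(T^{-1}(y))\,|J_T(y)|^{-1}$ with a Jacobian determinant, so that the Jacobian factors cancel inside the log ratio and a substitution $z=T^{-1}(y)$ returns the integral to $X$. This tacitly assumes $T$ is a diffeomorphism (or at least bijective with a well-defined Jacobian), which is stronger than the bare push-forward hypothesis in the lemma statement. Your measure-theoretic route---push-forward change of variables plus the pull-back identity $(d\nu_1/d\nu_2)\circ T=d\mu_1/d\mu_2$---is the cleaner abstraction, and you correctly flag that without invertibility (or sufficiency of $T$) one only gets the data-processing inequality. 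Both arguments land on the same equality under the same effective hypothesis; yours makes that hypothesis explicit, while the paper's Jacobian calculation hides it in the use of $T^{-1}$.
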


\begin{proof}
    Let $\lambda$ be a reference measure on $X$ such that $\mu_i$ has densities $f_i$ with respect to $\lambda$, i.e., 
\[
f_i(x) = \frac{d\mu_i}{d\lambda}(x).
\]
By the change of variables under $T$, the density functions transform as
\[
g_i(y) = \frac{d\nu_i}{d\gamma} (y) = \frac{d\mu_i}{d\lambda}(T^{-1}(y)) J_T^{-1}(y),
\]
where $J_T(y) = \left| \det \frac{dT}{dx} \right|$ is the Jacobian determinant of $T$.

Substituting into the definition of KL divergence:
\[
D_{\mathrm{KL}}(\nu_1 || \nu_2) = \int_Y g_1(y) \log \frac{g_1(y)}{g_2(y)} d\gamma(y),
\]
we expand:
\[
= \int_Y \left( f_1(T^{-1}(y)) J_T^{-1} \right) \log \frac{f_1(T^{-1}(y)) J_T^{-1}}{f_2(T^{-1}(y)) J_T^{-1}} d\gamma(y).
\]
Since the Jacobian terms cancel, we obtain:
\[
= \int_Y f_1(T^{-1}(y)) \log \frac{f_1(T^{-1}(y))}{f_2(T^{-1}(y))} d\gamma(y).
\]
By the change of variables $z = T^{-1}(y)$, we rewrite this as:
\[
= \int_X f_1(z) \log \frac{f_1(z)}{f_2(z)} d\lambda(z) = D_{\mathrm{KL}}(\mu_1 || \mu_2).
\]
Thus, the KL divergence is invariant under the transformation $T$, completing the proof. 
\end{proof}

\subsection{Proof and Remark of Theorems}\label{proof}

\subsubsection{Proof of Theorem.~\ref{theorem 1}}
\textbf{Theorem ~\ref{theorem 1}.}\label{proof:theorem1} \textit{Let \( \mathbf{X} \sim p(\mathbf{x}) \) be an arbitrary distributed random vector on \( \mathbb{R}^D \), and let \( q(\hat{\mathbf{x}}; \boldsymbol{\theta}) \) be a parametric model with \( \hat{\mathbf{X}} \sim q(\hat{\mathbf{x}}; \boldsymbol{\theta}) \). Define the perturbed observation
\[
\tilde{\mathbf{X}} := \alpha_t \mathbf{X} + \sigma_t \boldsymbol{\Psi},
\]
where \( \boldsymbol{\Psi} \) is a random vector independent of $\mathbf{X}$ , satisfying \( \mathbb{E}[\boldsymbol{\Psi}] = 0 \) and \( \mathrm{Cov}(\boldsymbol{\Psi}) = \mathbf{I} \). Let \( p_{\sigma_t^2} \) and \( q_{\sigma_t^2} \) denote the densities of $\tilde{\mathbf{X}}$ under \( p \) and \( q \) with noise variance $\sigma_t^2$, respectively. Suppose that the KL divergence \( D_{\mathrm{KL}}(p_{\sigma_t^2} \| q_{\sigma_t^2}) \) is finite for sufficiently small \( \sigma_t^2 \). Then the following limit holds:
\begin{equation}
    \left. \frac{d}{d\sigma_t^2} D_{\mathrm{KL}}(p_{\sigma_t^2} \| q_{\sigma_t^2}) \right|_{\sigma_t^2 \to 0^+}
= -\frac{1}{2} \int_{\mathbb{R}^D} p(\mathbf{x}) \left\| \nabla \log p(\mathbf{x}) - \nabla \log q(\hat{\mathbf{x}}; \boldsymbol{\theta}) \right\|^2 d\mathbf{x},
\end{equation}
i.e.,
\[
\left. \frac{d}{d\sigma_t^2} D_{\mathrm{KL}}(p(\tilde{\mathbf{x}}) \| q(\tilde{\mathbf{x}}; \boldsymbol{\theta})) \right|_{\sigma_t^2 \to 0^+}
= -\frac{1}{2} I\left(p(\mathbf{x}) \| q(\hat{\mathbf{x}};\boldsymbol{\theta})\right),
\]
where \( I(\cdot\|\cdot) \) denotes the Fisher divergence (equivalently, score matching objective) between \( p \) and \( q \).}

\begin{proof}
    According to the definition of relative entropy (See Def.~\ref{KL definition}), we have
    \begin{align}
        &\frac{d}{d\sigma_t^2}D_{\text{KL}}(p(\tilde{\mathbf{x}})\|q(\tilde{\mathbf{x}};{\boldsymbol{\theta}})) = \frac{d}{d\sigma_t^2}\int_{\mathbb{R}^D} p(\tilde{\mathbf{x}}) \log(\frac{p(\tilde{\mathbf{x}})}{q(\tilde{\mathbf{x}};\boldsymbol{\theta})}) \,d\tilde{\mathbf{x}} \\
        &= \int \frac{\partial p(\tilde{\mathbf{x}})}{\partial\sigma_t^2}  \log(\frac{p(\tilde{\mathbf{x}})}{q(\tilde{\mathbf{x}};\boldsymbol{\theta})}) + p(\tilde{\mathbf{x}}) \frac{\partial}{\partial\sigma_t^2} \log(\frac{p(\tilde{\mathbf{x}})}{q(\tilde{\mathbf{x}};\boldsymbol{\theta})}) \,d\tilde{\mathbf{x}} \\
        &= \int\frac{\partial p(\tilde{\mathbf{x}})}{\partial \sigma_t^2}\log(\frac{p(\tilde{\mathbf{x}})}{q(\tilde{\mathbf{x}};\boldsymbol{\theta})}) + p(\tilde{\mathbf{x}})\left(\frac{\partial p(\tilde{\mathbf{x}})}{\partial\sigma_t^2}/p(\tilde{\mathbf{x}}) - \frac{\partial q(\tilde{\mathbf{x}};\boldsymbol{\theta})}{\partial\sigma_t^2}/q(\tilde{\mathbf{x}};\boldsymbol{\theta})\right)    \,d\tilde{\mathbf{x}}. \label{21}
    \end{align}
    Invoking Lemma \ref{arbitrary noise heat equation} on \eqref{21} yields
    \begin{equation}
        \frac{d}{d\sigma_t^2}D_{\text{KL}}(p(\tilde{\mathbf{x}})\|q(\tilde{\mathbf{x}};\boldsymbol{\theta}))\bigg|_{\sigma_t^2 = 0^{+}} = \frac{1}{2} \int \left(p''(\tilde{\mathbf{x}})\log (\frac{p(\tilde{\mathbf{x}})}{q(\tilde{\mathbf{x}};\boldsymbol{\theta})}) + p''(\tilde{\mathbf{x}}) - \frac{q''(\tilde{\mathbf{x}};\boldsymbol{\theta})p(\tilde{\mathbf{x}})}{q(\tilde{\mathbf{x}};\boldsymbol{\theta})} \right)\,d\tilde{\mathbf{x}}. \label{22}
    \end{equation}
    For convenience, define:
    \[\upsilon(\tilde{\mathbf{x}}) = \log \frac{p(\tilde{\mathbf{x}})}{q(\tilde{\mathbf{x}};\boldsymbol{\theta})}= \log p(\tilde{\mathbf{x}}) - \log q(\tilde{\mathbf{x}};\boldsymbol{\theta})\]
    Recall the Green's identity:
    \begin{equation}
        \int_{\mathbb{R}^D} \nu\nabla^2 \mu\ d\tilde{\mathbf{x}} = \int_{\partial\mathbb{R}^D} \nu \frac{\partial\mu}{\partial n} dS - \int_{\mathbb{R}^D} \left(\nabla\nu \cdot \nabla\mu \right)\,d\tilde{\mathbf{x}}.
    \end{equation}
    Setting $\nu = p$ and $\mu = \upsilon = \log \frac{p}{q}$, we obtain:
    \begin{equation}
        \int_{\mathbb{R}^D}p''\upsilon  \ \, d  \tilde{\mathbf{x}} = \int_{\partial\mathbb{R}^D} p\frac{\partial\upsilon}{\partial n} dS - \int_{\mathbb{R}^D} \left(\nabla\upsilon \cdot \nabla p \right)\ d\tilde{\mathbf{x}}.
    \end{equation}
    Since $p(\tilde{\mathbf{x}})$ has finite differential Shannon entropy, the boundary integral vanishes, leaving
    \begin{equation}
        \int_{\mathbb{R}^D} p'' \upsilon \ d\tilde{\mathbf{x}} = - \int_{\mathbb{R}^D} \left( \nabla \upsilon \cdot \nabla p \right) \ d\tilde{\mathbf{x}}.
    \end{equation}
    For the term $q''p/q$, we apply the same identity:
    \begin{equation}
        \int_{\mathbb{R}^D} \frac{q''p}{q} d\tilde{\mathbf{x}} = - \int_{\mathbb{R}^D} \left(p \nabla \log q \cdot \nabla \log q \right)\ d\tilde{\mathbf{x}}.
    \end{equation}
    Since $\int_{\mathbb{R}^D} p''(\tilde{\mathbf{x}}) d\tilde{\mathbf{x}} = 0 $ under appropriate boundary conditions, it does not contribute. Similar technique was used in \cite{chen2013mismatched,palomar2005gradient,5165186}.
    Using \( \upsilon = \log p - \log q \), we compute:
    \begin{equation}
        \nabla \upsilon = \nabla \log p - \nabla \log q.
    \end{equation}
    Thus, 
    \begin{equation}
        \nabla p \cdot \nabla \upsilon = \nabla p \cdot (\nabla \log p - \nabla \log q).
    \end{equation}
Since \( \nabla p = p \nabla \log p \), we substitute:
\begin{equation}
    \nabla p \cdot \nabla \upsilon = p \nabla \log p \cdot (\nabla \log p - \nabla \log q).
\end{equation}
Expanding,
\begin{equation}
    \nabla \log p \cdot (\nabla \log p - \nabla \log q) = \|\nabla \log p\|^2 - \nabla \log p \cdot \nabla \log q.
\end{equation}
Thus,
\begin{equation}
    \int_{\mathbb{R}^D} p'' \upsilon \, d\tilde{\mathbf{x}} = - \int_{\mathbb{R}^D} p \left( \|\nabla \log p\|^2 - \nabla \log p \cdot \nabla \log q \right) d\tilde{\mathbf{x}}.
\end{equation}
Using the result for \( q'' p / q \):
\begin{equation}
    \int_{\mathbb{R}^D} \frac{q'' p}{q} d\tilde{\mathbf{x}} = -\int_{\mathbb{R}^D} \left(p \nabla \log q \cdot \nabla \log q \, \right)d\tilde{\mathbf{x}}.
\end{equation}
Summing both contributions,
\begin{equation}
    \frac{1}{2} \int_{\mathbb{R}^D} \left(- p \left( \|\nabla \log p\|^2 - \nabla \log p \cdot \nabla \log q \right) - p \|\nabla \log q\|^2 + p \nabla \log p \cdot \nabla \log q \right) \,d\tilde{\mathbf{x}}.
\end{equation}
Rearranging, we conclude:
\begin{equation}
    \frac{d}{d\sigma_t^2}D_{\text{KL}}(p(\tilde{\mathbf{x}})\|q(\tilde{\mathbf{x}};\boldsymbol{\theta}))\bigg|_{\sigma_t^2 = 0^{+}} = -\frac{1}{2} \int_{\mathbb{R}^D} p(\mathbf{x}) \|\nabla \log p(\mathbf{x}) - \nabla \log q(\mathbf{x};{\boldsymbol{\theta}})\|_2^2\, d\mathbf{x},
\end{equation}
which completes the proof.
\end{proof}

Similar to the classical result in \cite{10.5555/1795114.1795156,verdu2010mismatched}, the relation in Theorem~\ref{theorem 1} holds because both sides quantify the error induced by a mismatch between the true distribution \( p \) and the prior \( q \) provided to the estimator. Naturally, when \( p = q \), both sides vanish; otherwise, the derivative is strictly negative, indicating that perturbations reduce the relative entropy. This observation also yields the relative entropy version of the data processing inequality:
\begin{equation}
    D_{\text{KL}}(\rho\|\nu) \geq D_{\text{KL}}(\bar{\rho}\|\bar{\nu}),
\end{equation}
where
\[
\bar{\rho} = \int W(\mathbf{y}\vert\mathbf{x})\,d\rho(\mathbf{x}), \qquad
\bar{\nu} = \int W(\mathbf{y}\vert\mathbf{x})\,d\nu(\mathbf{x}).
\]
Here, \( W \) denotes a noisy channel. This inequality asserts that the KL divergence between two distributions decreases under the action of a common channel, which is consistent with the data processing argument used in \cite{song2021maximum}, where, the channel \( W \) corresponds to time-reversed Brownian motion, which can be viewed as a continuous analogue of the Gaussian channel \cite{8007014}. Under the assumption that \( \bar{\rho}(\mathbf{x}) = \bar{\nu}(\mathbf{x}) = \pi(\mathbf{x}) \), the result further implies that the diffusion process smooths the discrepancy between inputs. Moreover, the neural network acts as a channel simulator, enabling efficient sampling by integrating its output into the neural ODE solvers \cite{10.5555/3600270.3602196}.

\subsubsection{Proof of Proposition.~\ref{proposition}}\label{proof:prop}

\textbf{Proposition \ref{proposition}.} \textit{Consider the signal model \eqref{channel eq}, suppose $p(\tilde{\mathbf{x}})$ and $q(\tilde{\mathbf{x}};{\boldsymbol{\theta}})$ have continuous second-order derivatives and finite second moments. Denote by $p(\mathbf{y}_1)$ and $\pi(\mathbf{x})$ the output signals of channel at time $t=1$ when inputs are $p(\mathbf{x})$ and $q(\hat{\mathbf{x}};{\boldsymbol{\theta}})$ respectively. Assume $\pi(\mathbf{x})=\mathcal{N}(0,\mathbf{I})$, which is independent of $\boldsymbol{\theta}$. Let $p(\mathbf{y}_t|\mathbf{x})$ denote the channel \eqref{channel eq} for any $t\in[0,1]$, then for arbitrary datapoint $\mathbf{x}$ and small $\sigma_0^2$ with $0<\sigma_0^2\ll1$:}
\begin{equation}
         \mathcal{H}(p(\mathbf{x}),q(\hat{\mathbf{x}};{\boldsymbol{\theta}})) = \mathcal{H}(p(\mathbf{y}_1),\pi(\mathbf{x})) + \mathcal{J}_{\text{DSM}}(\boldsymbol{\theta};\sigma_t^2(\cdot)) - \frac{1}{2}\int_{\sigma_0^2}^{\sigma^2_1}\mathbb{E}\|\nabla_{\mathbf{y}_{t}}\log p(\mathbf{y}_{t}\vert\mathbf{x})\|^{2}d\sigma^2_t + o(\sigma_0^2).
\end{equation}
     \textit{Here, the denoising score matching objective is defined as
     \begin{equation}
         \mathcal{J}_{\text{DSM}}(\boldsymbol{\theta};\sigma^2(\cdot))  := \frac{1}{2}\int_{\sigma_0^2}^{\sigma^2_1} \mathbb{E}\|\nabla_{\mathbf{y}_{t}}\log p(\mathbf{y}_{t}\vert  \mathbf{x}) - \hat{\boldsymbol{s}}(\mathbf{y}_{t};\boldsymbol{\theta})\|_2^{2}\,d\sigma^2_t,
     \end{equation}
     and \(\hat{\boldsymbol{s}}(\mathbf{y}_t; \boldsymbol{\theta}) = \nabla_{\mathbf{y}_t} \log q(\mathbf{y}_t; \boldsymbol{\theta}) := - \hat{\mathbf{n}}(\mathbf{y}_t,\eta_t;{\boldsymbol{\theta}})/\sigma_t\) is a score network estimator.}

\begin{proof}
Recall the thermodynamic integration commonly used in statistical physics:
    \begin{equation}
        \int_{0}^{\Xi}\frac{d}{d\xi}f(\xi)d\xi = f(\Xi) - f(0).
    \end{equation}
    Let \(f(\sigma_t^2) := D_{\text{KL}}\bigl(p(\mathbf{y})\|q(\mathbf{y};{\boldsymbol{\theta}})\bigr)\), where $\mathbf{Y}_t = \alpha_t \mathbf{X} + \sigma_t \mathbf{N}$, i.e.,

    \begin{equation}
        \int_{0}^{\Xi}\frac{d}{d\xi}f(\xi)d\xi = f(\Xi) - f(0),
    \end{equation} 
    becomes equivalent to
    \begin{equation}
        \int_{0}^{\sigma_1^2}\frac{d}{d\sigma_t^2}f(\sigma_t^2)d\sigma_t^2 = f(\sigma_1^2) - f(0),
    \end{equation}
    which yields the following expression,
    \begin{equation}
        \int_0^{\sigma_1^2} \frac{d}{d\sigma_t^2} D_{\text{KL}}(p(\mathbf{y})\|q(\mathbf{y};{\boldsymbol{\theta}})) = D_{\text{KL}}(p(\mathbf{y}_1)\|q(\mathbf{y}_1;{\boldsymbol{\theta}})) - D_{\text{KL}}(p(\mathbf{x})\|q(\hat{\mathbf{x}};{\boldsymbol{\theta}})). \label{85}
    \end{equation}
This occurs because the KL divergence remains invariant under measurable transformations shown in Lemma~\ref{KLinvariance}. In particular, probability density transformation serves as an illustrative example. Let \( \mathbf{X} \sim \rho \) be a random vector on the measure space with a probability density function \( p(\mathbf{x}) \). We define the transformation \( \mathbf{X}' = T(\mathbf{X}) \). Consequently, the probability density function of \( \mathbf{X}' \), denoted as \( p(\mathbf{x}') \), can be determined using:
\begin{equation}
    p(\mathbf{x}') = p(T^{-1}(\mathbf{x}')) \left| \frac{d}{d\mathbf{x}'} T^{-1}(\mathbf{x}') \right|.
\end{equation}

Furthermore, the KL divergence satisfies the following property:
\begin{equation}
    D_{\text{KL}}(\rho(\alpha\mathbf{x})\|\nu(\alpha\mathbf{x})) = D_{\text{KL}}(\rho(\mathbf{x})\|\nu(\mathbf{x})) \quad \forall \alpha > 0,
\end{equation}

which implies that the KL divergence depends solely on the relative shape of the two distributions rather than their absolute scale. Invoke the Lemma~\ref{lemma 4} and results in \cite{10.5555/1795114.1795156} on \eqref{85} yields
    \begin{equation}
        D_{\text{KL}}(p(\mathbf{x})\|q(\hat{\mathbf{x}};{\boldsymbol{\theta}})) = D_{\text{KL}}(p(\mathbf{y}_1)\|q(\mathbf{y}_1;{\boldsymbol{\theta}})) + \frac{1}{2}\int_0^{\sigma_1^2} I(p(\mathbf{y}_t)\|q(\mathbf{y}_t;{\boldsymbol{\theta}})) d\sigma_t^2.
    \end{equation}
With the fact that $q(\mathbf{y}_1;{\boldsymbol{\theta}}) := \pi(\mathbf{x})$, which is independent of $\boldsymbol{\theta}$, we have
    \begin{equation}\label{64}
        D_{\text{KL}}(p(\mathbf{x})\|q(\hat{\mathbf{x}};{\boldsymbol{\theta}})) = D_{\text{KL}}(p(\mathbf{y}_1)\|\pi(\mathbf{x})) + \frac{1}{2}\int_0^{\sigma_1^2} I(p(\mathbf{y}_t)\|q(\mathbf{y}_t;{\boldsymbol{\theta}})) \,d\sigma_t^2.
    \end{equation}
    The mismatched entropy between data and model becomes equivalent to
    \begin{equation}
        \mathcal{H}(p(\mathbf{x}),q(\hat{\mathbf{x}};{\boldsymbol{\theta}})) = D_{\text{KL}}(p(\mathbf{x})\|q(\hat{\mathbf{x}};{\boldsymbol{\theta}})) + \mathcal{H}(p(\mathbf{x})). \label{00}
    \end{equation}
    Invoking \eqref{64} on \eqref{00} yields
    \begin{equation}
        \mathcal{H}(p(\mathbf{x}),q(\hat{\mathbf{x}};{\boldsymbol{\theta}})) =  D_{\text{KL}}(p(\mathbf{y}_1)\|\pi(\mathbf{x})) + \frac{1}{2} \int_0^{\sigma_{1}^{2}} I(p(\mathbf{y}_t) \| q(\mathbf{y}_t;{\boldsymbol{\theta}})) \, d\sigma_t^2 + \mathcal{H}(p(\mathbf{x})).
    \end{equation}
    Recall de Bruijin's identity \cite{1057105}:
    \begin{equation}
        \frac{d}{d\sigma_t^2} \mathcal{H}(p(\mathbf{y}_t)) = \frac{1}{2} \mathcal{J}(p(\mathbf{y}_t)).
    \end{equation}
    Using thermodynamic integration again, we have
    \begin{align}
        \int_0^{\sigma_1^2} \frac{d}{d\sigma_t^2} \mathcal{H}(p(\mathbf{y}_t)) d\sigma_t^2 &= \mathcal{H}(p(\mathbf{y}_1)) - \mathcal{H}(p(\alpha_0\mathbf{x})),
    \end{align}
    which is equivalent to
    \begin{align}
         \mathcal{H}(p(\alpha_0\mathbf{x})) &= \mathcal{H}(p(\mathbf{y}_1)) - \frac{1}{2}\int_0^{\sigma_1^2}\mathcal{J}(p(\mathbf{y}_t)) d\sigma_t^2.
    \end{align}
    Recall the property of differential entropy (see Theorem 8.6.4 (8.71) in \cite{cover1999elements})
    \begin{equation}
        \mathcal{H}(a\mathbf{X}) = \mathcal{H}(\mathbf{X}) + \log |\det(a)|,
    \end{equation}
    and the connection with denoising score matching (see (11) in \cite{vincent2011connection})
    \begin{equation}
        I(p(\mathbf{y}_t)\|q(\mathbf{y}_t;{\boldsymbol{\theta}})) = \mathbb{E}\left[\|\nabla\log p(\mathbf{y}_{t}\vert  \mathbf{x}) - \boldsymbol{s}(\mathbf{y}_{t};\boldsymbol{\theta})\|^{2}\right] + \mathcal{J}(p(\mathbf{y}_t)) - \mathbb{E}\left[\|\nabla\log p(\mathbf{y}_{t}\vert\mathbf{x})\|^{2}\right].
    \end{equation}
    Finally, we have
    \begin{align}
        \mathcal{H}&(p(\mathbf{x}),q(\hat{\mathbf{x}};{\boldsymbol{\theta}})) = \Bigg(D_{\text{KL}}(p(\mathbf{y}_1)\|\pi(\mathbf{x})) + \mathcal{H}(p(\mathbf{y}_1)) - \frac{1}{2}\int_{0}^{\sigma_1^2} \mathcal{J}(p(\mathbf{y}_t))\,d\sigma_t^2  - D\log |\alpha_0| \notag \\
        &+ \frac{1}{2} \bigg(\int_{0}^{\sigma_1^2}\mathbb{E}\left[\|\nabla\log p(\mathbf{y}_{t}\vert  \mathbf{x}) - \hat{\boldsymbol{s}}(\mathbf{y}_{t};\boldsymbol{\theta})\|^{2}\right] + \mathcal{J}(p(\mathbf{y}_t)) - \mathbb{E}\left[\|\nabla\log p(\mathbf{y}_{t}\vert\mathbf{x})\|^{2}\right]d\sigma_t^2\bigg)\Bigg) \\
        &= \mathcal{H}(p(\mathbf{y}_1),\pi(\mathbf{x})) + \mathcal{J}_{\text{DSM}}(\boldsymbol{\theta};\sigma_t^2(\cdot)) - \frac{1}{2}\int_{0}^{\sigma^2_1}\mathbb{E}\|\nabla_{\mathbf{y}_{t}}\log p(\mathbf{y}_{t}\vert\mathbf{x})\|^{2}d\sigma^2_t - D\log|\alpha_0|. 
    \end{align}
    Since the identity \( \sigma_t^2 + \alpha_t^2 = 1 \) holds for all \( t \), setting the lower limit of integration such that \( \sigma_0^2 = 0 \) implies \( \alpha_0^2 = 1 \). Consequently, \( |\alpha_0| = 1 \), and thus: $\log |\alpha_0| = \log 1 = 0$. And invoking Theorem~\ref{theorem 1}, we finally get
    \begin{equation}
        \mathcal{H}(p(\mathbf{x}),q(\hat{\mathbf{x}};{\boldsymbol{\theta}})) = \mathcal{H}(p(\mathbf{y}_1),\pi(\mathbf{x})) + \mathcal{J}_{\text{DSM}}(\boldsymbol{\theta};\sigma_t^2(\cdot)) - \frac{1}{2}\int_{\sigma_0^2}^{\sigma^2_1}\mathbb{E}\|\nabla_{\mathbf{y}_{t}}\log p(\mathbf{y}_{t}\vert\mathbf{x})\|^{2}d\sigma^2_t + o(\sigma_0^2).
    \end{equation}
    
\end{proof}

\subsubsection{Proof of Theorem~\ref{theorem ind}}\label{proof:theorem2}
\textbf{Theorem \ref{theorem ind}.} \textit{ Let $p(\mathbf{y}_t \vert \mathbf{x})=\mathcal{N}(\alpha_t\mathbf{x},\sigma_t^2\mathbf{I})$ denote the Gaussian channel at any time $t\in[0,1]$. With the same notations and conditions in Proposition~\ref{proposition}, we have} \label{proof of theorem 4}
    \begin{equation}
        -\log q(\hat{\mathbf{x}};{\boldsymbol{\theta}}) \leq \mathcal{H}(p(\mathbf{y}_{1}\vert\mathbf{x}),\mathbf{\pi}(\mathbf{x})) + \mathcal{L}_{\text{DSM}}(\sigma_t^2;\boldsymbol{\theta}), 
    \end{equation}
    \textit{in which $\mathcal{L}_{\text{DSM}}(\sigma_t^2;\boldsymbol{\theta})$ is defined as} \[\mathcal{L}_{\text{DSM}}(\sigma_t^2;\boldsymbol{\theta}) := \frac{1}{2}\int_{\sigma_0^2}^{\sigma_1^2} \mathbb{E}_{p(\mathbf{y}_t\vert\mathbf{x})}\|\nabla_{\mathbf{y}_{t}}\log p(\mathbf{y}_{t}\vert\mathbf{x}) - \boldsymbol{s}(\mathbf{y}_{t};\boldsymbol{\theta})\|^{2}d\sigma^2_t, \]
    \textit{and $\mathcal{H}(p(\mathbf{y}_{1}\vert\mathbf{x}),\mathbf{\pi}(\mathbf{x}))$ is given by} \[\mathcal{H}(p(\mathbf{y}_{1}\vert\mathbf{x}),\mathbf{\pi}(\mathbf{x}))=D_{\text{KL}}(p(\mathbf{y}_{1}\vert\mathbf{x})\|q(\mathbf{y}_1)) + \mathcal{H}( p(\mathbf{y}_{1}\vert\mathbf{x})) .\]

\begin{proof}
    From Proposition~\ref{proposition} we have
    \begin{align}
        -\mathbb{E}_{p(\mathbf{x})} [\log q(\hat{\mathbf{x}};{\boldsymbol{\theta}})] &= -\mathbb{E}_{p(\mathbf{y}_1)} [\log \pi(\mathbf{x})]  + \frac{1}{2}\int_{\sigma_0^2}^{\sigma^2_1} \mathbb{E}_{p(\mathbf{y}_t\vert\mathbf{x}) p(\mathbf{x})}\|\nabla_{\mathbf{y}_{t}}\log p(\mathbf{y}_{t}\vert  \mathbf{x}) - \boldsymbol{s}(\mathbf{y}_{t};\boldsymbol{\theta})\|^{2}d\sigma^2_t \notag \\
        &-\frac{1}{2}\int_{\sigma_0^2}^{\sigma^2_1}\mathbb{E}_{p(\mathbf{y}_t\vert\mathbf{x}) p(\mathbf{x})}\|\nabla_{\mathbf{y}_{t}}\log p(\mathbf{y}_{t}\vert\mathbf{x})\|^{2}d\sigma^2_t \\
        &= -\int_{\mathbb{R}^D} \int_{\mathbb{R}^D}p(\mathbf{y}_1\vert\mathbf{x})p(\mathbf{x})d\mathbf{x} \log\pi(\mathbf{x}) d\mathbf{y}_1 \notag \\ 
        &-\frac{1}{2}\int_{\sigma_0^2}^{\sigma^2_1}\iint_{\mathbb{R}^D}{p(\mathbf{y}_t\vert\mathbf{x})p(\mathbf{x})}\|\nabla_{\mathbf{y}_{t}}\log p(\mathbf{y}_{t}\vert\mathbf{x})\|^{2} d\mathbf{x} d\mathbf{y}_t d\sigma^2_t \notag \\
        &+ \frac{1}{2}\int_{\sigma_0^2}^{\sigma^2_1}\iint_{\mathbb{R}^D}{p(\mathbf{y}_t\vert\mathbf{x})p(\mathbf{x})}\|\nabla_{\mathbf{y}_{t}}\log p(\mathbf{y}_{t}\vert  \mathbf{x}) - \boldsymbol{s}(\mathbf{y}_{t};\boldsymbol{\theta})\|^{2}d\mathbf{x}d\mathbf{y}_t d\sigma^2_t \label{108}
    \end{align}
    Given a fixed channel $p(\mathbf{y}_t\vert\mathbf{x})$, we can easily see that $\int_{\mathbb{R}^D}p(\mathbf{x})d\mathbf{x}$ in both sides of \eqref{108} can be canceled to get
    \begin{align}
        -\log q(\hat{\mathbf{x}};{\boldsymbol{\theta}}) &= -\int_{\mathbb{R}^D} p(\mathbf{y}_1\vert\mathbf{x}) \log \pi(\mathbf{x})d\mathbf{y}_1 - \frac{1}{2}\int_{\sigma_0^2}^{\sigma^2_1}\int_{\mathbb{R}^D}{p(\mathbf{y}_t\vert\mathbf{x})}\|\nabla_{\mathbf{y}_{t}}\log p(\mathbf{y}_{t}\vert\mathbf{x})\|^{2}d\mathbf{y}_t d\sigma^2_t \notag \\
        &+ \frac{1}{2}\int_{\sigma_0^2}^{\sigma^2_1}\int_{\mathbb{R}^D}{p(\mathbf{y}_t\vert\mathbf{x})}\|\nabla_{\mathbf{y}_{t}}\log p(\mathbf{y}_{t}\vert  \mathbf{x}) - \boldsymbol{s}(\mathbf{y}_{t},\boldsymbol{\theta})\|^{2}d\mathbf{y}_t d\sigma^2_t . \label{109}
    \end{align}
    The second term in \eqref{109} corresponds to one-half of the integrated Fisher information of the Gaussian distribution, which is strictly non-negative \cite{Fisher_1925}, we obtain:
    %The second term in \eqref{109} is actually half the integral of Fisher information w.r.t given Gaussian, which is always positive \cite{Fisher_1925}. Also, with the minimum Fisher property of Gaussian given the entropy constraint, we get:
    \begin{equation}
        -\log q(\hat{\mathbf{x}};{\boldsymbol{\theta}}) \leq \mathcal{H}(p(\mathbf{y}_{1}\vert\mathbf{x}),\mathbf{\pi}(\mathbf{x}))+ \frac{1}{2}\int_{\sigma_0^2}^{\sigma^2_1}\int_{\mathbb{R}^D}{p(\mathbf{y}_t\vert\mathbf{x})}\|\nabla_{\mathbf{y}_{t}}\log p(\mathbf{y}_{t}\vert  \mathbf{x}) - \boldsymbol{s}(\mathbf{y}_{t},\boldsymbol{\theta})\|^{2}d\mathbf{y}_t d\sigma^2_t ,
    \end{equation}
    which finishes the proof.
\end{proof}
\subsubsection{Proof of Proposition~\ref{theorem 2}}
\textbf{Proposition \ref{theorem 2}.}
    \textit{Let \( \tilde{\mathbf{X}} = \alpha_t \mathbf{X} + \sigma_t \boldsymbol{\Psi} \), where \( \mathbf{X}, \boldsymbol{\Psi} \in \mathbb{R}^D \) are independent random vectors, with \( \boldsymbol{\Psi} \) satisfying \( \mathbb{E}[\boldsymbol{\Psi}] = 0 \) and \( \operatorname{Cov}(\boldsymbol{\Psi}) = \mathbf{I} \). Assume the probability density function \( p(\mathbf{x}) \) of \( \mathbf{X} \) is twice continuously differentiable and decays sufficiently fast at infinity, and that the Fisher information \( \mathcal{J}(\mathbf{X}) \) exists and is finite. Then,
    \begin{equation}
        \frac{d}{d\sigma_t^2}\mathcal{H}\bigl(p(\tilde{\mathbf{x}})\bigr)\bigg|_{\sigma_t^2 \rightarrow  0^{+}} = \frac{1}{2} \mathcal{J}\big(p(\mathbf{x})\big),
    \end{equation}
    where \( \mathcal{H}(\cdot) \) denotes the differential Shannon entropy and \( \mathcal{J}(\cdot) \) denotes the Fisher information.}
\begin{proof}
By the smoothing properties established in Lemma~\ref{arbitrary noise heat equation}, the distribution \( p(\tilde{\mathbf{x}}) \) is differentiable with respect to \( \sigma_t^2 \). Since the integrand in equation~\eqref{5} is both continuous and differentiable in \( \sigma_t^2 \), we may interchange the order of differentiation and integration to obtain:
\begin{align}
    \frac{d}{d\sigma_t^2} \mathcal{H}\big(p(\tilde{\mathbf{x}})\big) 
    &= - \int_{\mathbb{R}^D} \frac{d}{d\sigma_t^2} p(\tilde{\mathbf{x}}) \cdot \left(1 + \log p(\tilde{\mathbf{x}}) \right) \, d\tilde{\mathbf{x}}\\
    &= -\frac{1}{2} \int_{\mathbb{R}^D} \left( \Delta_{\tilde{\mathbf{x}}} p(\tilde{\mathbf{x}}) \right) \log p(\tilde{\mathbf{x}}) \, d\tilde{\mathbf{x}}.
\end{align}
Following the same argument as in Theorem~\ref{theorem 1}, this establishes the desired expression, which recovers the special case of isotropic additive noise discussed in \cite{rioul2010information}.
\end{proof}

\section{Improve the Likelihood Estimation Bounds}

Our theoretical analysis suggests that different variance functions can lead to varying performance in likelihood estimation. To validate this empirically, we evaluate the effect of different noise schedules, variance functions and datasets on likelihood estimation performance.

However, in practice, exact numerical evaluation of the integral is generally intractable. A common approach is to use Monte Carlo method to estimate the variance integral in \(\mathcal{L}_{\text{DSM}}(\sigma_t^2;\boldsymbol{\theta})\) during both training and evaluation. Moreover, reducing the variance of the Monte Carlo estimator for the continuous loss objective generally improves the efficiency of optimization. We next summarize our empirical observations on likelihood estimation across different settings.

\subsection{Variance-Aware Likelihood Bounds}\label{Variance-Aware Likelihood Bounds}
As mentioned in Section~\ref{individual likelihood}, denoising score matching \cite{vincent2011connection} is typically used as the training objective for diffusion models, serving as a surrogate for approximating the likelihood function $q(\hat{\mathbf{x}};\boldsymbol{\theta})$. Specifically, recall that out channel \eqref{channel eq} is defined as $p(\mathbf{y}_t\vert\mathbf{x}) = \mathcal{N}(\alpha_t\mathbf{x},\sigma_t^2\mathbf{I})$, such that  $\nabla_{\mathbf{y}_t} \log p(\mathbf{y}_t\vert\mathbf{x})= -\mathbf{n}/\sigma_t$. In the sense that with the score model $\hat{\boldsymbol{s}}(\mathbf{y}_t; \boldsymbol{\theta}) := - \hat{\mathbf{n}}(\mathbf{y}_t,\eta_t;{\boldsymbol{\theta}})/\sigma_t$, the objective takes the following integral form, where the loss is evaluated under varying noise levels:
\begin{equation}
    \mathcal{L}_{\text{DSM}}(\sigma_t^2;\boldsymbol{\theta}) = \frac{1}{2} \int_{\sigma_0^2}^{\sigma_1^2} \mathbb{E}_{p(\mathbf{y}_t\vert\mathbf{x})}  \left[\sigma_t^{-2}\|\mathbf{n} - \hat{\mathbf{n}}(\mathbf{y}_t,\eta_t;{\boldsymbol{\theta}})\|_2^2 \right]\ d\sigma_t^2.\label{loss}
\end{equation}

We begin by approximating the expectation in $\mathcal{L}_{\text{DSM}}(\sigma_t^2;\boldsymbol{\theta})$ via Monte Carlo estimation. This involves drawing samples $\mathbf{y}_t \sim p(\mathbf{y}_t\vert\mathbf{x})$ from a tractable Gaussian kernel, which enables efficient estimation of the denoising objective. To estimate the integral over the variance schedule, we apply a second Monte Carlo approximation by uniformly sampling $\sigma_t^2$ from the interval $[\sigma_0^2,\sigma_1^2]$. However, we empirically observe that such an estimator can significantly degrade the optimization process, potentially due to increased variance or poor convergence behaviour. To address this, we reparameterize the objective in terms of a smoother coordinate, namely, the negative log signal-to-noise ratio (log-SNR), which facilitates both numerical stability and analytical tractability.

Even under constrained noise schedules, such as the Variance Preserving (VP) formulation~\cite{ho2020denoising}, which enforces $\alpha^2_t + \sigma_t^2 = 1$, there remains freedom in how $\alpha_t$, $\sigma_t$ evolve over time $t$. To abstract away from specific parameterizations such as VP or VE schedules, we adopt a log-SNR parameterization as the default: $\eta_t:=- \log \text{SNR}(t) = - \log \frac{\alpha_t^2}{\sigma_t^2}$. This reparameterization simplifies the analysis and unifies various diffusion formulations under a single coordinate system. Under this reparameterization, the change of variable from time $t$ to negative log-\text{SNR} $\eta$ follows:
\begin{equation}
    \frac{d\eta_t}{dt} =  \frac{1}{\sigma_t^2}\frac{d\sigma_t^2}{dt} - \frac{1}{\alpha_t^2}\frac{d\alpha_t^2}{dt}.
\end{equation}
For the VP schedule, this simplifies to:
\begin{equation}
    \frac{d\eta_t}{dt} = \frac{1}{\alpha_t^2\sigma_t^2}\frac{d\sigma_t^2}{dt}.
\end{equation}

By setting the noisy process to follow an EDM-like variance-exploding (VE) schedule~\cite{10.5555/3600270.3602196}, we let $\alpha_t^2 \equiv 1$ and parameterize the variance as $\sigma_t^2 := \exp(\eta_t)$. This configuration is equivalent to that of NCSNv2~\cite{song2020improved} when $\sigma_t$ follows a geometric sequence interpolating between $0.01$ and $50$, i.e., $\eta(t) = 2\log(0.01) + 2\log(5000)\,t$. Under this specification, the objective in~\eqref{loss} reduces to the continuous-time loss proposed in~\cite{kingma2021variational}:
\begin{equation}\label{15}
    \mathcal{L}_{\infty}(\mathbf{x}) := \frac{1}{2} \mathbb{E}_{\mathbf{n}\sim \mathcal{N}(0,\mathbf{I}), t\sim\mathcal{U}(0,1)}[\eta'(t)\|\mathbf{n}-\hat{\mathbf{n}}_{\boldsymbol{\theta}}(\mathbf{y}_t;t)\|_2^2],
\end{equation}
where $\eta'(t) = d\eta_t/dt$ acts as a natural weighting function. In the above case, this means that $\eta'(t) = 2\log[5000]$ and thus that weighted function is a constant. We leave the VE schedule optimization for future works.

Similarly, in the information-theoretic VE formulation~\cite{kong2023informationtheoretic}, i.e., $\mathbf{z}_{\gamma} =\sqrt{\gamma} \ \mathbf{x} + \mathbf{n}$, our bound recovers the loss integrates over the signal-to-noise ratio $\gamma$:
\begin{equation}
    \mathcal{L}:=\frac{1}{2}\int_{\text{SNR}_{\text{min}}}^{\text{SNR}_{\text{max}}}\mathbb{E}_{p(\mathbf{z}_\gamma,\mathbf{x})}\bigl[\|\mathbf{x}-\hat{\mathbf{x}}_{\boldsymbol{\theta}}(\mathbf{z}_\gamma,\gamma)\|_2^2\bigr] \ d\gamma,
\end{equation}
where instead of integrating w.r.t. time $t$, now integrate w.r.t. the signal-to-noise ratio $\gamma$, and where ${\text{SNR}_{\text{max}}} =\gamma(1)$, ${\text{SNR}_{\text{min}}}=\gamma(0)$.

In a more general form, the objective in \eqref{loss} recovers the weighted continuous-time formulation of Eq.(66)~\cite{kingma2021variational}, where the weighting function is defined as $w(t):=\alpha_t^2$; equivalently, this corresponds to scaling the signal coefficient by $\alpha_t$ across time:
\begin{equation}%\label{15}
    \mathcal{L}^{w}_{\infty}(\mathbf{x}) := \frac{1}{2} \mathbb{E}_{\mathbf{n}\sim \mathcal{N}(0,\mathbf{I}), t\sim\mathcal{U}(0,1)}[\eta'(t)\alpha_t^{2}\|\mathbf{n}-\hat{\mathbf{n}}_{\boldsymbol{\theta}}(\mathbf{y}_t;t)\|_2^2],
\end{equation}

Moreover, \eqref{loss} can be further unified under the stochastic differential equation (SDE) framework~\cite{song2020score}:
\begin{equation}
    d\mathbf{x} = \boldsymbol{f}(\mathbf{x},t)\,dt +g(t) \,d\mathbf{w},
\end{equation}
where $\boldsymbol{f}(\mathbf{x},t):\mathbb{R}^D\rightarrow\mathbb{R}^D$, $g(t)\in\mathbb{R}$ are manually designed noise schedules and $\mathbf{w} \in \mathbb{R}^D$ is a standard Wiener process. When the variance function satisfies $d\sigma_t^2/dt =g^2(t)$, the corresponding loss aligns with the likelihood-weighted score-matching objective~\cite{song2021maximum}:
\begin{align}
    \mathcal{J}_{\text{SM}}(\boldsymbol{\theta}):= \int_0^T \frac{g^2(t)}{2\sigma_t^2}\mathbb{E}_{\mathbf{x}\sim p(\mathbf{x}),\,\mathbf{n}\sim\mathcal{N}(0,\mathbf{I})}\bigl[\|\sigma_t\hat{\boldsymbol{s}}(\mathbf{y}_t;{\boldsymbol{\theta}})+\mathbf{n}\|_2^2\bigr]\,dt.
\end{align}
The same principle extends to deterministic flow-based formulations, including continuous-time ODEs and higher-order probability flows~\cite{lu2022maximum,song2021maximum, zheng2023improved} with specific design of the $g^2(t)$. In this case, the training objective becomes:
\begin{align}
    \mathcal{J}_{\text{SM}}^{\eta}(\boldsymbol{\theta}) := \frac{1}{2}\int_{\eta_0}^{\eta_1} \mathbb{E}_{\mathbf{x}\sim p(\mathbf{x}),\, \mathbf{n}\sim\mathcal{N}(0,\mathbf{I})}\bigl[\|\sigma_t\hat{\boldsymbol{s}}(\mathbf{y}_{\eta};{\boldsymbol{\theta}})+\mathbf{n}\|_2^2\bigr]\,d\eta.
\end{align}

This reformulation highlights that the loss landscape is intimately linked to the choice of noise schedule, particularly the functional form of \( \sigma_t^2 \) as a function of log-SNR \( \eta \). Since both the weighting in the integrand and the resulting likelihood bounds depend explicitly on \( \sigma_t^2 \), different scheduling strategies can lead to substantial differences in training dynamics, gradient variance, and overall model performance. In the following, we provide both theoretical and empirical analyses to quantify how the design of \( \sigma_t^2 \) influences estimation accuracy, stability, and convergence behaviour.

\subsection{The Noise Schedule Matters}\label{B.1}

Many practical objectives, including the ones above, are special cases of the likelihood–weighting loss
\cite{kingma2023understanding,kingma2021variational,song2021maximum}. Let
$\eta=g(t)$ be a differentiable, monotone noise parameterization of time
$t\in[0,1]$ with fixed endpoints $\eta_{\min}=g(0)$ and $\eta_{\max}=g(1)$.
For a per-example loss, define
\begin{equation}\label{eq:weighted-loss-eta}
\mathcal{L}_{w}(\mathbf{x})
:= \frac{1}{2}\int_{\eta_{\min}}^{\eta_{\max}}
w(\eta)\ \mathbb{E}_{\mathbf n\sim\mathcal N(0,\mathbf I)}
\!\left[\left\|\mathbf n-\hat{\mathbf n}\bigl(\mathbf y_{\eta},\eta;\boldsymbol\theta\bigr)\right\|_2^2\right]\, d\eta,
\end{equation}
where $\mathbf y_{\eta}$ denotes the noisy input at noise level $\eta$ (e.g.,
via a log-SNR parameterization). As shown in \cite{kingma2021variational}, the
\emph{integral} in \eqref{eq:weighted-loss-eta} is invariant to smooth
reparameterizations of $\eta$ (i.e., to the choice of schedule) so long as the
endpoints $\eta_{\min},\eta_{\max}$ are kept fixed. This invariance extends to
the weighted diffusion loss family \cite{kingma2023understanding} because the
integrand is measured per unit $\eta$.

However, this invariance does \emph{not} carry over to the Monte Carlo (MC)
estimator used in evaluation. If we sample $t\sim\mathcal U(0,1)$ and
$\mathbf n\sim\mathcal N(0,\mathbf I)$, then by change of variables,
\begin{equation}\label{eq:mc-unif-t}
\mathcal{L}_{w}(\mathbf{x})
= \frac{1}{2}\ \mathbb{E}_{t\sim\mathcal U(0,1),\,\mathbf {n}\sim\mathcal{N}(0,\mathbf{I})}\!
\left[\, w\bigl(\eta(t)\bigr)\ \Bigl|\tfrac{d\eta}{dt}(t)\Bigr|\
\left\|\mathbf n-\hat{\mathbf n}\!\left(\mathbf y_{\eta(t)},\eta(t);\boldsymbol\theta\right)\right\|_2^2 \right].
\end{equation}
More generally, if we sample directly $\eta\sim\rho(\eta)$ over
$[\eta_{\min},\eta_{\max}]$ (with $\rho>0$ a.e.), then
\begin{equation}\label{eq:mc-importance-eta}
\mathcal{L}_{w}(\mathbf{x})
= \frac{1}{2}\ \mathbb{E}_{\eta\sim\rho,\ \mathbf n}\!
\left[\, \frac{w(\eta)}{\rho(\eta)}\
\left\|\mathbf n-\hat{\mathbf n}\bigl(\mathbf y_{\eta},\eta;\boldsymbol\theta\bigr)\right\|_2^2 \right],
\end{equation}
revealing that the \emph{noise schedule} induces an \emph{importance-sampling
distribution} $\rho(\eta)$ over $\eta$ and a corresponding weighting function $w(\eta)$ related to $\sigma_t^2(\eta)$, simultaneously. Consequently, while the population loss is invariant
to reparameterization, the \emph{variance} of the MC estimator (and of its
gradients) depends on $\rho$. Choosing $\rho$ to better match the integrand can
substantially reduce estimator variance and thus accelerate optimization. In
particular, the variance-minimizing density satisfies the
heuristic proportionality
$\rho^\star(\eta)\propto \mathbb E_{\mathbf{x}\sim\mathcal{X}, \mathbf{n}\sim\mathcal{N}(0,\mathbf{I})}\!\left[w(\eta)\left\|\mathbf n-\hat{\mathbf n}(\mathbf y_{\eta},\eta;\boldsymbol\theta)\right\|_2^2\right]\,
$.

\subsection{Variance Reduction with Importance Sampling}\label{VR:IS}

Monte Carlo approximation offers a computationally efficient alternative to exact integration, but typically introduces variance into the training objective. To mitigate this, we adopt both designed and learned importance sampling (IS) strategies, as outlined in Section~\ref{importance sampling}. As mentioned in Section.~\ref{B.1}, the diffusion model is conducted for all $\eta$ in $[\eta_0,\eta_1]$ through an integral. In practice, the evaluation of the integral is time-consuming, and Monte-Carlo methods are used to unbiasedly estimate the objective by uniformly sampling $\eta$. Thus, a continuous importance distribution $\rho(\eta)$ can be proposed for variance reduction. Denote \[\mathcal{L}(\mathbf{x},\mathbf{n},\eta;\boldsymbol{\theta}) :=  \frac{\alpha_t^{2}\|\mathbf{n} - \hat{\mathbf{n}}(\mathbf{y}_\eta,\eta;{\boldsymbol{\theta}})\|_2^2}{2}, \]
then
\begin{equation}
    \mathcal{L}_{w}(\mathbf{x};\boldsymbol{\theta}) = \mathbb{E}_{\eta\sim\rho(\eta)}\mathbb{E}_{\mathbf{x},\mathbf{n}}\left[\frac{\mathcal{L}(\mathbf{x},\mathbf{n},\eta;\boldsymbol{\theta})}{\rho(\eta)}\right].
\end{equation}
We propose to use two types of importance sampling (IS), and empirically compare them for faster convergence. 

\subsubsection{Designed IS}

 In particular, we propose a continuous proposal distributions $\rho(\eta)$ over \(\eta\), which is proportional $\alpha_\eta^2$. Since we have explicit expressions for the density, we utilize inverse transform sampling to design a sampling procedure. Concretely, we take uniform samples of a number $t\in[0,1]$, and solve the following equation about $\eta_t$:
\begin{equation}
    \frac{1}{Z}\int_{\eta_0}^{\eta_1} \alpha_{\eta}^{2} \,d\eta = t, \quad  Z= \int_{\eta_0}^{\eta_1} \alpha_{\eta}^{2} \,d\eta,
\end{equation}
where we have defined maximum time $t=1$, and $Z$ is a normalizing constant.

\paragraph{Uniform Weighting in \(\eta\)-Space.}  
As a canonical example, we consider uniform weighting in the log-SNR domain. In this case, the proposal distribution becomes \(\rho(\eta) \propto \alpha_\eta^2\), yielding:
\[
\quad Z = \int_{\eta_0}^{\eta_1} \text{sigmoid}(-\eta)\, d\eta = \log \left( \frac{1 + e^{-\eta_0}}{1 + e^{-\eta_1}} \right).
\]
Also, for squashed hyperbolic tangent, we have:
\[\quad Z = \int_{\eta_0}^{\eta_1} \text{sigmoid}(-2\eta)\, d\eta = \frac{1}{2} \log \left( \frac{1 + e^{-2\eta_0}}{1 + e^{-2\eta_1}} \right).\]
To enable inverse transform sampling, we define the antiderivative \(g(\eta) := \log(1 + e^{-\eta})\), and set \(l_0 = g(\eta_0)\), \(l_1 = g(\eta_1)\). The corresponding cumulative distribution function (CDF) is then:
\[
F(\eta) = \frac{l_0 - g(\eta)}{Z}, \quad \eta \in [\eta_0, \eta_1].
\]

Sampling is performed by drawing \(u \sim \mathcal{U}(0,1)\) and solving for \(\eta\) such that \(F(\eta) = u\). Since \(F(\eta)\) is smooth and strictly monotonic, this inverse can be computed via root-finding or a precomputed lookup table. The method is similar to the implementation in \cite{song2021maximum,zheng2023improved}.

\paragraph{Uniform weighting in $t$-space.}
As discussed in Section \ref{Variance-Aware Likelihood Bounds}, drawing $t$ uniformly from the interval $[0,1]$ amounts to using the proposal density $\rho(\eta) = 1/\alpha_{\eta(t)}^2$.  With this baseline choice the objective in Eq.~\eqref{15} can be optimized in its native form: each Monte-Carlo sample is simply re-weighted by the factor $w(\eta)=\alpha_{t}^{2}\,\eta'(t)$, which bundles the forward‐process weight $\alpha_{t}^{2}$ and the Jacobian $\eta'(t)$.  Although the resulting estimator exhibits higher variance than the importance-sampling schemes introduced earlier, it provides a clear reference point for measuring the effectiveness of less sophisticated designs.

\subsubsection{Learned IS}\label{learnedIS}
The variance of the Monte-Carlo estimator depends on the learned network $\hat{\mathbf{n}}(\cdot;\boldsymbol{\theta})$. To minimize the variance, we can parameterize the IS with another network and treat the variance as an objective \cite{kingma2021variational,zheng2023improved}. Actually, learning $\rho(\eta)$ is equivalent to learning a monotone mapping. Thus, we can uniformly sample $t$, and regard the IS as change-of-variable from $\eta$ to $t$:
\begin{equation}
    \mathcal{L}_{w}(\mathbf{x};\boldsymbol{\psi},\boldsymbol{\theta}) = \mathbb{E}_{t\sim\mathcal{U}(0,1)}\mathbb{E}_{\mathbf{x},\mathbf{n}}\left[\eta'(t;\boldsymbol{\psi})\,\mathcal{L}(\mathbf{x},\mathbf{n},\eta;\boldsymbol{\theta})\right],
\end{equation}
where $\eta'(t;\boldsymbol{\psi}) = d\eta(t;\boldsymbol{\psi})/dt $ and $\eta(t;\boldsymbol{\psi})$ is a monotonic neural network with parameters $\boldsymbol{\psi}$. Since the variance $\text{Var}_{\mathbf{x},\mathbf{n},t}[\eta'(t;\boldsymbol{\psi})\,\mathcal{L}(\mathbf{x},\mathbf{n},\eta;\boldsymbol{\theta})] = \mathbb{E}_{\mathbf{x},\mathbf{n},t}[\big(\eta'(t;\boldsymbol{\psi})\,\mathcal{L}(\mathbf{x},\mathbf{n},\eta;\boldsymbol{\theta})\big)^2] - \big(\mathcal{L}_{w}(\mathbf{x};\boldsymbol{\psi},\boldsymbol{\theta})\big)^2$, where $\mathcal{L}_{w}(\mathbf{x};\boldsymbol{\psi},\boldsymbol{\theta})$ is proved \cite{kingma2021variational} invariant to $\eta(t;\boldsymbol{\psi})$, we can minimize $\mathbb{E}_{\mathbf{x},\mathbf{n},t}[\big(\eta'(t;\boldsymbol{\psi})\,\mathcal{L}(\mathbf{x},\mathbf{n},\eta;\boldsymbol{\theta})\big)^2]$ for variance reduction. At this point, we parameterize $\eta(t;\boldsymbol{\psi})$ similar to \cite{kingma2021variational}. where the network consists of 3 linear layers with weights that are restricted to be positive $l_1, l_2,l_3$, which are composed as $\tilde{\eta}(t;\boldsymbol{\psi}) = l_1(t) + l_3(\text{sigmoid}(l_2(l_1(t))))$. The $l_2$ layer has 1024 outputs, where $l_1$ and $l_3$ have a single output.

We therefore postprocess the monotonic neural network as
\begin{equation}
    \eta(t;\boldsymbol{\psi}) = \eta_0 + (\eta_1 - \eta_0)\frac{\tilde{\eta}(t;\boldsymbol{\psi})-\tilde{\eta}(0;\boldsymbol{\psi})}{\tilde{\eta}(1;\boldsymbol{\psi})-\tilde{\eta}(0;\boldsymbol{\psi})}
\end{equation}
where the constants $\eta_0 = -\log(\text{SNR}_\text{MAX})$, $\eta_1 = -\log(\text{SNR}_\text{MIN})$ define the target SNR range. This construction ensures that $\eta(t;\boldsymbol{\psi})$ remains monotonic and bounded. By adjusting the network parameters $\boldsymbol{\psi}$ we adapt the time-to-noise mapping $t \mapsto \eta(t)$ to match regions with higher loss variance, thereby allocating more samples to informative regions of the diffusion trajectory.

\paragraph{Overhead.}
While this approach seeks the optimal IS, learning $\eta$ adds a lightweight auxiliary network training steps and complex gradient operation through $\eta(t;\boldsymbol{\psi})$, but no extra score-network passes. Hence, we use it mainly to benchmark the optimality of hand-designed IS; more aggressive adaptive schemes following \cite{kingma2023understanding,zheng2023improved} are compatible.

\subsection{Log-SNR-Timed Channel with Different Variance Functions}\label{B.2}

Given the central role of \(\sigma_t^2\) in shaping both the weighting of the loss and the structure of the forward process, it is natural to ask how different functional forms of \(\sigma_t^2(\eta)\) affect the overall behaviour of the model. In particular, we investigate how alternative variance functions under the log-SNR parameterization influence the resulting likelihood bound, optimization dynamics, and numerical stability. This section provides both analytical insights and empirical comparisons across several representative schedules. As state in Section~\ref{Variance-Aware Likelihood Bounds}, using $\eta$ timing and denoising score matching estimator, the variance weighted objectives are reformulated as:
\begin{equation}
    \mathcal{L}_{\text{DSM}}(\sigma_t^2;\boldsymbol{\theta}) = \frac{1}{2} \int_{\sigma_0^2}^{\sigma_1^2} \mathbb{E}_{\mathbf{n}}  \left[\sigma_t^{-2}\|\mathbf{n} - \hat{\mathbf{n}}(\mathbf{y}_t,\eta_t;{\boldsymbol{\theta}})\|_2^2 \right]\ d\sigma_t^2.
\end{equation}
Due to the $\eta$-timed schedule properties, we replace the time subscript with $\eta$, $\alpha_\eta$ and $\sigma_\eta$ are deterministic functions of $\eta$ without any hyperparameters.

\paragraph{Logistic Sigmoid.}  
We begin with the benchmark variance function \( \sigma_\eta^2 = \mathrm{sigmoid}(\eta) := \frac{1}{1 + e^{-\eta}} \), which is widely adopted in VP-based diffusion models~\cite{kingma2021variational}. This schedule maps variance smoothly into the unit interval, with most of the variation concentrated around \( \sigma_\eta^2 = 0 \) \cite{zheng2023improved}. Its symmetry and boundedness make it a natural default choice, but also raise questions about its flexibility in capturing tail behaviour and class imbalance. We have the analytic form of the derivative of $d\sigma_\eta^2= (1-\sigma_\eta^2) \sigma_\eta^2d\eta$:
\begin{equation}
    \mathcal{L}_{\text{DSM}}(\sigma_t^2;\boldsymbol{\theta}) = \frac{1}{2} \int_{\eta_0}^{\eta_1} \mathbb{E}_{\mathbf{n}}  \left[(1-\sigma_\eta^2)\|\mathbf{n} - \hat{\mathbf{n}}(\mathbf{y}_t,\eta_t;{\boldsymbol{\theta}})\|_2^2 \right]\ d\eta.
\end{equation}

\paragraph{Generalized Logistic Sigmoid.}
Then we consider the type I generalized logistic sigmoid function \cite{johnson1995continuous} $\sigma_\eta^2 = \text{sigmoid}^{a}(
\eta
) = (\frac{1}{1+e^{-\eta}})^{a}$ with $a$ is strict positive coefficient. We have $d\sigma_\eta^2= a (1- \text{sigmoid}(\eta)) \sigma_\eta^2d\eta$:
\begin{equation}
    \mathcal{L}_{\text{DSM}}(\sigma_t^2;\boldsymbol{\theta}) = \frac{a}{2} \int_{\eta_0}^{\eta_1} \mathbb{E}_{\mathbf{n}}  \left[\text{sigmoid}(-\eta)\|\mathbf{n} - \hat{\mathbf{n}}(\mathbf{y}_t,\eta_t;{\boldsymbol{\theta}})\|_2^2 \right]\ d\eta.
\end{equation}

\paragraph{Squashed Hyperbolic tangent.} Then we consider the Tanh squash function $\sigma_\eta^2 = \frac{1}{2}(\text{tanh}(\eta) +1)$, we have the analytic form of the derivative of $d\sigma_\eta^2= 2(1-\sigma_\eta^2)  \sigma_\eta^2d\eta$:
\begin{equation}
    \mathcal{L}_{\text{DSM}}(\sigma_t^2;\boldsymbol{\theta}) = \int_{\eta_0}^{\eta_1} \mathbb{E}_{\mathbf{n}}  \left[(1-\sigma_\eta^2)\|\mathbf{n} - \hat{\mathbf{n}}(\mathbf{y}_t,\eta_t;{\boldsymbol{\theta}})\|_2^2 \right]\ d\eta.
\end{equation}

\begin{table}[t]
  \caption{Specification of related values and objectives under VP schedule. With $s(\eta)=\text{sigmoid}(\eta)$, $s_{-} = \text{sigmoid}(-\eta)$, $\kappa = s^{a/2}$.}
  \label{tab:vp_spec}
  \centering
  \small 
  \resizebox{\textwidth}{!}{%
  \begin{tabular}{@{}lcc@{}}
    \toprule
    \textbf{Formula} & \textbf{Generalized Logistic Sigmoid} & \textbf{Squashed Hyperbolic tangent} \\ 
    \midrule
    $\alpha_\eta$ 
      & $\displaystyle\sqrt{1-\left(\frac{1+\text{tanh}(\eta/2)}{2}\right)^a}$ 
      & $\displaystyle\sqrt{\frac{1}{1+\exp(2\eta)}}$ \\[4pt]

    $\sigma_\eta$ 
      & $\displaystyle\kappa(\eta)$ 
      & $\displaystyle\sqrt{\frac{1}{1+\exp(-2\eta)}}$ \\[4pt]

    $\mathcal{L}_{\text{DSM}}$ 
      & $\dfrac{a}{2}\displaystyle\int_{\eta_{0}}^{\eta_{1}}\!
         s_{-}(\eta)\,
         \mathbb{E}_{\mathbf{n}}[
         \|\mathbf{n} - \hat{\mathbf{n}}(\mathbf{y}_t,\eta;{\boldsymbol{\theta}})\|_2^2]\,
         d\eta$
      & $\displaystyle\int_{\eta_{0}}^{\eta_{1}}\!
         s_{-}(2\eta)\,
         \mathbb{E}_{\mathbf{n}}\,[
         \|\mathbf{n} - \hat{\mathbf{n}}(\mathbf{y}_t,\eta;{\boldsymbol{\theta}})\|_2^2]\,
         \,\mathrm{d}\eta$ \\[14pt]

    \bottomrule
  \end{tabular}%
  }
\end{table}

\begin{table}[t]
  \caption{Specification of related values and objectives under SP schedule. With $s(\eta)=\text{sigmoid}(\eta)$, $s_{-} = \text{sigmoid}(-\eta)$.}
  \label{tab:sp_spec}
  \centering
  \small  
  \resizebox{\textwidth}{!}{%
  \begin{tabular}{@{}lcc@{}}
    \toprule
    \textbf{Formula} & \textbf{Generalized Logistic Sigmoid} & \textbf{Squashed Hyperbolic tangent} \\ 
    \midrule
    $\alpha_\eta$ 
      & $\displaystyle1-\sqrt{\text{sigmoid}(\eta)^{a}}$ 
      & $1-\sqrt{\frac{1}{1+\exp(-2\eta)}}$ \\[4pt]

    $\sigma_\eta$ 
      & $\displaystyle\sqrt{\text{sigmoid}(\eta)^{a}}$ 
      & $\sqrt{\frac{1}{1+\exp(-2\eta)}}$ \\[4pt]

    $\mathcal{L}_{\text{DSM}}$ 
      & $\dfrac{a}{2}\displaystyle\int_{\eta_{0}}^{\eta_{1}}\!
         s_{-}(\eta)\,
         \mathbb{E}_{\mathbf{n}}[
         \|\mathbf{n} - \hat{\mathbf{n}}(\mathbf{y}_t,\eta;{\boldsymbol{\theta}})\|_2^2]\,
         d\eta$
      & $\displaystyle\int_{\eta_{0}}^{\eta_{1}}\!
         s_{-}(2\eta)\,
         \mathbb{E}_{\mathbf{n}}\,[
         \|\mathbf{n} - \hat{\mathbf{n}}(\mathbf{y}_t,\eta;{\boldsymbol{\theta}})\|_2^2]\,
         \mathrm{d}\eta$ \\[14pt]

    \bottomrule
  \end{tabular}%
  }
\end{table}

Thus, we can derive their specific objectives and equivalent predictors using the formula for general noise schedules. We summarize them in Table~\ref{tab:vp_spec} and Table~\ref{tab:sp_spec}.

\subsection{Empirical Results}\label{app:experiment}

\begin{table}[t]
  \centering
  \caption{Comparison of NLLs and FID on CIFAR-10 under different noise variance settings.}
  \label{tab:side_by_side}
  \begin{subtable}[t]{0.45\textwidth}
    \centering
    \small
    \caption{NLL bounds with different noise variances}
    \begin{tabular}{lcc}
      \toprule
      \textbf{Likelihood Bounds} & \(\eta_0=-8.7\) & \(\eta_0=-13.3\) \\
      \midrule
      ELBO                       & 3.99           & 2.75            \\
      \textbf{Ours} (SP with IS)          & 2.49           & 2.79            \\
      \textbf{Ours} (VP with IS)          & 2.50           & 2.78            \\
      \bottomrule
    \end{tabular}
    \label{tab:full_bounds}
  \end{subtable}
  \hfill
  \begin{subtable}[t]{0.45\textwidth}
    \centering
    \small
    \caption{FID of CIFAR-10 with different losses}
    \begin{tabular}{lcc}
      \toprule
      \textbf{Loss}            & \(\eta_0=-8.7\) & \(\eta_0=-13.3\) \\
      \midrule
      ELBO                  & 14.60           & 11.9           \\
      \textbf{Ours} (VP)                  & 10.18           & 9.42            \\
      \bottomrule
    \end{tabular}
    \label{tab:endpoint_bounds}
  \end{subtable}
\end{table}

Tables~\ref{tab:schedule_ablation} and~\ref{tab:schedule} report various configurations of the noise variance function and its endpoints in the VP setting. Empirically, increasing the noise variance at the initial stage of the diffusion process yields consistently improved likelihood bounds.

We begin with the benchmark configuration adopted in~\cite{kingma2021variational}, where  $\sigma_t^2 = \mathrm{sigmoid}(\eta(t))$ and the initial negative log-SNR value is set to $\eta_0 = -13.3$.  Under this setting, the model achieves an NLL of 2.81 bits/dim after 300K iterations, which is comparable to the ELBO result of 2.75 reported on CIFAR-10 with the same number of iterations. 

When using $a=0.5$, i.e., $\sigma_t^2 = \mathrm{sigmoid}(\eta)^{1/2}$, the results remain promising. However, for $\eta_0=-8.7$, we note that $\mathrm{sigmoid}(-8.7)^{1/2}\approx1.29\times10^{-2}$, which falls outside the regime where the assumption $\sigma_0^2\!\ll\!1$ holds strictly, thus violating the theoretical conditions underlying our analysis. Nevertheless, this configuration can be regarded as a practical compromise, yielding strong  empirical performance despite the theoretical violation.

We further observe that increasing the warm-up noise variance generally leads to degraded FID scores. This motivates a more systematic investigation of such extreme configurations, as summarized in Tables~\ref{tab:schedule}, \ref{tab:schedule_SP_1}, and~\ref{tab:schedule_SP_2}. Empirically, the likelihood performance appears insensitive to the detailed shape of the variance schedule, but depends primarily on the endpoint values $\sigma_0^2$ and $\sigma_1^2$. These findings suggest that careful tuning of the endpoint variances can have a larger effect than modifying the overall schedule shape, particularly for likelihood-oriented objectives, an observation consistent with prior analyses in~\cite{ho2020denoising,kim2022soft,kingma2021variational}.

\begin{table}[t]
\centering
\caption{
Performance comparison across generalized sigmoid schedules with varying exponent \(a\), initial log-SNR \(\eta_0\), and importance sampling strategies on CIFAR-10. Bold denotes the best NLL and FID within each group.
}
\label{tab:schedule_ablation}
\renewcommand{\arraystretch}{1.2}
\setlength{\tabcolsep}{6pt}
\begin{tabular}{@{}cllccl@{}}
\toprule
\textbf{Schedule \(a\)} & \textbf{\(\eta_0\)} & \textbf{Sampling Strategy} & \textbf{NLL (↓)} & \textbf{FID (↓)} \\
\midrule

$a=1$
  & $-13.3$
  & Uniform in \(t\)            & 2.81 & 9.79 &  \\
  &                            & IS: Learned Importance    & \textbf{2.75} & \textbf{9.12}  \\
  &                            & IS: Uniform in \(\eta\)         & 2.78 & 9.42\\

\midrule

$a=1$
  & $-8.7$
  & Uniform in \(t\)            & 2.62 & 14.9  \\
  &                            & IS: Learned Importance                & \textbf{2.50} & \textbf{10.05} \\
  &                            & IS: Uniform in \(\eta\)           & 2.50 & 10.18  \\
\midrule$a=2$
  & $-13.3$
  & Uniform in \(t\)            & 3.62 & /  \\
  &                            & IS: Learned Importance                & \textbf{3.47} & /  \\
  &                            & IS: Uniform in \(\eta\)           & 3.51 & / \\
\midrule
$a=2$
  & $-8.7$
  & Uniform in \(t\)            & 3.16 & /  \\
  &                            & IS: Learned Importance                & 3.05 & / \\
  &                            & IS: Uniform in \(\eta\)           & \textbf{3.04} & /  \\

\midrule$a=0.5$
  & $-13.3$
  & Uniform in \(t\)            & 2.54 & / \\
  &                            & IS: Learned Importance                & \textbf{2.45} & /  \\
  &                            & IS: Uniform in \(\eta\)           & 2.47 & / \\

\midrule

$a=0.5$
  & $-8.7$
  & Uniform in \(t\)            & 2.61 & / \\
  &                            & IS: Learned Importance                & \textbf{2.31} & / \\
  &                            & IS: Uniform in \(\eta\)           & 2.32 & / \\

\bottomrule
\end{tabular}
\end{table}

\begin{table}[t]
\centering
\caption{
Performance comparison across squashed Hyperbolic tangent schedules with log-SNR endpoints \(\eta_0\), $\eta_1$, and importance sampling strategies on CIFAR-10 with VP. Bold denotes the best NLL within each group.
}
\label{tab:schedule}
\renewcommand{\arraystretch}{1.2}
\setlength{\tabcolsep}{6pt}
\begin{tabular}{@{}cllccl@{}}
\toprule
\textbf{\(\eta_0\)} & \textbf{\(\eta_1\)} & \textbf{Sampling Strategy} & \textbf{NLL (↓)}  \\
\midrule

$-13.3$
  &$5$
  & Uniform in \(t\)            & 3.83 &  \\
  &                            & IS: Learned Importance    & \textbf{3.77}  \\
  &                            & IS: Uniform in \(\eta\)         & 3.79 \\

\midrule
$-8.7$
  & $5$
  & Uniform in \(t\)            & 3.35  &  \\
  &                            & IS: Learned Importance    & \textbf{3.15} \\
  &                            & IS: Uniform in \(\eta\)         & 3.18 \\
\midrule
$-4.33$
  & $2.5$
  & Uniform in \(t\)            & 2.61  &  \\
  &                            & IS: Learned Importance    & \textbf{2.51} \\
  &                            & IS: Uniform in \(\eta\)         & 2.53 \\

\bottomrule
\end{tabular}
\end{table}

\begin{table}[t]
\centering
\caption{
Performance comparison across generalized sigmoid schedules with varying exponent \(a\), initial log-SNR \(\eta_0\), and importance sampling strategies on CIFAR-10 with SP. Bold denotes the best NLL and FID within each group.
}
\label{tab:schedule_SP_1}
\renewcommand{\arraystretch}{1.2}
\setlength{\tabcolsep}{6pt}
\begin{tabular}{@{}cllccl@{}}
\toprule
\textbf{\(a\)} & \textbf{\(\eta_0\)} & \textbf{Sampling Strategy} & \textbf{NLL (↓)}  \\
\midrule

$a=1$
  & $-13.3$
  & Uniform in \(t\)            & 2.82 &  \\
  &                            & IS: Learned Importance    & \textbf{2.77}  \\
  &                            & IS: Uniform in \(\eta\)         & 2.79 \\

\midrule
$a=1$
  & $-8.7$
  & Uniform in \(t\)            & 2.62 &  \\
  &                            & IS: Learned Importance    & \textbf{2.49}  \\
  &                            & IS: Uniform in \(\eta\)         & 2.50 \\

\midrule
$a=0.5$
  & $-13.3$
  & Uniform in \(t\)            & 2.53  &  \\
  &                            & IS: Learned Importance    & 2.47 \\
  &                            & IS: Uniform in \(\eta\)         & \textbf{2.45} \\
\midrule
$a=0.5$
  & $-8.7$
  & Uniform in \(t\)            & 2.61  &  \\
  &                            & IS: Learned Importance    & \textbf{2.31} \\
  &                            & IS: Uniform in \(\eta\)         & 2.33 \\

\bottomrule
\end{tabular}
\end{table}

\begin{table}[t]
\centering
\caption{
Performance comparison across squashed Hyperbolic tangent schedules with log-SNR endpoints \(\eta_0\), $\eta_1$, and importance sampling strategies on CIFAR-10 with SP. Bold denotes the best NLL within each group.
}
\label{tab:schedule_SP_2}
\renewcommand{\arraystretch}{1.2}
\setlength{\tabcolsep}{6pt}
\begin{tabular}{@{}cllccl@{}}
\toprule
\textbf{\(\eta_0\)} & \textbf{\(\eta_1\)} & \textbf{Sampling Strategy} & \textbf{NLL (↓)}  \\
\midrule

$-13.3$
  & $5$
  & Uniform in \(t\)            & 3.82 &  \\
  &                            & IS: Learned Importance    & 3.76  \\
  &                            & IS: Uniform in \(\eta\)         & \textbf{3.73} \\

\midrule
$-8.7$
  & $5$
  & Uniform in \(t\)            & 3.32  &  \\
  &                            & IS: Learned Importance    & \textbf{3.14} \\
  &                            & IS: Uniform in \(\eta\)         & 3.18 \\
\midrule
$-4.33$
  & $2.5$
  & Uniform in \(t\)            & 2.60  &  \\
  &                            & IS: Learned Importance    & \textbf{2.51} \\
  &                            & IS: Uniform in \(\eta\)         & 2.51 \\

\bottomrule
\end{tabular}
\end{table}

\subsection{Likelihood Estimation Comparison for IT-bound and ELBO on CIFAR-10}\label{app:elbo_sensivitity}

\begin{table}[t]
\centering
\caption{Decomposition of ELBO and Information-Theoretic Bound (IT) with initial endpoint $\eta_0=-13.3$. Values reported on CIFAR-10 in 310K iterations (↓ lower is better).}
\label{tab:elbo_it_decomposition_1}
\renewcommand{\arraystretch}{1.3}
\begin{tabular}{lll}
\toprule
\textbf{Term} & \textbf{Description} & \textbf{Value (bits/dim)} \\
\midrule
& \textit{Evidence Lower Bound (ELBO)} & \textit{Total = 2.794} \\
\midrule
Term 1 & KL divergence between $p(\mathbf{y}_1\vert\mathbf{x})$ and prior $\pi(\mathbf{x})$ & 0.0012358 \\
Term 2 & Reconstruction loss ($-\mathbb{E}_{p(\mathbf{y}_0\vert\mathbf{x})}[\log q(\mathbf{x}\vert\mathbf{y}_0)]$) & 0.0103869 \\
Term 3 & Diffusion loss ($\sum \mathbb{E}_{p(\mathbf{y}_{t(i)}\vert\mathbf{x})}D_{\text{KL}}[p(\mathbf{y}_{s(i)}\vert\mathbf{y}_{t(i)},\mathbf{x})\|q(\mathbf{y}_{s(i)}\vert\mathbf{y}_{t(i)};\boldsymbol{\theta})]$) & 2.7836967 \\
\midrule
& \textit{Information-Theoretic Bound (IT)} & \textit{Total = 2.805} \\
\midrule
Term 1 & Integrated denoising score matching term & 0.7622061 \\
Term 2 & Mismatched entropy between $p(\mathbf{y}_1\vert\mathbf{x})$ and prior $\pi(\mathbf{x})$) & 2.0434874 \\
\bottomrule
\end{tabular}
\end{table}

Tables~\ref{tab:elbo_it_decomposition_1} and~\ref{tab:elbo_it_decomposition} report the average loss components of models trained on CIFAR-10. We observe that all IT-based models employing a standard sigmoid variance schedule achieve likelihood estimates that match or surpass those of the ELBO baseline. Notably, when combined with importance sampling, the IT bound consistently outperforms the ELBO across all configurations. However, when using uniform sampling in $t$-space, the IT-bound appears to trade off a small amount of likelihood for faster convergence, likely due to reduced variance in gradient estimation. We further observe that increasing the warm-up noise variance leads to a notable rise in the total ELBO. This increase is largely driven by the reconstruction term,
\begin{equation}
-\mathbb{E}_{p(\mathbf{y}_0\vert\mathbf{x})}[\log q(\hat{\mathbf{x}}\vert\mathbf{y}_0)],
\end{equation}
which becomes more pronounced as $p(\mathbf{y}_0\vert\mathbf{x})$ flattens under higher noise levels, making accurate reconstruction more challenging.

Intuitively, the loss structure is grounded in the auto-encoder paradigm \cite{Kingma2013AutoEncodingVB}, where one commonly assumes a factorised posterior $q(\hat{\mathbf{x}}\vert \mathbf{y}_0)$, implying conditional independence across the elements of $\hat{\mathbf{x}}$. In the context of image data, this assumption translates into the belief that pixel values are conditionally independent given the latent code $\mathbf{y}_0$, and each pixel depends only on its corresponding latent component. It follows the form:
\begin{align}
    q(\hat{x}_i|y_{0,i}) \propto p(y_{0,i}|x_i) = \mathcal{N}(y_{0,i};\alpha_0x_i,\sigma_0^2),
\end{align}
where we normalize over all possible values of $x_i$. However, we argue that this assumption is overly restrictive and does not hold in practice. Within the auto-encoding framework, $q(\hat{\mathbf{x}} \vert \mathbf{y}_0)$ can be interpreted as a decoder tasked with reconstructing the data. The approximation may hold when the SNR at $t=0$ is sufficiently high. In this case, the conditional distribution $p(\mathbf{y}_0\vert\mathbf{x})$ becomes sharply peaked around $\mathbf{y}_0 = \alpha_0 \mathbf{x}$, effectively imposing strong constraints on the reconstruction loss. Specifically, it induces high sensitivity to noise: small deviations in the warm-up noise level can significantly affect the fidelity of reconstructions. Moreover, since likelihood estimation is inherently sensitive to the probability of individual pixel values and fine-grained image details, this modelling choice represents a practical compromise rather than a principled solution. In effect, the assumption simplifies computation at the expense of capturing complex dependencies among pixels, which can be critical for accurate reconstruction and reliable likelihood evaluation.

Increasing the warm-up noise smooths the sharp concentration of $p(\mathbf{y}_0 | \mathbf{x})$, easing reconstruction by reducing the signal-to-noise ratio. In the VP setting, this implies $\alpha_0$ deviates more from 1, yielding more stable behaviour. Crucially, Theorem~\ref{theorem 1} ensures robustness under arbitrary noise, allowing principled tuning of the starting noise to balance stability and accuracy. See Appendix.~\ref{app:dequantization} for more related discussions.

Furthermore, we observe that small perturbations may fail to regularise extreme pixel values introduced during dequantisation, preventing effective mapping into a smooth continuous domain. Slightly increasing the initial noise helps suppress such outliers, resulting in improved likelihood behaviour and more stable training. Moreover, in score-based models with noise prediction, a very small starting noise can impair residual estimation near $t = 0$, degrading likelihood due to poor signal-noise separation. We conjecture that adopting a velocity-based parameterisation \cite{nielsen2024diffenc,zheng2023improved} ($\mathbf{v}$-network) may alleviate this issue.

\begin{table}[t]
\centering
\caption{Decomposition of ELBO and Information-Theoretic Bound (IT) with initial endpoint $\eta_0=-8.7$. Values reported on CIFAR-10 in 310K iterations (↓ lower is better).}
\label{tab:elbo_it_decomposition}
\renewcommand{\arraystretch}{1.3}
\begin{tabular}{lll}
\toprule
\textbf{Term} & \textbf{Description} & \textbf{Value (bits/dim)} \\
\midrule
& \textit{Evidence Lower Bound (ELBO)} & \textit{Total = 3.99} \\
\midrule
Term 1 & KL divergence between $p(\mathbf{y}_1\vert\mathbf{x})$ and prior $\pi(\mathbf{x})$ & 0.0012358 \\
Term 2 & Reconstruction loss ($-\mathbb{E}_{p(\mathbf{y}_0\vert\mathbf{x})}[\log q(\mathbf{x}\vert\mathbf{y}_0)]$) & 2.7219771 \\
Term 3 & Diffusion loss ($\sum \mathbb{E}_{p(\mathbf{y}_{t(i)}\vert\mathbf{x})}D_{\text{KL}}[p(\mathbf{y}_{s(i)}\vert\mathbf{y}_{t(i)},\mathbf{x})\|q(\mathbf{y}_{s(i)}\vert\mathbf{y}_{t(i)};\boldsymbol{\theta})]$) & 1.2691765 \\
\midrule
& \textit{Information-Theoretic Bound (IT)} & \textit{Total = 2.51} \\
\midrule
Term 1 & Integrated denoising score matching term & 0.4663643 \\
Term 2 & Mismatched entropy between $p(\mathbf{y}_1\vert\mathbf{x})$ and prior $\pi(\mathbf{x})$) & 2.0435017 \\
\bottomrule
\end{tabular}
\end{table}

\section{Samples Quality and FID}\label{App:FID}

\begin{table}[t]
\centering
\caption{
Likelihood in bits per dimension (BPD) and sample quality (FID scores) on CIFAR-10 and ImageNet-32, for vanilla VDM, MuLAN and ours. "\textsuperscript{†}" indicates the result from \cite{sahoo2024diffusion} for 10K samples generated using an adaptive-step ODE solver.
}
\label{tab:vlb_fid}
\small
\setlength{\tabcolsep}{6pt}
\begin{tabular}{lccccccc}
\toprule
\textbf{Model} & \multicolumn{3}{c}{\textbf{CIFAR-10}} & \multicolumn{3}{c}{\textbf{ImageNet-32}} \\
\cmidrule(lr){2-4} \cmidrule(lr){5-7}
& Steps & VLB ($\downarrow$) & FID ($\downarrow$) & Steps & VLB ($\downarrow$) & FID ($\downarrow$) \\
\midrule
VDM \cite{kingma2021variational}     & 10M & 2.65 & \textbf{7.6} & 2M & 3.72 & 14.26\textsuperscript{†}  \\
+ MuLAN \cite{sahoo2024diffusion}     & 2M & 2.65 & 18.54 & 2M & 3.71 & \textbf{13.19}  \\
\textbf{Ours} ($\eta_0 = -13.3$)                     & 0.3M & 2.78 & 9.42 & 0.3M & 3.28 & 13.80 \\
\textbf{Ours} ($\eta_0 = -8.7$)                     & 0.3M & \textbf{2.50} & 10.18 & 0.3M & \textbf{3.01} & 14.76  \\
\bottomrule
\end{tabular}
\end{table}

\begin{table}[t]
\centering
\caption{
Comparison of the mean FID scores with standard error for our model on CIFAR-10 with $\text{sigmoid}$ noise schedule after 0.3M steps. We provide both FID scores on 10K and 50K samples and with respect to both train and test set.
}
\label{tab:fid_comparison}
\small
\setlength{\tabcolsep}{10pt}
\begin{tabular}{lcccc}
\toprule
\textbf{Model} & \textbf{FID 10K train} & \textbf{FID 10K test} & \textbf{FID 50K train} & \textbf{FID 50K test} \\
\midrule
\textbf{Ours} ($\eta_0 = -8.7$)        & $11.85 \pm 0.2$ & $12.91 \pm 0.2$ & $10.18 \pm 0.2$ & $10.90 \pm 0.2$ \\
\textbf{Ours} ($\eta_0 = -13.3$)    & $10.16 \pm 0.2$ & $11.53 \pm 0.3$ & $9.41 \pm 0.2$ & $9.50 \pm 0.2$ \\
DiffEnc \cite{nielsen2024diffenc}  & $14.6 \pm 0.8$ & $18.5 \pm 0.7$ & $11.1 \pm 0.8$ & $15.0 \pm 0.7$ \\
\bottomrule
\end{tabular}
\end{table}

\begin{table}[t]
\centering
\caption{
Comparison of the mean FID scores with standard error for our model on ImageNet-32 with $\text{sigmoid}$ noise schedule after 0.3M steps. We provide both FID scores on 10K and 50K samples and with respect to both train and test set.
}
\label{tab:fid_comparison_im}
\small
\setlength{\tabcolsep}{10pt}
\begin{tabular}{lcccc}
\toprule
\textbf{Model} & \textbf{FID 10K train} & \textbf{FID 10K test} & \textbf{FID 50K train} & \textbf{FID 50K test} \\
\midrule
\textbf{Ours} ($\eta_0 = -8.7$)        & $16.90 \pm 0.2$ & $17.91 \pm 0.2$ & $14.72 \pm 0.2$ & $14.76 \pm 0.2$ \\
\textbf{Ours} ($\eta_0 = -13.3$)    & $15.76 \pm 0.2$ & $16.15 \pm 0.3$ & $13.21 \pm 0.2$ & $13.80 \pm 0.2$ \\
\bottomrule
\end{tabular}
\end{table}

Just as selecting appropriate training and optimization strategies is necessary to achieve strong performance in a given application, so too is the choice of evaluation metric pivotal for drawing valid conclusions. We must stress that the primary focus of this paper is on maximizing the likelihood learning metric, specifically, negative log‑likelihood measured in bits‑per‑dimension (BPD; lower is better), rather than on optimizing Fréchet Inception Distance (FID) or Inception Score (IS), for the following reasons.

Model samples undoubtedly serve as a valuable diagnostic tool, often enabling us to form an intuition about why a model may underperform and how it might be improved. From this standpoint, a generative model ought to produce samples that are indistinguishable from those in the training set, whilst encompassing its full variability. To quantify these properties, a variety of metrics, such as the IS and the FID, have been proposed. However, both qualitative and quantitative assessments based on model samples can be misleading with respect to a model’s density-estimation capabilities, as well as its effectiveness in probabilistic modelling tasks beyond image synthesis \cite{Theis2015ANO}. Consequently, average log‑likelihood remains the de facto standard for quantifying generative image‑modeling performance. For many sophisticated models, the average log‑likelihood is challenging to compute or even approximate. Indeed, it is possible for a model with sub‑optimal log‑likelihood to generate visually impressive samples, or conversely, for a model with excellent log‑likelihood to produce poor samples, an observation that underlines the lack of a direct relationship between FID and negative log‑likelihood (NLL).

From an information‑theoretic standpoint, it is well known that maximizing the log‑likelihood of a probabilistic model is equivalent to minimizing the KL divergence from the data distribution to the model distribution. By contrast, FID operates by fitting multivariate Gaussian distributions to the embeddings of real and generated images and then measuring their discrepancy via the Fréchet distance (equivalently, the 2‑Wasserstein or Earth Mover’s distance). Clearly, the mathematical formulations of these two metrics diverge fundamentally: one corresponds to a mismatched estimation problem under a KL‑based criterion, while the other embodies an optimal‑transport task.

Moreover, FID conflates both fidelity to the real data distribution and the diversity of generated samples into a single score, and its absolute value is highly sensitive to myriad factors, ranging from the number of samples and the particular checkpoint of the feature extractor network to low‑level image‑processing choices. Consequently, the visual appeal of generated images, as quantified by FID, correlates only imperfectly with a model’s log‑likelihood performance. In our work, we concentrate on advancing the state of the art in likelihood estimation; although we report FID scores for completeness, we leave the optimization of sample quality to future research.

Although our model was not explicitly optimized for perceptual sample quality, we report FID scores, a standard metric for visual realism, for both our model and VDM on CIFAR-10 and ImageNet-32 (Tables~\ref{tab:vlb_fid}, \ref{tab:fid_comparison}, and~\ref{tab:fid_comparison_im}). From these results, we observe that both models achieve comparable FID scores across datasets. Importantly, the reported values vary substantially depending on the number of generated samples and whether FID is computed against the training or test set. As expected, using more samples improves FID, and evaluation against the training set consistently yields better scores, likely due to closer distributional alignment.

Furthermore, among models with similar likelihood performance, such as i‑DODE \cite{zheng2023improved}, MuLAN \cite{sahoo2024diffusion} and DiffEnc \cite{nielsen2024diffenc}, our method not only achieves the best negative log‑likelihood but also retains the lowest FID despite requiring substantially fewer training iterations. Conversely, methods like W‑PCDM \cite{li2024likelihood} that explicitly optimize for FID exhibit a marked reduction in likelihood performance.

Consistent with prior observations~\cite{song2021maximum,ho2020denoising,nichol2021improved}, we find that models achieving better log-likelihood often exhibit slightly worse FID scores. Nevertheless, we emphasize that this degradation in FID is minor, and qualitatively, the generated samples from both models are visually indistinguishable (see Figs.~\ref{fig:VDMEMA}, \ref{fig:ISITEMA}, \ref{fig:VDM_IM} and \ref{fig:ISIT_IM}).

\section{Experimental Settings}\label{Appendix:experimental}

\paragraph{Datasets}

We perform all experiments on CIFAR-10 and ImageNet datasets. CIFAR-10 contains 50,000 training and 10,000 test images. The ImageNet variant includes 1,281,149 training and 49,999 test images. Among the two known versions of ImageNet32, we adopt the newer, anti-aliased version~\cite{lipmanflow}, which facilitates likelihood training and remains publicly available. The older version used in~\cite{song2021maximum,kingma2021variational} is no longer accessible. Furthermore, it is notable that ImageNet contains some personal sensitive information and may cause privacy concern \cite{song2021maximum}.

\paragraph{Model Architectures} Our model architecture closely follows the design of Variational Diffusion Models (VDMs)~\cite{kingma2021variational}. Specifically, we adopt the original U-Net backbone from VDM for pixel-space diffusion without modification. Our diffusion model is parameterized in terms of the $\eta$-timed normalized noise predictor. This architecture is optimized for likelihood-based training and includes key design choices such as the removal of internal downsampling and upsampling, and the use of Fourier feature embeddings to improve fine-scale detail prediction. Consistent with VDM’s dataset-dependent configurations, we use a U-Net of depth 32 with 128 channels for CIFAR-10, and 256 channels for ImageNet-32. Our model for ImageNet-64 and -128 uses double the depth at 64 ResNet layers in both the forward and backward direction in the U-Net. It also uses a constant number of channels of 256. All models apply a dropout rate of 0.1 in intermediate layers.

\paragraph{Hardware} For the ImageNet-64 and -128 experiments, we used a single GPU node with 8 A800s or 8 H20-NVLink. For the CIFAR-10 and ImageNet-32 experiments, the models were trained and evaluated on 4 GPUs spanning several GPUs types like V100, L20s, A40s, and 3090s with float32 precision.

\paragraph{Training} We follow the same default training settings as \cite{kingma2021variational}. For all our experiments, we use the Adam optimizer with learning rate $2 \times 10^{-4}$, exponential decay rates of $\beta_1 = 0.9$, $\beta_2 = 0.99$ and decoupled weight decay coefficient of 0.01. We also maintain an exponential moving average (EMA) of model parameters with an EMA rate of 0.9999 for evaluation.

For CIFAR-10, the training processes are conducted on a cluster of 4 GPU cards of NVIDIA V100s. We pretrain the model for 0.3 million iterations using a batch size of 128, which takes around 38 hours. Then we finetune the model for 1K iterations using a batch size of 256 and accumulate the gradient for every 4 batches. Note that in related works \cite{lipmanflow}, experiments on ImageNet-32 (new version) are conducted at a larger batch size (512 or 1024), which may improve the results. For ImageNet-32, the training processes are conducted on 4 GPU cards of NVIDIA A800 (80GB). We pretrain the model for 0.3 million iterations using a batch size of 512, which takes around 3 days. Then we finetune the model for 1K iterations using a batch size of 1024 and accumulate the gradient for every 4 batches.

\paragraph{FID} 
We report Fréchet Inception Distance (FID) scores computed on 50,000 generated samples, unless otherwise noted. This follows the standard setup used in \cite{kingma2021variational}, with ancestral sampling over 1,000 sampling timesteps. FID is evaluated against both the training and test sets for CIFAR-10 and ImageNet-32. While increasing the warm-up noise level may slightly increase FID scores, we find that the visual quality of generated samples remains comparable (see Figs.~\ref{fig:VDMEMA}, \ref{fig:ISITEMA}, \ref{fig:VDM_IM} and \ref{fig:ISIT_IM}).

\section{Consistency Across Predictors and Corresponding Objectives}

As discussed in \cite{kingma2021variational}, the diffusion model can be interpreted from three distinct perspectives: as a denoising process, a noise prediction model, and a score-based model. Similarly, our model admits four equivalent parameterizations, with the velocity-based $\mathbf{v}_t$-prediction \cite{salimans2022progressive,zheng2023improved} included in addition to the three canonical forms, which is summarized in Table~\ref{tab:predictor_equivalence}.

\paragraph{Remark} Let \( \mathbf{v}_t = \alpha_t \mathbf{n} - \sigma_t \mathbf{x} \) and define the instantaneous velocity as \( \tilde{\mathbf{v}} = \dot{\alpha}_t \mathbf{x} + \dot{\sigma}_t \mathbf{n} \). We consider four types of predictors, each parameterized by $\boldsymbol{\theta}$, along with their corresponding matching objectives. Each loss is weighted by a positive time-dependent function $w(t)$:

\begin{itemize}
    \item \textbf{Score predictor} \( \hat{\boldsymbol{s}}(\mathbf{y}_t; \boldsymbol{\theta}) \) with likelihood-weighted score matching loss \cite{song2021maximum,song2020score}:
    \[
    \mathcal{J}_{\text{SM}}(w(t);\boldsymbol{\theta}) := \mathbb{E}_t\left[w(t)\,\mathbb{E}_{\mathbf{x},\mathbf{n}}\left[\|\nabla\log p(\mathbf{y}_t)-\hat{\boldsymbol{s}}(\mathbf{y}_t;\boldsymbol{\theta})\|_2^2\right]\right].
    \]
    
    \item \textbf{Noise predictor} \( \hat{\mathbf{n}}(\mathbf{y}_t; \boldsymbol{\theta}) \) with standard noise-matching loss \cite{kingma2021variational,ho2020denoising,nichol2021improved}:
    \[
    \mathcal{J}_{\text{NL}}(w(t);\boldsymbol{\theta}) := \mathbb{E}_t\left[w(t)\,\mathbb{E}_{\mathbf{x},\mathbf{n}}\left[\|\mathbf{n}-\hat{\mathbf{n}}(\mathbf{y}_t; \boldsymbol{\theta})\|_2^2\right]\right].
    \]
    
    \item \textbf{Data predictor} \( \hat{\mathbf{x}}(\mathbf{y}_t; \boldsymbol{\theta}) \) with reconstruction-based data-matching loss \cite{song2021denoising}:
    \[
    \mathcal{J}_{\text{DL}}(w(t);\boldsymbol{\theta}) := \mathbb{E}_t\left[w(t)\,\mathbb{E}_{\mathbf{x},\mathbf{n}}\left[\|\mathbf{x}-\hat{\mathbf{x}}(\mathbf{y}_t; \boldsymbol{\theta})\|_2^2\right]\right].
    \]

    \item \textbf{Velocity predictor} \( \tilde{\mathbf{v}}(\mathbf{y}_t; \boldsymbol{\theta}) \) with flow-matching loss \cite{lu2022maximum,zheng2023improved}:
    \[
    \mathcal{J}_{\text{FM}}(w(t);\boldsymbol{\theta}) := \mathbb{E}_t\left[w(t)\,\mathbb{E}_{\mathbf{x},\mathbf{n}}\left[\|\tilde{\mathbf{v}}-\tilde{\mathbf{v}}(\mathbf{y}_t;\boldsymbol{\theta})\|_2^2\right]\right].
    \]
\end{itemize}

Under the Gaussian forward process, the optimal solutions to these objectives are analytically related, and yield equivalent predictors when appropriately reparameterized. Specifically, they are equivalent by the following relations:
\begin{align}
    \hat{\mathbf{n}}^{*}(\mathbf{y}_t;\boldsymbol{\theta}) &= - \sigma_t\boldsymbol{s}^{*}(\mathbf{y}_t;\boldsymbol{\theta}) = - \sigma_t\nabla_{\mathbf{y}_t}\log p(\mathbf{y}_t) \\
    \hat{\mathbf{x}}^{*}(\mathbf{y}_t;\boldsymbol{\theta}) &= \frac{\mathbf{y}_t - \sigma_t \hat{\mathbf{n}}^{*}(\mathbf{y}_t;\boldsymbol{\theta})}{\alpha_t} \\
    \hat{\mathbf{v}}_t^*(\mathbf{y}_t,\boldsymbol{\theta}) &= \frac{\alpha_t^2 + \sigma_t^2}{\alpha_t} \hat{\mathbf{n}}^{*}(\mathbf{y}_t;\boldsymbol{\theta}) - \frac{\sigma_t}{\alpha_t}\mathbf{y}_t\\
    \hat{\tilde{\mathbf{v}}}^{*}(\mathbf{y}_t;\boldsymbol{\theta}) &= \frac{\dot{\alpha}_t}{\alpha_t}\mathbf{y}_t + \left(\dot{\sigma}_t - \frac{\dot{\alpha}_t\sigma_t}{\alpha_t}\right) \hat{\mathbf{n}}^{*}(\mathbf{y}_t;\boldsymbol{\theta}).
\end{align}

The predictor $\hat{\mathbf{v}}_t^*(\mathbf{y}_t,\boldsymbol{\theta})$ defined as a linear combination of noise and data components, represents a static velocity target in the latent space. In contrast, the instantaneous flow $\hat{\tilde{\mathbf{v}}}^{*}(\mathbf{y}_t;\boldsymbol{\theta}) = d\mathbf{y}_t/dt$ arises from differentiating the forward process with respect to time. The two are related via the temporal dynamics of $\alpha_t$ and $\sigma_t$,  and coincide when the process is linear and velocity is time-invariant.

\begin{table}[t]
\centering
\caption{Analytical relationships between optimal predictors under the Gaussian forward process.}
\label{tab:predictor_equivalence}
\renewcommand{\arraystretch}{1.4}
\setlength{\tabcolsep}{8pt}
\begin{tabular}{llll}
\toprule
\textbf{Predictor Type} & \textbf{Symbol} & \textbf{Optimal Expression} & \textbf{Expressed via} \\
\midrule
Score & \( \boldsymbol{s}^{*}(\mathbf{y}_t) \) & \( \nabla_{\mathbf{y}_t} \log p(\mathbf{y}_t) \) & Score matching \\
Noise & \( \hat{\mathbf{n}}^{*}(\mathbf{y}_t) \) & \( -\sigma_t\, \boldsymbol{s}^{*}(\mathbf{y}_t) \) & Noise prediction \\
Data  & \( \hat{\mathbf{x}}^{*}(\mathbf{y}_t) \) & \( \dfrac{\mathbf{y}_t - \sigma_t \hat{\mathbf{n}}^{*}(\mathbf{y}_t)}{\alpha_t} \) & Denoising reconstruction \\
Velocity & \( \hat{\tilde{\mathbf{v}}}^{*}(\mathbf{y}_t) \) & \( f(t)\mathbf{y}_t - \frac{1}{2}g^2(t) \boldsymbol{s}^{*}(\mathbf{y}_t,t) \) & Flow parameterisation \\
Velocity (alt.) & \( \hat{\tilde{\mathbf{v}}}^{*}(\mathbf{y}_t) \) & 
\( \dfrac{\dot{\alpha}_t}{\alpha_t}\, \mathbf{y}_t - \left(\dot{\sigma}_t - \dfrac{\dot{\alpha}_t \sigma_t}{\alpha_t} \right) \sigma_t\, \boldsymbol{s}^{*}(\mathbf{y}_t) \) & Score-based ODE \\
\bottomrule
\end{tabular}
\end{table}

\section{Numerical Stability}

Finite-precision arithmetic is fragile for terms of the form \(1-\varepsilon\). In our discrete-time objective, several intermediates are extremely close to one (e.g., cumulative coefficients and survival factors). With a naïve float32 implementation, these values can round to exactly \(1\), corrupting the computation and yielding incorrect losses/gradients. Prior discrete-time diffusion implementations \cite{ho2020denoising} used float64 to sidestep such issues. In contrast, our formulation is numerically stable enough that float64 is unnecessary; standard float32 suffices.

A numerically problematic term, for example, is the sampling variance \(\sigma^2_{t\mid s}\). It is straightforward \cite{kingma2021variational} to verify that
\begin{equation}
\sigma^2_{t\mid s}
= -\,\mathrm{expm1}\!\big(\mathrm{softplus}(\gamma(s)) - \mathrm{softplus}(\gamma(t))\big),
\label{eq:sigma_ts}
\end{equation}
where \(\mathrm{expm1}(x)\equiv e^{x}-1\) and \(\mathrm{softplus}(x)\equiv \log(1+e^{x})\) are numerically stable primitives in common numerical computing packages. Evaluating \(\sigma^2_{t\mid s}\) via \eqref{eq:sigma_ts} avoids catastrophic cancellation near \(1\) and keeps computations stable in float32.

\section{Randomized Distribution Smoothing and Dequantization}\label{app:dequantization}

Modern generative models often lean on the \emph{manifold hypothesis}~\cite{deconvergence,meng2021improved}: real-world high-dimensional data concentrate near a low-dimensional manifold. When the hypothesis holds exactly, the data distribution is singular with respect to the ambient Lebesgue measure and its density is not well defined. When it holds approximately, only points in a thin neighborhood of the manifold carry non-negligible mass; elsewhere the density is near zero. Consequently, any ambient-space density that tries to fit such data must exhibit sharp transitions (large first-order derivatives, \textit{i.e.,} a large, possibly unbounded, Lipschitz constant), which is notoriously challenging for likelihood-based models.

Furthermore, while natural images are typically stored using 8-bit integers, they are often modeled using densities, \textit{i.e.}, an image is treated as an instance of a continuous random variable. Since the discrete data distribution has differential entropy of negative infinity, this can lead to arbitrary high likelihoods even on test data. To avoid this case, it is becoming best practice to add real-valued noise to the integer pixel values to dequantize the data.

To this end, we propose to address such issues in the density estimation problem via a \textit{warm-start} process. Inspired by the recent success of randomized smoothing techniques in adversarial defense and distribution smoothing \cite{pmlr-v97-cohen19c,meng2021improved}, we propose to apply randomized smoothing to diffusion generative modeling.

\paragraph{Randomized Distribution Smoothing}
Unlike \cite{pmlr-v97-cohen19c} where randomized smoothing is applied to a model, and \cite{meng2021improved} where symmetric random noise is applied to the data distribution, we inject the arbitrary randomized smoothing into both data $p(\mathbf{x})$ and model $q(\hat{\mathbf{x}};\boldsymbol{\theta})$. Specifically, we convolve an arbitrary isotropic noise distribution with the data distribution and model to obtain the new “smoother” distributions. By choosing an appropriate smoothing distribution, we aim to make warm start process easier than the original learning problem: smoothing facilitates learning in the first stage by making the input distribution fully supported without sharp transitions in the density function; generating a sample given a noisy one is easier than generating a sample from scratch.

\paragraph{Dequantization Mismatch}
Another representative issue in dequantization \cite{song2021maximum} method with diffusion models is training-evaluation mismatch. During training, each datapoint is treated as a narrow Gaussian (or logit-normal) \cite{kingma2021variational,nielsen2024diffenc} centred on the original value, while evaluation typically occurs on data perturbed with uniform noise \cite{Theis2015ANO}. This inconsistency introduces a distributional shift between training and test likelihood evaluation. Variational dequantization \cite{pmlr-v97-ho19a}, conditional autoregressive model \cite{meng2021improved} and soft truncation \cite{kim2022soft} alleviate this issue by learning the dequantization noise and finding the optimal $\epsilon$, but incur substantial computational cost and convergence instability.

\paragraph{Remarks}
Our Theorem~\ref{theorem 1} and Proposition~\ref{theorem 2} offer a theoretical perspective on these issues without adding extra network structures. From an information-theoretic standpoint, adding small noise to discrete data corresponds to smoothing data, increasing entropy at a rate controlled by the Fisher information of the data distribution. For highly peaked distributions \cite{kingma2021variational,nielsen2024diffenc}, this smoothing is especially effective, substantially increasing entropy and reducing irregularities, thereby providing a better-conditioned target for model fitting. Proposition~\ref{theorem 2} formally characterizes this entropy increase, while preserving the original KL divergence in the small-noise limit. Table~\ref{tab:nll_train_test} shows the training and evaluation mismatch of ELBO and our methods.

\begin{table}[t]
\centering
\caption{
Comparison of the NLL for ELBO and our bound on CIFAR-10 in training and testing with 0.3 million iterations. 
}
\label{tab:nll_train_test}
\small
\setlength{\tabcolsep}{10pt}
\begin{tabular}{lcccc}
\toprule
\textbf{Model} & \textbf{train} & \textbf{test} \\
\midrule
ELBO        & $2.75 \pm 0.002$ & $2.79 \pm 0.002$ \\
\textbf{Ours} $(\eta_0 = -13.3)$    & $2.79 \pm 0.002$ & $2.80 \pm 0.003$ \\
\textbf{Ours} $(\eta_0 = -8.7)$    & $2.49 \pm 0.002$ & $2.50 \pm 0.003$ \\
\bottomrule
\end{tabular}
\end{table}

Beyond smoothing effects, Theorem~\ref{theorem 1} provides a first-order expansion of the KL divergence with respect to small additive noise. Specifically, when training begins from a nonzero noise level \(\sigma_0^2\), the initial KL objective is reduced by a factor proportional to \(\frac{\sigma_0^2}{2} I(p \| q)\). Since this term is always non-negative and strictly positive when \(p \neq q\), introducing initial noise simplifies the optimization landscape by suppressing fine-scale discrepancies that are otherwise difficult to capture at the outset. Intuitively, the model first learns to match broader statistical structure before refining finer details, stabilizing gradient flow and avoiding early overfitting to discrete artefacts.

Taken together, Theorem~\ref{theorem 1} and Proposition~\ref{theorem 2} clarify why choosing a small but nonzero initial noise level is beneficial. From a signal processing perspective, $\sigma_0^2$ trades off approximation accuracy against numerical tractability. If $\sigma_0^2$ is too large, the training target becomes overly blurred, requiring extra effort to recover fine details. If it is too small, the model must approximate a distribution with sharp discontinuities or high-frequency details from the outset, which can hinder learning and lead to poor convergence.

\paragraph{Truncated Normal Dequantization}

We present a training-free dequantization strategy example, recently adopted by \cite{zheng2023improved}, that naturally fits the diffusion framework and exemplifies our theoretical results. Let $\mathbf{X}_0 \in \{0, ..., 255\}^D$ denote 8-bit discrete data scaled to $[-1,1]$. To define a discrete density, we use a continuous model $q(\cdot;\boldsymbol{\theta})$ evaluated on dequantized inputs:
\begin{equation}
    Q(\mathbf{x}_0;\boldsymbol{\theta}) = \int_{\mathbf{u} \in [-\frac{1}{256},\frac{1}{256}]^D} q(\mathbf{x}_0 + \mathbf{u};\boldsymbol{\theta})\, d\mathbf{u}.
\end{equation}
This matches the diffusion-based formulation if we write $\mathbf{y}_\epsilon = \alpha_\epsilon \mathbf{x}_0 + \sigma_\epsilon \tilde{\boldsymbol{\epsilon}}$ and choose $\tilde{\boldsymbol{\epsilon}} \sim \mathcal{T}\mathcal{N}(0, \mathbf{I}, -\tau, \tau)$ with $\tau = \frac{\alpha_\epsilon}{256\sigma_\epsilon}$. Then $\mathbf{u} = \frac{\sigma_\epsilon}{\alpha_\epsilon} \tilde{\boldsymbol{\epsilon}} \in [-\frac{1}{256}, \frac{1}{256}]^D$ by construction.

Applying a change of variables and accounting for Jacobian terms, the variational lower bound becomes:
\begin{equation}
    \log P_0 (\mathbf{x}_0) \geq \mathbb{E}_{\tilde{\boldsymbol{\epsilon}} \sim \mathcal{T}\mathcal{N}(0, \mathbf{I}, -\tau, \tau)} \left[ \log q\Big(\mathbf{x}+\frac{\sigma_\epsilon}{\alpha_\epsilon}\tilde{\boldsymbol{\epsilon}}\Big) - \log p(\tilde{\boldsymbol{\epsilon}}) \right] + D \log \sigma_\epsilon.
\end{equation}
Using the known entropy of the truncated Gaussian~\cite{zheng2023improved}, the bound further simplifies to:
\begin{align}
    \log P_0 (\mathbf{x}_0) \geq \mathbb{E}_{\hat{\boldsymbol{\epsilon}} \sim \mathcal{T}\mathcal{N}(0, \mathbf{I}, -3, 3)} \left[ \log q(\hat{\mathbf{x}}_\epsilon) \right] + \frac{D}{2} \log (2\pi e \sigma_\epsilon^2) - 0.01522 \times D.
\end{align}

\paragraph{Choosing the Warm-up Noise}\label{app:warm-up}

We further investigate the influence of different \textit{warm-up} noise distributions, Gaussian, Laplace, logistic and Uniform, each scaled to have equal variance. As shown in Table~\ref{tab:nll_fid_compact}, Gaussian noise yields the best performance, closely followed by Laplace and logistic, whereas Uniform significantly underperforms. This result aligns with our theoretical intuition that exponential-family noise distributions, characterized by heavier tails, stabilize training and enhance likelihood estimation. Introducing Laplace noise, previously unexplored in this context, allows us to explicitly examine the impact of heavier-tailed perturbations. 

This observation is consistent with the fact that the Gaussian distribution minimizes Fisher information among all distributions with fixed differential entropy~\cite{1056983}, and simultaneously maximizes differential entropy among all distributions with the same variance. We attribute the performance gap to tail behavior: Uniform noise has compact support and weakly perturbs extreme values, whereas Laplace and Gaussian assign higher probability mass to large deviations. This results in stronger regularization and more stable gradients. In particular, Laplace noise promotes robustness and sparsity, making it effective for high-dimensional or heavy-tailed data.

We consider a baseline that models the data with a mixture of logistic components. Although this parameterization is, in principle, expressive enough to represent a multimodal distribution, in practice the baseline fails to recover all modes. We attribute this gap to optimization/initialization difficulties that arise when the target density exhibits sharp transitions (i.e., a large Lipschitz constant). By contrast, our method is more robust: it captures the distinct modes even with as few as two mixture components. In this sense, we further generalize the framework of \cite{meng2021improved} by smoothing both data and model with a shared isotropic kernel and training via the score matching to stabilize optimization on high-Lipschitz targets.

\clearpage

\begin{figure}
    \centering
    \includegraphics[width=0.65\linewidth]{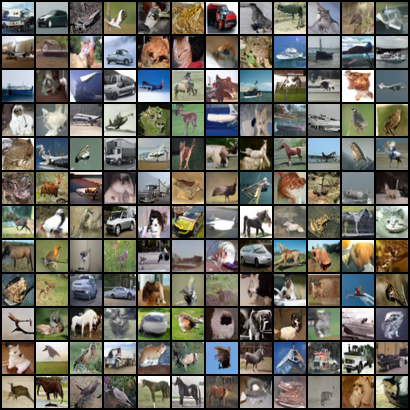}
    \caption{Random samples from our model trained on CIFAR-10 for 300000 parameter updates with EMA. The model was trained in VDM \cite{kingma2021variational} endpoints, and sampled using 1000 sampling timestep.}
    \label{fig:VDMEMA}
\end{figure}

\begin{figure}
    \centering
    \includegraphics[width=0.65\linewidth]{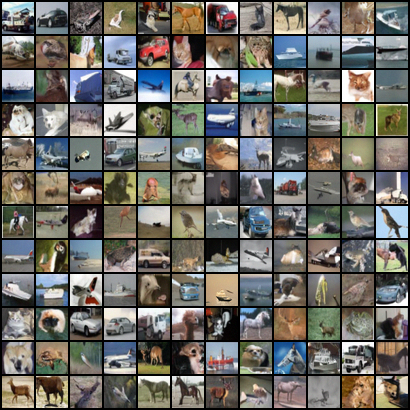}
    \caption{Random samples from our model trained on CIFAR-10 for 300000 parameter updates with EMA. The model was trained with our endpoints, and sampled using 1000 sampling timestep.}
    \label{fig:ISITEMA}
\end{figure}

\begin{figure}
    \centering
    \includegraphics[width=0.65\linewidth]{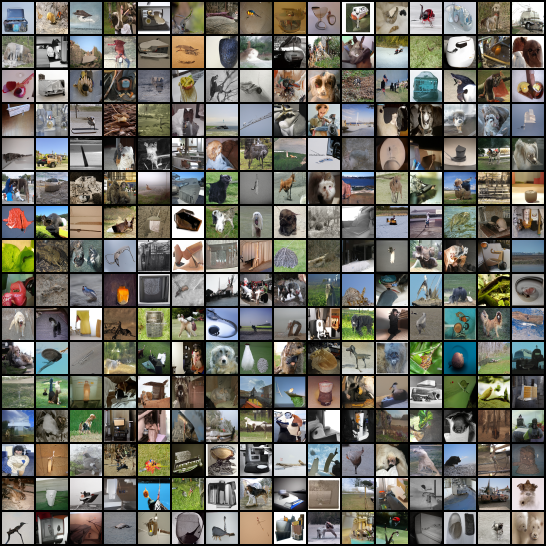}
    \caption{Random samples from our model trained on ImageNet32 for 300000 parameter updates. The model was trained in VDM \cite{kingma2021variational} endpoints, and sampled using 1000 sampling timestep.}
    \label{fig:VDM_IM}
\end{figure}

\begin{figure}
    \centering
    \includegraphics[width=0.65\linewidth]{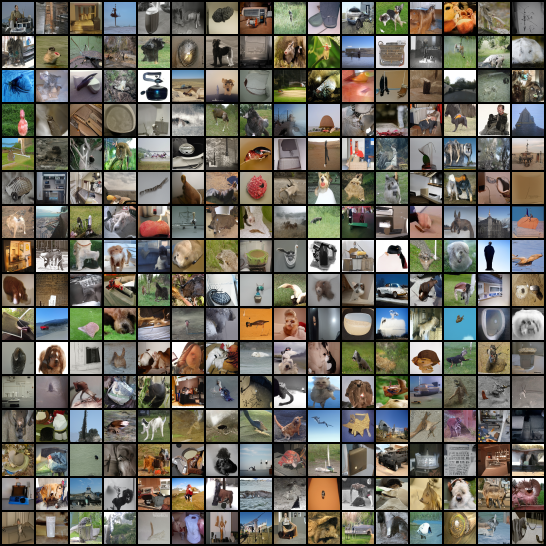}
    \caption{Random samples from our model trained on ImageNet32 for 300000 parameter updates. The model was trained with our endpoints, and sampled using 1000 sampling timestep.}
    \label{fig:ISIT_IM}
\end{figure}

\clearpage

\section*{NeurIPS Paper Checklist}

%%% BEGIN INSTRUCTIONS %%%
The checklist is designed to encourage best practices for responsible machine learning research, addressing issues of reproducibility, transparency, research ethics, and societal impact. Do not remove the checklist: {\bf The papers not including the checklist will be desk rejected.} The checklist should follow the references and follow the (optional) supplemental material.  The checklist does NOT count towards the page
limit. 

Please read the checklist guidelines carefully for information on how to answer these questions. For each question in the checklist:
\begin{itemize}
    \item You should answer \answerYes{}, \answerNo{}, or \answerNA{}.
    \item \answerNA{} means either that the question is Not Applicable for that particular paper or the relevant information is Not Available.
    \item Please provide a short (1–2 sentence) justification right after your answer (even for NA). 
   % \item {\bf The papers not including the checklist will be desk rejected.}
\end{itemize}

{\bf The checklist answers are an integral part of your paper submission.} They are visible to the reviewers, area chairs, senior area chairs, and ethics reviewers. You will be asked to also include it (after eventual revisions) with the final version of your paper, and its final version will be published with the paper.

The reviewers of your paper will be asked to use the checklist as one of the factors in their evaluation. While "\answerYes{}" is generally preferable to "\answerNo{}", it is perfectly acceptable to answer "\answerNo{}" provided a proper justification is given (e.g., "error bars are not reported because it would be too computationally expensive" or "we were unable to find the license for the dataset we used"). In general, answering "\answerNo{}" or "\answerNA{}" is not grounds for rejection. While the questions are phrased in a binary way, we acknowledge that the true answer is often more nuanced, so please just use your best judgment and write a justification to elaborate. All supporting evidence can appear either in the main paper or the supplemental material, provided in appendix. If you answer \answerYes{} to a question, in the justification please point to the section(s) where related material for the question can be found.

IMPORTANT, please:
\begin{itemize}
    \item {\bf Delete this instruction block, but keep the section heading ``NeurIPS Paper Checklist"},
    \item  {\bf Keep the checklist subsection headings, questions/answers and guidelines below.}
    \item {\bf Do not modify the questions and only use the provided macros for your answers}.
\end{itemize}

%%% END INSTRUCTIONS %%%

\begin{enumerate}

\item {\bf Claims}
    \item[] Question: Do the main claims made in the abstract and introduction accurately reflect the paper's contributions and scope?
    \item[] Answer: \answerYes{} % Replace by \answerYes{}, \answerNo{}, or \answerNA{}.
    \item[] Justification: See our introduction for a list of claims including connection with relative entropy with score matching with arbitrary noise.
    \item[] Guidelines:
    \begin{itemize}
        \item The answer NA means that the abstract and introduction do not include the claims made in the paper.
        \item The abstract and/or introduction should clearly state the claims made, including the contributions made in the paper and important assumptions and limitations. A No or NA answer to this question will not be perceived well by the reviewers. 
        \item The claims made should match theoretical and experimental results, and reflect how much the results can be expected to generalize to other settings. 
        \item It is fine to include aspirational goals as motivation as long as it is clear that these goals are not attained by the paper. 
    \end{itemize}

\item {\bf Limitations}
    \item[] Question: Does the paper discuss the limitations of the work performed by the authors?
    \item[] Answer: \answerYes{} % Replace by \answerYes{}, \answerNo{}, or \answerNA{}.
    \item[] Justification: Yes, our method improves likelihood estimation but does not construct a generative diffusion process under alternative noise. See the paper for more details.
    \item[] Guidelines:
    \begin{itemize}
        \item The answer NA means that the paper has no limitation while the answer No means that the paper has limitations, but those are not discussed in the paper. 
        \item The authors are encouraged to create a separate "Limitations" section in their paper.
        \item The paper should point out any strong assumptions and how robust the results are to violations of these assumptions (e.g., independence assumptions, noiseless settings, model well-specification, asymptotic approximations only holding locally). The authors should reflect on how these assumptions might be violated in practice and what the implications would be.
        \item The authors should reflect on the scope of the claims made, e.g., if the approach was only tested on a few datasets or with a few runs. In general, empirical results often depend on implicit assumptions, which should be articulated.
        \item The authors should reflect on the factors that influence the performance of the approach. For example, a facial recognition algorithm may perform poorly when image resolution is low or images are taken in low lighting. Or a speech-to-text system might not be used reliably to provide closed captions for online lectures because it fails to handle technical jargon.
        \item The authors should discuss the computational efficiency of the proposed algorithms and how they scale with dataset size.
        \item If applicable, the authors should discuss possible limitations of their approach to address problems of privacy and fairness.
        \item While the authors might fear that complete honesty about limitations might be used by reviewers as grounds for rejection, a worse outcome might be that reviewers discover limitations that aren't acknowledged in the paper. The authors should use their best judgment and recognize that individual actions in favor of transparency play an important role in developing norms that preserve the integrity of the community. Reviewers will be specifically instructed to not penalize honesty concerning limitations.
    \end{itemize}

\item {\bf Theory assumptions and proofs}
    \item[] Question: For each theoretical result, does the paper provide the full set of assumptions and a complete (and correct) proof?
    \item[] Answer: \answerYes{} % Replace by \answerYes{}, \answerNo{}, or \answerNA{}.
    \item[] Justification: Please see our detailed proofs.
    \item[] Guidelines:
    \begin{itemize}
        \item The answer NA means that the paper does not include theoretical results. 
        \item All the theorems, formulas, and proofs in the paper should be numbered and cross-referenced.
        \item All assumptions should be clearly stated or referenced in the statement of any theorems.
        \item The proofs can either appear in the main paper or the supplemental material, but if they appear in the supplemental material, the authors are encouraged to provide a short proof sketch to provide intuition. 
        \item Inversely, any informal proof provided in the core of the paper should be complemented by formal proofs provided in appendix or supplemental material.
        \item Theorems and Lemmas that the proof relies upon should be properly referenced. 
    \end{itemize}

    \item {\bf Experimental result reproducibility}
    \item[] Question: Does the paper fully disclose all the information needed to reproduce the main experimental results of the paper to the extent that it affects the main claims and/or conclusions of the paper (regardless of whether the code and data are provided or not)?
    \item[] Answer: \answerYes{} % Replace by \answerYes{}, \answerNo{}, or \answerNA{}.
    \item[] Justification: Not only we do show all equations and train on standard datasets, we will open source the code.
    \item[] Guidelines:
    \begin{itemize}
        \item The answer NA means that the paper does not include experiments.
        \item If the paper includes experiments, a No answer to this question will not be perceived well by the reviewers: Making the paper reproducible is important, regardless of whether the code and data are provided or not.
        \item If the contribution is a dataset and/or model, the authors should describe the steps taken to make their results reproducible or verifiable. 
        \item Depending on the contribution, reproducibility can be accomplished in various ways. For example, if the contribution is a novel architecture, describing the architecture fully might suffice, or if the contribution is a specific model and empirical evaluation, it may be necessary to either make it possible for others to replicate the model with the same dataset, or provide access to the model. In general. releasing code and data is often one good way to accomplish this, but reproducibility can also be provided via detailed instructions for how to replicate the results, access to a hosted model (e.g., in the case of a large language model), releasing of a model checkpoint, or other means that are appropriate to the research performed.
        \item While NeurIPS does not require releasing code, the conference does require all submissions to provide some reasonable avenue for reproducibility, which may depend on the nature of the contribution. For example
        \begin{enumerate}
            \item If the contribution is primarily a new algorithm, the paper should make it clear how to reproduce that algorithm.
            \item If the contribution is primarily a new model architecture, the paper should describe the architecture clearly and fully.
            \item If the contribution is a new model (e.g., a large language model), then there should either be a way to access this model for reproducing the results or a way to reproduce the model (e.g., with an open-source dataset or instructions for how to construct the dataset).
            \item We recognize that reproducibility may be tricky in some cases, in which case authors are welcome to describe the particular way they provide for reproducibility. In the case of closed-source models, it may be that access to the model is limited in some way (e.g., to registered users), but it should be possible for other researchers to have some path to reproducing or verifying the results.
        \end{enumerate}
    \end{itemize}

\item {\bf Open access to data and code}
    \item[] Question: Does the paper provide open access to the data and code, with sufficient instructions to faithfully reproduce the main experimental results, as described in supplemental material?
    \item[] Answer: \answerNo{}{} % Replace by \answerYes{}, \answerNo{}, or \answerNA{}.
    \item[] Justification: We will open source after paper acceptance.
    \item[] Guidelines:
    \begin{itemize}
        \item The answer NA means that paper does not include experiments requiring code.
        \item Please see the NeurIPS code and data submission guidelines (\url{https://nips.cc/public/guides/CodeSubmissionPolicy}) for more details.
        \item While we encourage the release of code and data, we understand that this might not be possible, so “No” is an acceptable answer. Papers cannot be rejected simply for not including code, unless this is central to the contribution (e.g., for a new open-source benchmark).
        \item The instructions should contain the exact command and environment needed to run to reproduce the results. See the NeurIPS code and data submission guidelines (\url{https://nips.cc/public/guides/CodeSubmissionPolicy}) for more details.
        \item The authors should provide instructions on data access and preparation, including how to access the raw data, preprocessed data, intermediate data, and generated data, etc.
        \item The authors should provide scripts to reproduce all experimental results for the new proposed method and baselines. If only a subset of experiments are reproducible, they should state which ones are omitted from the script and why.
        \item At submission time, to preserve anonymity, the authors should release anonymized versions (if applicable).
        \item Providing as much information as possible in supplemental material (appended to the paper) is recommended, but including URLs to data and code is permitted.
    \end{itemize}

\item {\bf Experimental setting/details}
    \item[] Question: Does the paper specify all the training and test details (e.g., data splits, hyperparameters, how they were chosen, type of optimizer, etc.) necessary to understand the results?
    \item[] Answer: \answerYes{} % Replace by \answerYes{}, \answerNo{}, or \answerNA{}.
    \item[] Justification: Yes, we include all hyperparameters in the paper and will open source code.
    \item[] Guidelines:
    \begin{itemize}
        \item The answer NA means that the paper does not include experiments.
        \item The experimental setting should be presented in the core of the paper to a level of detail that is necessary to appreciate the results and make sense of them.
        \item The full details can be provided either with the code, in appendix, or as supplemental material.
    \end{itemize}

\item {\bf Experiment statistical significance}
    \item[] Question: Does the paper report error bars suitably and correctly defined or other appropriate information about the statistical significance of the experiments?
    \item[] Answer: \answerTODO{} % Replace by \answerYes{}, \answerNo{}, or \answerNA{}.
    \item[] Justification: We will report the deviatations for NLL in Appendix.
    \item[] Guidelines:
    \begin{itemize}
        \item The answer NA means that the paper does not include experiments.
        \item The authors should answer "Yes" if the results are accompanied by error bars, confidence intervals, or statistical significance tests, at least for the experiments that support the main claims of the paper.
        \item The factors of variability that the error bars are capturing should be clearly stated (for example, train/test split, initialization, random drawing of some parameter, or overall run with given experimental conditions).
        \item The method for calculating the error bars should be explained (closed form formula, call to a library function, bootstrap, etc.)
        \item The assumptions made should be given (e.g., Normally distributed errors).
        \item It should be clear whether the error bar is the standard deviation or the standard error of the mean.
        \item It is OK to report 1-sigma error bars, but one should state it. The authors should preferably report a 2-sigma error bar than state that they have a 96\% CI, if the hypothesis of Normality of errors is not verified.
        \item For asymmetric distributions, the authors should be careful not to show in tables or figures symmetric error bars that would yield results that are out of range (e.g. negative error rates).
        \item If error bars are reported in tables or plots, The authors should explain in the text how they were calculated and reference the corresponding figures or tables in the text.
    \end{itemize}

\item {\bf Experiments compute resources}
    \item[] Question: For each experiment, does the paper provide sufficient information on the computer resources (type of compute workers, memory, time of execution) needed to reproduce the experiments?
    \item[] Answer: \answerYes{} % Replace by \answerYes{}, \answerNo{}, or \answerNA{}.
    \item[] Justification: We provide this in the paper.
    \item[] Guidelines:
    \begin{itemize}
        \item The answer NA means that the paper does not include experiments.
        \item The paper should indicate the type of compute workers CPU or GPU, internal cluster, or cloud provider, including relevant memory and storage.
        \item The paper should provide the amount of compute required for each of the individual experimental runs as well as estimate the total compute. 
        \item The paper should disclose whether the full research project required more compute than the experiments reported in the paper (e.g., preliminary or failed experiments that didn't make it into the paper). 
    \end{itemize}
    
\item {\bf Code of ethics}
    \item[] Question: Does the research conducted in the paper conform, in every respect, with the NeurIPS Code of Ethics \url{https://neurips.cc/public/EthicsGuidelines}?
    \item[] Answer: \answerYes{} % Replace by \answerYes{}, \answerNo{}, or \answerNA{}.
    \item[] Justification: Our paper is just a diffusion model useful for density estimation with standard datasets.
    \item[] Guidelines:
    \begin{itemize}
        \item The answer NA means that the authors have not reviewed the NeurIPS Code of Ethics.
        \item If the authors answer No, they should explain the special circumstances that require a deviation from the Code of Ethics.
        \item The authors should make sure to preserve anonymity (e.g., if there is a special consideration due to laws or regulations in their jurisdiction).
    \end{itemize}

\item {\bf Broader impacts}
    \item[] Question: Does the paper discuss both potential positive societal impacts and negative societal impacts of the work performed?
    \item[] Answer: \answerNA{} % Replace by \answerYes{}, \answerNo{}, or \answerNA{}.
    \item[] Justification: \justificationTODO{}
    \item[] Guidelines:
    \begin{itemize}
        \item The answer NA means that there is no societal impact of the work performed.
        \item If the authors answer NA or No, they should explain why their work has no societal impact or why the paper does not address societal impact.
        \item Examples of negative societal impacts include potential malicious or unintended uses (e.g., disinformation, generating fake profiles, surveillance), fairness considerations (e.g., deployment of technologies that could make decisions that unfairly impact specific groups), privacy considerations, and security considerations.
        \item The conference expects that many papers will be foundational research and not tied to particular applications, let alone deployments. However, if there is a direct path to any negative applications, the authors should point it out. For example, it is legitimate to point out that an improvement in the quality of generative models could be used to generate deepfakes for disinformation. On the other hand, it is not needed to point out that a generic algorithm for optimizing neural networks could enable people to train models that generate Deepfakes faster.
        \item The authors should consider possible harms that could arise when the technology is being used as intended and functioning correctly, harms that could arise when the technology is being used as intended but gives incorrect results, and harms following from (intentional or unintentional) misuse of the technology.
        \item If there are negative societal impacts, the authors could also discuss possible mitigation strategies (e.g., gated release of models, providing defenses in addition to attacks, mechanisms for monitoring misuse, mechanisms to monitor how a system learns from feedback over time, improving the efficiency and accessibility of ML).
    \end{itemize}
    
\item {\bf Safeguards}
    \item[] Question: Does the paper describe safeguards that have been put in place for responsible release of data or models that have a high risk for misuse (e.g., pretrained language models, image generators, or scraped datasets)?
    \item[] Answer: \answerNA{} % Replace by \answerYes{}, \answerNo{}, or \answerNA{}.
    \item[] Justification: We do not believe our method will have a high risk of abuse as our models are not perceptually SOTA, they only provide for density estimation.
    \item[] Guidelines:
    \begin{itemize}
        \item The answer NA means that the paper poses no such risks.
        \item Released models that have a high risk for misuse or dual-use should be released with necessary safeguards to allow for controlled use of the model, for example by requiring that users adhere to usage guidelines or restrictions to access the model or implementing safety filters. 
        \item Datasets that have been scraped from the Internet could pose safety risks. The authors should describe how they avoided releasing unsafe images.
        \item We recognize that providing effective safeguards is challenging, and many papers do not require this, but we encourage authors to take this into account and make a best faith effort.
    \end{itemize}

\item {\bf Licenses for existing assets}
    \item[] Question: Are the creators or original owners of assets (e.g., code, data, models), used in the paper, properly credited and are the license and terms of use explicitly mentioned and properly respected?
    \item[] Answer: \answerNA{} % Replace by \answerYes{}, \answerNo{}, or \answerNA{}.
    \item[] Justification: We are using standard benchmark datasets.
    \item[] Guidelines:
    \begin{itemize}
        \item The answer NA means that the paper does not use existing assets.
        \item The authors should cite the original paper that produced the code package or dataset.
        \item The authors should state which version of the asset is used and, if possible, include a URL.
        \item The name of the license (e.g., CC-BY 4.0) should be included for each asset.
        \item For scraped data from a particular source (e.g., website), the copyright and terms of service of that source should be provided.
        \item If assets are released, the license, copyright information, and terms of use in the package should be provided. For popular datasets, \url{paperswithcode.com/datasets} has curated licenses for some datasets. Their licensing guide can help determine the license of a dataset.
        \item For existing datasets that are re-packaged, both the original license and the license of the derived asset (if it has changed) should be provided.
        \item If this information is not available online, the authors are encouraged to reach out to the asset's creators.
    \end{itemize}

\item {\bf New assets}
    \item[] Question: Are new assets introduced in the paper well documented and is the documentation provided alongside the assets?
    \item[] Answer: \answerYes{} % Replace by \answerYes{}, \answerNo{}, or \answerNA{}.
    \item[] Justification: We provide the new official source of ImageNet32.
    \item[] Guidelines:
    \begin{itemize}
        \item The answer NA means that the paper does not release new assets.
        \item Researchers should communicate the details of the dataset/code/model as part of their submissions via structured templates. This includes details about training, license, limitations, etc. 
        \item The paper should discuss whether and how consent was obtained from people whose asset is used.
        \item At submission time, remember to anonymize your assets (if applicable). You can either create an anonymized URL or include an anonymized zip file.
    \end{itemize}

\item {\bf Crowdsourcing and research with human subjects}
    \item[] Question: For crowdsourcing experiments and research with human subjects, does the paper include the full text of instructions given to participants and screenshots, if applicable, as well as details about compensation (if any)? 
    \item[] Answer: \answerNA{} % Replace by \answerYes{}, \answerNo{}, or \answerNA{}.
    \item[] Justification: 
    \item[] Guidelines:
    \begin{itemize}
        \item The answer NA means that the paper does not involve crowdsourcing nor research with human subjects.
        \item Including this information in the supplemental material is fine, but if the main contribution of the paper involves human subjects, then as much detail as possible should be included in the main paper. 
        \item According to the NeurIPS Code of Ethics, workers involved in data collection, curation, or other labor should be paid at least the minimum wage in the country of the data collector. 
    \end{itemize}

\item {\bf Institutional review board (IRB) approvals or equivalent for research with human subjects}
    \item[] Question: Does the paper describe potential risks incurred by study participants, whether such risks were disclosed to the subjects, and whether Institutional Review Board (IRB) approvals (or an equivalent approval/review based on the requirements of your country or institution) were obtained?
    \item[] Answer: \answerNA{} % Replace by \answerYes{}, \answerNo{}, or \answerNA{}.
    \item[] Justification: 
    \item[] Guidelines:
    \begin{itemize}
        \item The answer NA means that the paper does not involve crowdsourcing nor research with human subjects.
        \item Depending on the country in which research is conducted, IRB approval (or equivalent) may be required for any human subjects research. If you obtained IRB approval, you should clearly state this in the paper. 
        \item We recognize that the procedures for this may vary significantly between institutions and locations, and we expect authors to adhere to the NeurIPS Code of Ethics and the guidelines for their institution. 
        \item For initial submissions, do not include any information that would break anonymity (if applicable), such as the institution conducting the review.
    \end{itemize}

\item {\bf Declaration of LLM usage}
    \item[] Question: Does the paper describe the usage of LLMs if it is an important, original, or non-standard component of the core methods in this research? Note that if the LLM is used only for writing, editing, or formatting purposes and does not impact the core methodology, scientific rigorousness, or originality of the research, declaration is not required.
    %this research? 
    \item[] Answer: \answerNA{} % Replace by \answerYes{}, \answerNo{}, or \answerNA{}.
    \item[] Justification: 
    \item[] Guidelines:
    \begin{itemize}
        \item The answer NA means that the core method development in this research does not involve LLMs as any important, original, or non-standard components.
        \item Please refer to our LLM policy (\url{https://neurips.cc/Conferences/2025/LLM}) for what should or should not be described.
    \end{itemize}

\end{enumerate}

\end{document}